\documentclass{article}
\pdfoutput=1
\usepackage[utf8]{inputenc}
\usepackage{todonotes}
\usepackage{hyperref}
\usepackage{bbm}
\usepackage{amsfonts}
\usepackage{amsmath}
\usepackage{amssymb}
\usepackage{mathtools}
\usepackage{amsthm}
\usepackage{cleveref}
\usepackage{xfrac}
\usepackage{float}
\usepackage{soul,color}
\usepackage{algorithm}
\usepackage{algpseudocode}
\usepackage[title]{appendix}
\usepackage[letterpaper]{geometry}
\usepackage{xspace}
\usepackage{multirow}
\usepackage{comment}
\usepackage{stackengine}
\usepackage{xcolor}
\usepackage{changepage}
\usepackage{diagbox}
\usepackage{authblk}
\usepackage{titling}

\usetikzlibrary {patterns,patterns.meta} 
\usetikzlibrary{fit}
\usetikzlibrary{arrows, arrows.meta}
\usetikzlibrary{shapes}
\usetikzlibrary{decorations.pathreplacing}
\usetikzlibrary{calc}

\title{Counting Permutation Patterns with Multidimensional Trees}

\newif\ifblind

\ifblind

\author{
\vphantom{A}\vspace{0.1cm}
\\\small \vphantom{B}
}

\else

\setlength{\droptitle}{-1cm} 

\author{
Gal Beniamini \thanks{\href{mailto:gal.beniamini@mail.huji.ac.il}{gal.beniamini@mail.huji.ac.il}} \quad Nir Lavee \thanks{\href{mailto:nir.lavee@mail.huji.ac.il}{nir.lavee@mail.huji.ac.il}}
\\ \small The Hebrew University of Jerusalem
}

\fi

\date{}

\hypersetup{colorlinks=true}	

\newtheorem{theorem}{Theorem}[section]
\newtheorem*{theorem*}{Theorem}

\newtheorem{lemma}[theorem]{Lemma}
\newtheorem{proposition}[theorem]{Proposition}

\newtheorem{definition}[theorem]{Definition}

\newtheorem*{openq*}{Question}
\newtheorem{exmp}{Example}[section]

\newcommand{\Sn}{\mathbb{S}_n}


\newcommand{\QQ}{\mathbb{Q}}
\newcommand{\ZZ}{\mathbb{Z}}

\newcommand{\pc}[2]{{\# \mathtt{ #1 } \left( #2 \right)}}
\newcommand{\pcn}[2]{{\# #1 } \left( #2 \right)}
\newcommand{\pce}[1]{{\# \mathtt{#1} }}

\newcommand{\pcw}[2]{{\#_w \mathtt{ #1 } \left( #2 \right)}}
\newcommand{\pcwe}[1]{{\#_w \mathtt{ #1 } }}

\newcommand{\pcnw}[2]{{\#_w { #1 } \left( #2 \right)}}

\newcommand{\eqdef}{\vcentcolon=}

\DeclareMathOperator{\polylog}{polylog}

\DeclareMathOperator{\poly}{poly}

\theoremstyle{remark}
\newtheorem{rem}[theorem]{\protect\remarkname}
\providecommand{\remarkname}{Remark}

\newenvironment{proofsketch}{%
  \proof}{\endproof}

\newcommand*\circled[1]{\tikz[baseline=(char.base)]{
            \node[shape=circle,draw,inner sep=0.7pt] (char) {#1};}}



\def\acts{\curvearrowright}

\pgfdeclarepattern{
  name=hatch,
  parameters={\hatchsize,\hatchangle,\hatchlinewidth},
  bottom left={\pgfpoint{-.1pt}{-.1pt}},
  top right={\pgfpoint{\hatchsize+.1pt}{\hatchsize+.1pt}},
  tile size={\pgfpoint{\hatchsize}{\hatchsize}},
  tile transformation={\pgftransformrotate{\hatchangle}},
  code={
    \pgfsetlinewidth{\hatchlinewidth}
    \pgfpathmoveto{\pgfpoint{-.1pt}{-.1pt}}
    \pgfpathlineto{\pgfpoint{\hatchsize+.1pt}{\hatchsize+.1pt}}
    \pgfpathmoveto{\pgfpoint{-.1pt}{\hatchsize+.1pt}}
    \pgfpathlineto{\pgfpoint{\hatchsize+.1pt}{-.1pt}}
    \pgfusepath{stroke}
  }
}
\tikzset{
  hatch size/.store in=\hatchsize,
  hatch angle/.store in=\hatchangle,
  hatch line width/.store in=\hatchlinewidth,
  hatch size=5pt,
  hatch angle=0pt,
  hatch line width=.5pt,
}

\pgfdeclarepatternformonly{south west lines}{\pgfqpoint{-0pt}{-0pt}}{\pgfqpoint{6pt}{6pt}}{\pgfqpoint{6pt}{6pt}}{
    \pgfsetlinewidth{0.4pt}
    \pgfpathmoveto{\pgfqpoint{0pt}{0pt}}
    \pgfpathlineto{\pgfqpoint{6pt}{6pt}}
    \pgfpathmoveto{\pgfqpoint{5.6pt}{-.4pt}}
    \pgfpathlineto{\pgfqpoint{6.4pt}{.4pt}}
    \pgfpathmoveto{\pgfqpoint{-.4pt}{5.6pt}}
    \pgfpathlineto{\pgfqpoint{.4pt}{6.4pt}}
\pgfusepath{stroke}}

\pgfdeclarepatternformonly{south east lines}{\pgfqpoint{-0pt}{-0pt}}{\pgfqpoint{6pt}{6pt}}{\pgfqpoint{6pt}{6pt}}{
    \pgfsetlinewidth{0.4pt}
    \pgfpathmoveto{\pgfqpoint{0pt}{6pt}}
    \pgfpathlineto{\pgfqpoint{6pt}{0pt}}
    \pgfpathmoveto{\pgfqpoint{.4pt}{-.4pt}}
    \pgfpathlineto{\pgfqpoint{-.4pt}{.4pt}}
    \pgfpathmoveto{\pgfqpoint{6.4pt}{5.6pt}}
    \pgfpathlineto{\pgfqpoint{5.6pt}{6.4pt}}
\pgfusepath{stroke}}

\newcommand{\Otilde}[1]{\widetilde{\mathcal{O}}\left( #1 \right)}
\newcommand{\Oh}[1]{\mathcal{O}\left( #1 \right)}





\theoremstyle{plain}
\newtheorem{thm}{Theorem}
\newtheorem{introthm}{Theorem}

\begin{document}

\maketitle

\vspace{-1cm}

\begin{abstract}

We consider the well-studied \textit{pattern counting problem}: given a permutation $\pi \in \Sn$
and an integer $k > 1$, count the number of order-isomorphic occurrences
of every pattern $\tau \in \mathbb{S}_k$ in~$\pi$.

Our first result is an $\Otilde{n^2}$-time algorithm for $k=6$ and $k=7$.
The proof relies heavily on a new family of graphs that we introduce, called \textit{pattern-trees}.
Every such tree corresponds to an integer linear combination
of permutations in $\mathbb{S}_k$,
and is associated with linear extensions of partially ordered sets.
We design an evaluation algorithm for these combinations, and apply
it to a family of linearly-independent trees.
For $k=8$, we show a barrier:
the subspace spanned by trees in the previous family
has dimension exactly $|\mathbb{S}_8| - 1$, one less than required.

Our second result is an $\widetilde{\mathcal{O}}(n^{7/4})$-time algorithm for $k=5$. 
This algorithm extends the framework of pattern-trees by speeding-up their evaluation
in certain cases. A key component of the proof is the introduction 
of pair-rectangle-trees, a data structure for dominance counting.
\end{abstract}

\section{Introduction}
\label{sect:introduction}

A permutation $\tau \in \mathbb{S}_k$ occurs in a permutation $\pi \in \mathbb{S}_n$
if there exist $k$ points in $\pi$ that are order-isomorphic to $\tau$.
By way of example, in $\mathtt{\overline{1}3\overline{42}} \in \mathbb{S}_4$,\footnote{
Throughout this paper, permutations are written in one-line notation. If they are short, we omit the parenthesis.
}
the overlined points form an occurrence of $\mathtt{132} \in \mathbb{S}_3$.
The number of occurrences $\pc{\tau}{\pi}$ of a permutation 
$\tau \in \mathbb{S}_k$ (a \textit{pattern}) within a larger permutation $\pi \in \Sn$
has been the basis of many interesting questions, both combinatorial and algorithmic.

In a classical result, MacMahon \cite{macmahon1915combinatory} proved that the number of permutations $\pi \in \Sn$
that \textit{avoid} the pattern $\mathtt{123}$ (i.e., $\pc{\mathtt{123}}{\pi} = 0$) is counted by the
Catalan numbers. Another classical result is the well-known Erd\H{o}s-Szekeres theorem \cite{erdos1935combinatorial},
which states that any permutation of size $(s-1)(l-1) + 1$
cannot simultaneously avoid both $(\mathtt{1, \dots , s})$ \textit{and} $(\mathtt{l, \dots , 1})$.
These early results gave rise to an entire field of study regarding pattern avoidance,
c.f. \cite{pratt1973computing, knuth1997art, simion1985restricted}.
One particularly noteworthy result is Marcus and Tardos'
resolution of the Stanley-Wilf conjecture \cite{marcus2004excluded}:
for any fixed pattern $\tau \in \mathbb{S}_k$, the growth rate of the number of permutations
$\pi \in \mathbb{S}_n$ avoiding $\tau$ is $c(\tau)^n$, where $c(\tau)$ is a constant depending only on $\tau$.

Pattern avoidance can also be cast as an algorithmic problem.
The \textit{permutation pattern matching} problem is the task
of determining, given a pattern $\tau \in \mathbb{S}_k$ and a permutation $\pi \in \Sn$,
whether $\pi$ avoids $\tau$.
What is the computational complexity of this task? 
Trivial enumeration over all $k$-tuples of points yields an $\mathcal{O}(k \cdot n^k)$-time algorithm.
This bound has been improved upon by a long line of works:
Albert et al. \cite{albert2001algorithms} lowered the bound to $\mathcal{O}(n^{2k/3 + 1})$,
Ahal and Rabinovich \cite{ahal2008complexity} to $\mathcal{O}(n^{(0.47 + o(1))k})$,
and finally Guillemot and Marx \cite{guillemot2014finding} established the fixed-parameter tractability
of the problem, i.e., whenever $k$ is \textit{fixed},
the problem can be solved in time linear in $n$
(see also \cite{fox2013stanley} for an improvement on this result).
In stark contrast, when the pattern $\tau$ is \textit{not fixed} (i.e., when $k=k(n)\to \infty$),
permutation pattern matching is known to be NP-complete, as shown
by Bose, Buss and Lubiw \cite{bose1998pattern}.

A closely related algorithmic question is the \textit{counting version} of permutation pattern matching.
The \textit{permutation pattern counting} problem is the task of counting,
given a pattern $\tau \in \mathbb{S}_k$ and permutation $\pi \in \mathbb{S}_n$,
the number of occurrences $\pc{\tau}{\pi}$. Once again, there is a straightforward $\mathcal{O}(k \cdot n^k)$-time algorithm
-- how far is it from optimal?
Albert et al. lowered the bound to $\mathcal{O}(n^{2k/3 + 1})$ \cite{albert2001algorithms}\footnote{
Their algorithm also works for the counting version.
}
and the current best known bound is $\mathcal{O}(n^{(1/4 + o(1))k})$, due to Berendsohn et al. \cite{berendsohn2021finding}. 
Berendsohn et al. also showed a barrier: assuming the exponential time hypothesis,
there is no algorithm for pattern counting with running time $f(k) \cdot n^{o(k/\log k)}$,
for \textit{any} function $f$.

Another intriguing line of work focuses on the pattern counting problem, for \textit{constant small} $k$.
As the number of patterns $\tau \in \mathbb{S}_k$ is fixed in this regime,
one can equivalently, up to a constant multiplicative factor, compute the entire $k!$-dimensional vector of \textit{all} occurrences,
$(\pc{\tau}{\pi})_{\tau \in \mathbb{S}_k}$.
This vector, which characterises the local structure of a permutation over size-$k$ pointsets,
is known as the \textit{$k$-profile}. 
The $k$-profile has also featured in works 
aiming to understand the local structure of permutations,
c.f. \cite{beniaminilavee2023balanced, even2020patterns, cooper2008symmetric}.

Even-Zohar and Leng \cite{even2021counting} designed a class of algorithms capable of computing the $3$-profile
in $\Otilde{n}$-time,\footnote{
As usual, the notation $\Otilde{\cdot}$ hides poly-logarithmic factors.}
and the $4$-profile in $\Otilde{n^{3/2}}$-time.
Improving on their result for $k=4$, Dudek and Gawrychowski \cite{dudek2020counting} gave a bidirectional
reduction between
the  task of computing the $4$-profile, and that of counting $4$-cycles in a sparse graph.
The best known algorithm for the latter problem has running time $\mathcal{O}(n^{2-3/(2\omega + 1)})$ \cite{williams2014finding},
where $\omega < 2.372$ \cite{duan2023faster} is the exponent of matrix multiplication.
Consequently, Dudek and Gawrychowski obtain an $\mathcal{O}(n^{1.478})$-time algorithm 
for the $4$-profile. 
Our paper continues this line of work: we design algorithms computing the $5$, $6$ and $7$-profiles,
and highlight a barrier in the way of computing the $8$-profile.

\subsection{Our Contribution}

We introduce \emph{pattern-trees}: a family of graphs that generalise the corner-trees of Even-Zohar and Leng \cite{even2021counting}.
Pattern-trees are rooted labeled trees, in which every vertex is associated with a set of \textit{point variables},
along with constraints that fix their relative ordering in the plane,
and every edge is labeled by a list of constraints over the ordering
of points associated with its incident vertices. 

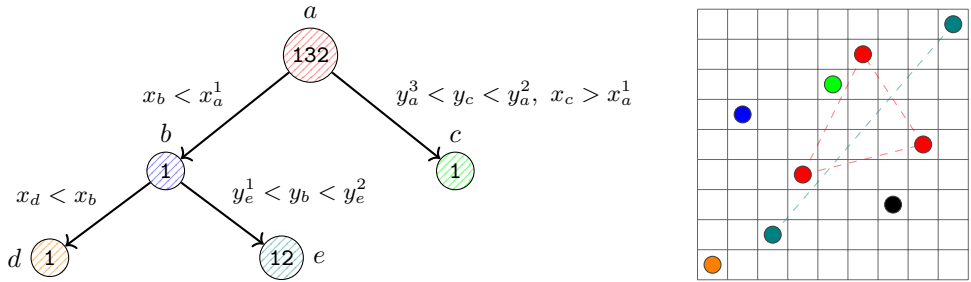
\begin{figure}[H]
    \centering
    \begin{tikzpicture}[scale=0.77]
	   \draw (0,-0.5) node[draw,circle, minimum size=0.5cm, inner sep=2pt, pattern=north east lines, distance=10pt, pattern color=red, fill opacity=0.4, text opacity=1,label=above:{$a$}] (root) {\small $\mathtt{132}$};
	   \draw (-2.5, -2.5) node[draw,circle, minimum size=0.5cm, inner sep=2pt, pattern=north east lines, distance=10pt, pattern color=blue, fill opacity=0.4, text opacity=1,label=above:{$b$}] (l) {\small $\mathtt{1}$};
	   \draw (2.5,-2.5) node[draw,circle, minimum size=0.5cm, inner sep=2pt, pattern=north east lines, distance=10pt, pattern color=green, fill opacity=0.5, text opacity=1,label=above:{$c$}] (r) {\small $\mathtt{1}$};
	   \draw (-4.5,-4) node[draw,circle, minimum size=0.5cm, inner sep=2pt, pattern=north east lines, distance=10pt, pattern color=orange, fill opacity=0.5, text opacity=1,label=left:{$d$}] (ll) {\small $\mathtt{1}$};
	   \draw (-0.5,-4) node[draw,circle, minimum size=0.5cm, inner sep=2pt, pattern=north east lines, distance=10pt, pattern color=teal, fill opacity=0.4, text opacity=1, label=right:{$e$}] (lr) {\small $\mathtt{12}$};
    
      \draw [thick, ->] (root) -- (l) node[midway,above left] {\small $x_b < x_a^1$};
      \draw [thick, ->] (root) -- (r) node[midway,above right] {\small $y_a^3 < y_c < y_a^2,\ x_c > x_a^1$};
      \draw [thick, ->] (l) -- (ll) node[midway,above left] {\small $x_d < x_b$};
      \draw [thick, ->] (l) -- (lr) node[midway,above right] {\small $y_e^1 < y_b < y_e^2$};
    \end{tikzpicture}
    \hspace{0.5cm}
    \begin{tikzpicture}[scale=0.4]
        \draw[help lines, color=darkgray, opacity=0.8] (0,0) grid (9,9);

        \draw [thin, dashed, draw=red, opacity=0.5] (3.5,3.5) -- (5.5,7.5);
        \draw [thin, dashed, draw=red, opacity=0.5] (3.5,3.5) -- (7.5,4.5);
        \draw [thin, dashed, draw=red, opacity=0.5] (7.5,4.5) -- (5.5,7.5);

        \draw [thin, dashed, draw=teal, opacity=0.5] (2.5,1.5) -- (8.5,8.5);

        \fill[red] (3.5,3.5) circle (8pt) {};
        \draw[darkgray] (3.5,3.5) circle (8pt) {};
        \fill[red] (5.5,7.5) circle (8pt) {};
        \draw[darkgray] (5.5,7.5) circle (8pt) {};
        \fill[red] (7.5,4.5) circle (8pt) {};
        \draw[darkgray] (7.5,4.5) circle (8pt) {};

        \fill[orange] (0.5,0.5) circle (8pt) {};
        \draw[darkgray] (0.5,0.5) circle (8pt) {};

        \fill[blue] (1.5,5.5) circle (8pt) {};
        \draw[darkgray] (1.5,5.5) circle (8pt) {};

        \fill[teal] (2.5,1.5) circle (8pt) {};
        \draw[darkgray] (2.5,1.5) circle (8pt) {};
        \fill[teal] (8.5,8.5) circle (8pt) {};
        \draw[darkgray] (8.5,8.5) circle (8pt) {};

        \fill[green] (4.5,6.5) circle (8pt) {};
        \draw[darkgray] (4.5,6.5) circle (8pt) {};

        \fill[black] (6.5,2.5) circle (8pt) {};
        \draw[darkgray] (6.5,2.5) circle (8pt) {};
        
    \end{tikzpicture}
    \caption{An embedding of a pattern-tree (left) into the permutation $\mathtt{162478359} \in \mathbb{S}_9$ (right).}
    \label{fig:pattern_tree_simple}
\end{figure}

Using an algorithm derived from pattern-trees, we obtain our first result.
\begin{introthm}
    \label{introthm:fast_le_7_profile}
    For every $1 \le k \le 7$, the $k$-profile of an $n$-element permutation can be computed in~$\Otilde{n^2}$ time and space.
\end{introthm}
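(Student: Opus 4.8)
The plan is to reduce the computation of the $k$-profile to two tasks: (i) for a single pattern-tree $T$ on $k$ point variables, evaluating the integer linear combination $v_T \eqdef \sum_{\tau \in \mathbb{S}_k} c_\tau(T)\cdot \#\tau(\pi)$ that $T$ represents, and (ii) exhibiting, for each fixed $k \le 7$, a finite family of pattern-trees $T_1,\dots,T_m$ whose coefficient vectors $\bigl(c_\tau(T_i)\bigr)_{\tau \in \mathbb{S}_k}$ span $\mathbb{R}^{|\mathbb{S}_k|}$. Given both, the $k$-profile follows immediately: evaluate each $v_{T_i}$, then solve the (constant-size, since $k$ is fixed) linear system with unknowns $\bigl(\#\tau(\pi)\bigr)_{\tau}$ and right-hand side $(v_{T_i})_i$, using the full-rank coefficient matrix from (ii). So the theorem reduces to proving that such a family exists with each $T_i$ evaluable in $\Otilde{n^2}$ time and space.

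For the coefficients $c_\tau(T)$, I would first make precise the poset viewpoint hinted at in the abstract. A pattern-tree $T$ imposes on its $k$ point variables a partial order in each coordinate (from the per-vertex patterns together with the per-edge constraint lists); fixing a $k$-subset $S$ of points of $\pi$ and counting the labelings of $S$ by the variables of $T$ that satisfy all constraints is exactly counting the linear extensions of the induced poset that are order-isomorphic to the pattern of $S$. Summing over all $k$-subsets $S$ yields $v_T = \sum_\tau c_\tau(T)\, \#\tau(\pi)$, where $c_\tau(T)$ is that number of linear extensions — a quantity depending only on $T$, computable by brute force over $\mathbb{S}_k$ since $k = \Oh{1}$. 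This fixes the linear-algebraic setup and turns (ii) into a finite, exact (over $\mathbb{Q}$ or a prime field) rank computation for each $k$.

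For the evaluation algorithm behind (i), I would process $T$ from the leaves to the root by dynamic programming, maintaining at each vertex $u$ a table indexed by the positions of the (constantly many) \emph{interface} points of $u$ — those points of $u$ constrained by the edge joining $u$ to its parent — whose entry is the weighted number of ways to place the whole subtree rooted at $u$ consistently. Merging a child into its parent then amounts to, for each placement of the parent's points, summing the child's table over an axis-parallel box prescribed by the edge constraints; this is a batch of orthogonal range-sum queries, handled by the multidimensional range trees / Fenwick trees of the title. The running time is governed by the number of simultaneously unconstrained coordinates one must enumerate at any vertex: restricting attention to pattern-trees of \emph{width} at most two keeps this at $\Oh{n^2}$ enumeration with a $\polylog n$ data-structure overhead, hence $\Otilde{n^2}$ time and space per tree. (The corner-trees of Even-Zohar and Leng are the width-one special case, giving their $\Otilde{n}$ bound.)

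It then remains to prove (ii) for $4 \le k \le 7$, the cases $k \le 3$ being already covered by corner-trees. Here I would enumerate a sufficiently expressive finite catalogue of width-$\le 2$ pattern-trees — bounded vertex count, bounded per-vertex pattern sizes, bounded edge-constraint lists — compute each coefficient vector via the linear-extension count above, and verify by exact linear algebra that the span attains rank $|\mathbb{S}_k|$. I expect this to be the main obstacle: it is not a priori clear that width-$\le 2$ pattern-trees are rich enough, and the companion barrier shows that for $k=8$ the analogous span has dimension exactly $|\mathbb{S}_8|-1$, so the argument cannot be purely soft — the catalogue (in particular the choice of node patterns and edge constraints) must be designed carefully enough to break the last dimension precisely at $k=7$, and the rank check must be done in exact arithmetic to be conclusive. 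Once full rank is certified, assembling and inverting the constant-size system completes the proof, with the stated $\Otilde{n^2}$ time and space coming from the finitely many tree evaluations.
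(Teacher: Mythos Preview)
Your proposal is correct and follows essentially the same route as the paper: pattern-trees whose vertices each carry at most two point variables (the paper's $s(T)\le 2$; your ``width $\le 2$'') are evaluated in $\Otilde{n^2}$ by a bottom-up DP over the tree using multidimensional rectangle-trees, their coefficient vectors arise from linear extensions of the induced constraint posets, and full rank over $\QQ^{\mathbb{S}_{\le k}}$ for $k\le 7$ is certified by an exact computer enumeration, after which a constant-size linear system recovers the profile. The only minor discrepancy is that the paper's pattern-tree occurrences need not be injective, so a $k$-point tree's vector actually lies in $\ZZ^{\mathbb{S}_{\le k}}$ rather than $\ZZ^{\mathbb{S}_k}$; this is harmless under your inductive setup, but you should state it explicitly.
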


Our proof of \Cref{introthm:fast_le_7_profile} relies on embeddings of trees into permutations.
Consider the \textit{number of distinct embeddings} of the points of a pattern-tree $T$
into the points in the plane associated with a permutation $\pi \in \Sn$,
in which the embedding satisfies all constraints defined by the tree. 
We show that this quantity
can be expressed as a \textit{fixed} integer linear combination
of permutation pattern counts, irrespective of $\pi$.
Interpreted as a formal sum of patterns,
this is simply a vector in $\ZZ^{\mathbb{S}_{\le k}}$,
where $k$ is the number of point variables in the tree.
These vectors are associated
with pairwise compositions of linear extensions of partially ordered sets,
whose Hasse diagrams can be partitioned in a particular way.

The subspaces spanned by the vectors of trees, over the rationals, are central to our proof. 
It is not hard to show that when the subspace of a set of trees is full-dimensional,
one can derive from those trees an algorithm for the $k$-profile.
To this end, we design an \textit{evaluation algorithm}: 
given a pattern-tree $T$ and an input permutation $\pi \in \Sn$\footnote{
Throughout this paper we operate on $n$-element permutations as input. Such inputs are assumed to be presented to the algorithm \textit{sparsely},  e.g., as a length-$n$ vector representing the permutation in one-line notation.},
the algorithm computes the number of occurrences of $T$ in $\pi$, denoted $\#T(\pi)$.
The complexity of this algorithm depends on properties of the tree.
In our proof of \Cref{introthm:fast_le_7_profile},
we construct a family of trees evaluable in $\Otilde{n^2}$-time,
which are of full dimension for $\mathbb{S}_{\le 7}$. 

Compared to previous results, \Cref{thm:fast_le_7_profile} offers an improvement whenever $k \in \{5,6,7\}$. 
The best known bound for the $k$-profile problem is $\mathcal{O}(n^{k/4 + o(k)})$, due to Berendsohn et al. \cite{berendsohn2021finding}.
Their approach relies on formulating a binary CSP, and bounding its tree-width.
It is well known that binary CSPs can be solved in time $\mathcal{O}(n^{t+1})$ \cite{dechter1989tree, freuderl1990complexity},
where $n$ is the domain size, and $t$ is the tree-width of the constraint graph.
In the algorithm of \cite{berendsohn2021finding}, the tree-width is bounded by $k/4 + o(k)$,
where the $o(k)$-term is greater than one.
Therefore, their algorithm has at least cubic running time when $k \ge 4$. \ \\

The relationship between properties of pattern-trees and the dimensions of the subspaces spanned by them is still far from understood (see \Cref{sect:discussion}).
Corner-trees,
which are exactly the pattern-trees whose evaluation is quasi-linear,
were shown in \cite{even2021counting} to have full rank for $\mathbb{S}_{\le 3}$,
and rank only $|\mathbb{S}_4|-1 = 23$, restricted to $\mathbb{S}_4$. 
Intriguingly, we show that the family of pattern-trees with which our proof of \Cref{thm:fast_le_7_profile} is obtained,
whose evaluation complexity is quadratic, have full rank for $\mathbb{S}_{\le 7}$,
and rank only $|\mathbb{S}_8|-1 = 40319$ restricted to $\mathbb{S}_8$.
We observe several striking resemblances between the two vectors spanning the orthogonal complements,
for $\mathbb{S}_4$ and $\mathbb{S}_8$ respectively, in terms of their symmetries.
In fact, we extend a characterisation of \cite{dudek2020counting} regarding the symmetries for $\mathbb{S}_4$
to the case of $\mathbb{S}_8$ (see \Cref{subsect:s_8_not_spanned}). \ \\

Our second result is a sub-quadratic algorithm for the $5$-profile.
\begin{introthm}
    \label{introthm:fast-5-prof}
    The $5$-profile of an $n$-element permutation can be computed in time $\Otilde{n^{7/4}}$.
\end{introthm}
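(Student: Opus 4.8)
The plan is to keep the pattern-tree machinery of \Cref{introthm:fast_le_7_profile} but to separate the $120$-dimensional target $\left(\pc{\tau}{\pi}\right)_{\tau\in\mathbb{S}_5}$ into a part computable from \emph{corner-trees} in $\Otilde{n}$ time and a residual part of constant dimension, and then to design a faster-than-quadratic evaluation for the handful of pattern-trees that complete a basis. Concretely, each tree $T$ of size at most $5$ contributes a fixed vector $v_T \in \QQ^{\mathbb{S}_{\le 5}}$ with $\#T(\pi) = \langle v_T, (\pcn{\sigma}{\pi})_\sigma\rangle$; by \cite{even2021counting} the corner-trees and their counts $\#T(\pi)$ are available in $\Otilde{n}$ time, and they span a subspace $V_{\mathrm{ct}}\subseteq\QQ^{\mathbb{S}_5}$ of some codimension $r=\Oh{1}$. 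Since pattern-trees of size at most $5$ span all of $\QQ^{\mathbb{S}_{\le 5}}$ (a special case of the full-rank statement of \Cref{introthm:fast_le_7_profile}), a finite search produces trees $T_1,\dots,T_r$ whose vectors, together with $V_{\mathrm{ct}}$, form a basis of $\QQ^{\mathbb{S}_5}$; solving the resulting rational linear system recovers the $5$-profile. So everything reduces to evaluating each $T_i$ on $\pi$ in $\Otilde{n^{7/4}}$ time, and the real work is to choose the $T_i$ so that this is possible.

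First I would isolate the quadratic bottleneck in the generic evaluation of \Cref{introthm:fast_le_7_profile}. On a tree with five point variables, that algorithm's cost is dominated by one kind of sub-task: at the ``widest'' node (or edge) one enumerates pairs of points $(p,q)$ of $\pi$ and, for each, computes a weighted count of points lying in a rectangle whose corners are fixed by $p$, $q$ and the partial embedding above them, aggregating these via corner-tree-style prefix sums; with a two-dimensional prefix-sum table this costs $\Otilde{n^2}$, which is exactly the barrier we want to break. I would therefore pick $T_1,\dots,T_r$ so that (i) their vectors, together with $V_{\mathrm{ct}}$, span $\QQ^{\mathbb{S}_5}$, and (ii) this pair-indexed rectangle-counting step is the \emph{only} super-quasilinear part of each $T_i$'s evaluation, with all query rectangles drawn from a structured, low-parameter family nested along staircases.

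The engine is then a data structure for exactly this primitive: the \emph{pair-rectangle-tree}, which answers, offline and in batch, dominance-type queries ``$\sum_R (\text{weighted number of points of a fixed set inside } R)$'' for rectangles $R$ in such a nested family. Its design has a single block-size parameter $b$: the $n$ points of $\pi$ are split into $\Theta(n/b)$ blocks along one coordinate, aggregate statistics between ordered pairs of blocks are precomputed so that the inter-block contribution to any query is answered in $\Otilde{1}$, while the $\Otilde{b}$ points in the at most two boundary blocks of a query are handled directly. For a single such sub-task this is the familiar $\Otilde{n^{3/2}}$ trade-off (balancing the $\Otilde{n^2/b}$ precomputation against $\Otilde{nb}$ boundary work); the point of the $5$-profile trees is that each one forces \emph{one further nested level} of the same trade-off --- intuitively because the hard configuration is a ``$\mathtt{2413}$-type'' count that must itself be produced for a whole parametrised family of value bands, one per choice of a fifth point --- and re-optimising the compounded cost over $b$ lands at the exponent $7/4$. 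Summing over the constant number of trees $T_1,\dots,T_r$ gives the claimed $\Otilde{n^{7/4}}$ running time.

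The routine parts are the finite linear-algebra verification in the first step and the correctness of the reduction to the rectangle-counting primitive. \textbf{The main obstacle} is the last step: one must arrange that \emph{every} query generated by the chosen trees decomposes into $\Otilde{1}$ inter-block work plus $\Otilde{b}$ boundary work under a single common block partition of $\pi$, and then prove that the compounded trade-off optimises to exponent exactly $7/4$ rather than $3/2$ or $2$. This forces a delicate joint choice --- of the trees $T_1,\dots,T_r$ (so that only one extra ``$\sqrt{n}$-on-top-of-$\sqrt{n}$'' factor, and nothing worse, is incurred) and of the pair-rectangle-tree's parameters --- together with a tight amortised analysis of the data structure across all queries and all five point roles.
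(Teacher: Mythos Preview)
Your high-level plan matches the paper's: complete the corner-tree span by a constant number of additional pattern-tree evaluations, each accelerated via a strip/block decomposition and a new data structure called a pair-rectangle-tree. But several of the concrete technical claims are off in ways that would block an actual proof.

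First, the pair-rectangle-tree is not a device that answers ``weighted number of points of a fixed set inside $R$''---that is an ordinary rectangle-tree. The paper's pair-rectangle-tree answers, for an arbitrary rectangle $\mathcal{R}\subseteq[n]^2$, the number of \emph{descending pairs} of permutation points inside $\mathcal{R}$; it is precisely this pair-counting (a genuinely $4$-dimensional query) that is the new primitive. It has a tunable parameter $q$ with preprocessing $\Otilde{n^2/q}$ and query time $\Otilde{q}$. Second, the gadgets that complete the basis are not ``$\mathtt{2413}$-type'' configurations but the monotone patterns with a top-right marker: a weighted marked count $\pcwe{321\underline{4}}$ (handled in $\Otilde{n^{5/3}}$) and $\pce{4321\underline{5}}$. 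For the latter, one partitions into strips of size $m$ and case-splits on whether the point playing $\mathtt{5}$ shares its horizontal strip with $\mathtt{4}$ and/or its vertical strip with $\mathtt{1}$; the ``no sharing / one sharing'' cases cost $\Otilde{n^2/m}$, while the ``sharing both'' case enumerates $\Oh{nm^2}$ triples $(\mathtt{5},\mathtt{4},\mathtt{1})$ and for each asks the pair-rectangle-tree for the number of descending pairs (the $\mathtt{32}$) in the box between $\mathtt{4}$ and $\mathtt{1}$. The total is $\Otilde{n^2/m+n^2/q+nm^2q}$, minimised at $m=q=\lfloor n^{1/4}\rfloor$, yielding $\Otilde{n^{7/4}}$. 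So the exponent is not ``$\sqrt{n}$-on-top-of-$\sqrt{n}$'' but a three-term balance over two independent block parameters each set to $n^{1/4}$. Finally, your ``finite search'' for $T_1,\dots,T_r$ is not routine: one must exhibit trees whose only super-quasilinear step is exactly one of these two gadgets, and verify (by computer, under the $D_4$ action) that together with corner-trees they span $\QQ^{\mathbb{S}_{\le 5}}$.
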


The proof of \Cref{introthm:fast-5-prof} is obtained by speeding-up the evaluation algorithm of pattern-trees.
The original algorithm for pattern-trees 
has an integral exponent in its complexity, which is determined by properties of the tree.
We show that trees with certain topological properties, i.e., containing a particular set of ``gadgets'', can be evaluated faster.
The family of trees constituting all corner-trees, and their augmentation by our gadgets,
span a full-dimensional subspace over $\mathbb{S}_5$.
This allows us to break the quadratic barrier for the $5$-profile.

One of the key ingredients, both in the original evaluation algorithm and in its extended version,
is a data structure known as a multidimensional segment-tree, or rectangle-tree \cite{jaja2005space,chazelle1988functional}.\footnote{A $2$-dimensional version of this data structure features in both \cite{even2021counting} and \cite{dudek2020counting}.}
A $d$-dimensional rectangle-tree holds (possibly weighted) points in $[n]^d$,
and answers sum-queries over rectangles $\mathcal{R} \subseteq [n]^d$ (i.e., Cartesian products of segments)
in poly-logarithmic time. 

The gadgets appearing in the proof of \Cref{introthm:fast-5-prof} are sub-structures related
to the patterns $\mathtt{3214}$ and $\mathtt{43215}$.
For the former, we extend an algorithm of \cite{even2021counting} into a weighted variant,
and provide an evaluation algorithm  of complexity $\widetilde{\mathcal{O}}(n^{5/3})$.
We then further extend this into an algorithm for the latter gadget, of complexity $\widetilde{\mathcal{O}}(n^{7/4})$.
The latter proof is involved, and requires the introduction of a new data structure, which we call
a \textit{pair-rectangle-tree}.
A pair-rectangle-tree is an extension of rectangle-trees that can facilitate more complex queries,
in particular, regarding the dominance counting (see \cite{jaja2005space, chazelle1987linear}) of a set of points in a rectangle.
We remark that the original pattern-tree evaluation algorithm 
can only compute \textit{equivalent} gadgets in quadratic time.
That is, the evaluation algorithm is not always optimal.

\subsection{Paper Organization}

In \Cref{sect:pattern_trees} we introduce pattern-trees.
Our construction for $3 \le k \le 7$ can be found in \Cref{subsect:5_to_7_prof},
and the case $k=8$ is dealt with in \Cref{subsect:s_8_not_spanned}.
A straightforward application of pattern-trees for general $k$ is given in \Cref{subsect:k_over_2_alg}. 
\Cref{sect:5_prof_alg} revolves around our construction of an $\Otilde{n^{7/4}}$-time
algorithm for the $5$-profile.
The augmentation of the pattern-trees evaluation algorithm can be found in \Cref{subsect:bottom-up-improve},
and the particular gadgets used in the $5$-profile are obtained in \Cref{subsect:3214_gadget} and \Cref{subsect:43215_gadget}.
The data structure we introduce for dominance counting in rectangles, pair-rectangle-tree, is given in \Cref{subsect:pair_rect_trees}.
Finally, in \Cref{sect:discussion} we discuss open questions and possible extensions of this work.

\section{Preliminaries}
\label{sect:prelims}
\subsection{Permutations}

A permutation $\pi \in \mathbb{S}_n$ over $n$ elements is a bijection from $[n]$ to itself, where $[n] \eqdef \{1, 2, \dots, n\}$. Throughout this paper, we express permutations using one-line notation, and if the permutation range is sufficiently small, we omit the parentheses. For instance, $\mathtt{123}$ is the identity permutation over $3$ elements. Associated with any permutation $\pi \in \Sn$ is a set of $n$ points in the plane, $p(\pi) \eqdef \{ (i, \pi(i)) : i \in [n]\}$, which we refer to as the \emph{points of $\pi$}. In the other direction, any set of $n$ points in the plane defines a permutation $\pi \in \Sn$, provided that no two points lie on an axis-parallel line. Given such a set $S \subset \mathbb{R}^2$, we use the notation $S \cong \pi$ to indicate that the points are \emph{order-isomorphic} to $\pi$.

An \emph{occurrence} of a \emph{pattern} $\tau \in \mathbb{S}_k$ in a permutation $\pi \in \Sn$ is a $k$-tuple  $1\le i_1 < \cdots< i_k \le n$ such that the set of points $(i_j,\pi(i_j))$ is order-isomorphic to $\tau$. That is, $\pi(i_j) < \pi(i_l)$ if and only if $\tau(j)<\tau(l)$ for all $j,l\in[k]$. The number of occurrences of $\tau$ in $\pi$ is denoted by $\pc{\tau}{\pi}$.

The dihedral group $D_4$ naturally acts on the symmetric group $\Sn$,
by acting on $[1, n]^2$. Formally, for any element $g \in D_4$ and 
permutation $\pi \in \Sn$, we have that $(g. \pi) \in \Sn$ is the permutation 
for which $g. (p(\pi)) \cong g. \pi$. 
Our algorithms usually receive permutations as input, and compute some combination of pattern-counts.
To this end, it is sometimes helpful to first act on the input with an element $g \in D_4$  (as a preprocessing step),
and only then invoke the algorithm as usual.
In this way, if an algorithm computes the count $\pc{\tau}{\pi}$, then after the action we obtain
$\pcn{\tau}{g. \pi} = \pcn{(g^{-1}. \tau)}{\pi}$.

Our main focus in this paper is the computation of $\pc{\tau}{\pi}$ for all $\tau\in\mathbb{S}_k$, where $\pi\in\Sn$ is given as input and and $k$ is fixed. This collection of counts is defined as follows.

\begin{definition}
    The \emph{$k$-profile} of a permutation $\pi \in \Sn$ is the vector
    $\left( \pc{\tau}{\pi} \right)_{\tau \in \mathbb{S}_k} \in \ZZ^{\mathbb{S}_k}$.
\end{definition}

\subsection{Partially Ordered Sets}

A partially ordered set (\emph{poset}) $\mathcal{P}(X, \le)$ over a ground set $X$
is a partial arrangement of the elements in $X$ according to the order relation $\le$.
If $\le$ is not reflexive, we say that $\mathcal{P}$ is \emph{strict}.
A partial order $\le^\star$ is said to be an extension
of $\le$ if $x\le y$ implies $x\le^\star y$ for all $x,y\in X$. If an extension $\le^\star$ is a total order, it is called a \emph{linear extension} of $\le$. As usual, the set of all linear extensions of a poset $\mathcal{P}$
is denoted by $\mathcal{L}(\mathcal{P})$.

\subsection{Computational Model}

Throughout this paper we disregard all $\polylog(n)$-factors, so our results hold for any choice of standard computational model (say, word-RAM). The notation $\Otilde{n^k}$ (adding the tilde) is used to hide poly-logarithmic factors. The algorithms presented in this paper operate on $n$-element permutations as input, and we remark that such inputs are assumed to be presented to the algorithm \textit{sparsely}, e.g., as a length-$n$ vector representing the permutation in one-line notation. 

\subsection{Rectangle-Trees}
\label{subsect:rect_tree}
Our algorithms for efficiently computing profiles rely heavily on a simple and powerful data structure, which we refer to as a \emph{rectangle-tree}\footnote{
    A rectangle $\mathcal{R} \subseteq [n]^d$ is a Cartesian product of \textit{segments}, i.e., invervals of the form $\{a, a+1, \dots, b\} \subseteq [n]$.  
} or a \emph{multidimensional segment-tree}. Concretely, we require the following folklore fact. 

\begin{proposition}[\cite{chazelle1988functional, jaja2005space}, see also \cite{dudek2020counting}]
    \label{prop:rect_trees}
    For any fixed dimension $d\ge 1$, there exists a deterministic data structure $\mathcal{T}$ that supports each of the following actions in $\Otilde{1}$ time:
    \begin{enumerate}
        \item  \underline{Initialisation}: Given $n \in \mathbb{N}$, construct an empty tree over $[n]^d$.
        \item \underline{Insertion}: Given $x\in [n]^d$ and $w = \mathcal{O}\left(\poly(n)\right)$, add weight $w$ to point $x$.
        \item \underline{Query}: Given a rectangle $\mathcal{R} \subseteq [n]^d$, the query $\mathcal{T}(\mathcal{R})$ returns the sum of weights over all points in $\mathcal{R}$.
    \end{enumerate}
\end{proposition}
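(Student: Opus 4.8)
The plan is to prove the proposition by induction on the dimension $d$, using the textbook recursive construction of a multidimensional segment tree, with the single twist that all nodes are allocated \emph{lazily} so that the empty structure occupies only $\Oh{1}$ space. For the base case $d=1$, I would work with the implicit balanced binary tree whose nodes are the dyadic subintervals of $[1,n]$ (the root is $[1,n]$, and the node for $[a,b]$ with $a<b$ has children $[a,\lfloor(a+b)/2\rfloor]$ and $[\lfloor(a+b)/2\rfloor+1,b]$), which has depth $\Oh{\log n}$. With each node $v$ I associate a single integer $s(v)$, the total weight inserted so far at points lying in $v$'s interval; since weights are $\Oh{\poly(n)}$, each $s(v)$ fits in $\Oh{1}$ machine words and arithmetic on it is $\Oh{1}$. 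Nodes live in a deterministic dictionary (e.g.\ a balanced search tree keyed by a canonical node identifier) and are created only the first time they are written to. \underline{Initialisation} records $n$ and an empty dictionary in $\Oh{1}$ time; \underline{insertion} of weight $w$ at $x$ walks the unique root-to-leaf path whose $\Oh{\log n}$ intervals contain $x$, adding $w$ to each $s(v)$ and creating absent nodes; \underline{query} on a segment $I$ uses the classical canonical decomposition — descend from the root collecting the $\Oh{\log n}$ maximal dyadic nodes whose intervals lie inside $I$ (these partition $I$) and return $\sum s(v)$, treating absent nodes as $0$. Each dictionary access costs $\polylog(n)$, so all three operations run in $\Otilde{1}$ time.

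For the inductive step, assume the claim in dimension $d-1$ with $\Otilde{1}$ operation cost. A $d$-dimensional rectangle-tree is exactly the $1$-dimensional segment tree $\mathcal{T}_1$ above over the first coordinate, except that a node $v$ now stores, in place of the integer $s(v)$, a $(d-1)$-dimensional rectangle-tree $\mathcal{T}^{(d-1)}_v$ over $[n]^{d-1}$ holding the projections $(x_2,\dots,x_d)$ of every inserted point $x$ whose first coordinate lies in $v$'s interval. \underline{Initialisation} is $\Oh{1}$. To \underline{insert} weight $w$ at $x=(x_1,\dots,x_d)$, walk the root-to-leaf path of $\mathcal{T}_1$ determined by $x_1$ and, at each of its $\Oh{\log n}$ nodes $v$, recursively insert weight $w$ at $(x_2,\dots,x_d)$ into $\mathcal{T}^{(d-1)}_v$, creating $v$ and its empty subtree if needed (an $\Oh{1}$ operation by induction); the total is $\Oh{\log n}$ times the $(d-1)$-dimensional insertion cost, i.e.\ $\Otilde{1}$. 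To \underline{query} $\mathcal{R}=I_1\times\cdots\times I_d$, decompose $I_1$ into its $\Oh{\log n}$ canonical dyadic segments and, for each corresponding node $v$, return $\mathcal{T}^{(d-1)}_v(I_2\times\cdots\times I_d)$ recursively; the sum of these $\Oh{\log n}$ values is the answer, again in $\Otilde{1}$ time.

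Correctness follows by induction on $d$. Fix a node $v$ of the outer tree: the insertion procedure maintains the invariant that the points recorded in $\mathcal{T}^{(d-1)}_v$ are precisely the projections of the inserted points whose first coordinate lies in $v$'s interval, so by the inductive hypothesis $\mathcal{T}^{(d-1)}_v(I_2\times\cdots\times I_d)$ is the total weight of those among them lying in $I_2\times\cdots\times I_d$. Since the canonical segments produced for $I_1$ partition $I_1$, every inserted point with $x_1\in I_1$ contributes to exactly one summand of the query and a point with $x_1\notin I_1$ to none; hence the query returns $\sum_{x\in\mathcal{R}} w(x)$. The base case is the standard correctness of segment-tree range sums.

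The only non-routine point — and the one I would flag as the main obstacle — is achieving $\Otilde{1}$ initialisation: one cannot materialise the $\Theta(n\log^{d-1}n)$ nodes of the full tree up front. This is exactly what the lazy allocation above buys: a node, together with its lower-dimensional subtree, springs into existence only when an insertion first touches it, so after initialisation the structure occupies $\Oh{1}$ space and after $m$ insertions only $\Otilde{m}$ space, at the price of routing every structural access through a deterministic dictionary — which contributes nothing beyond the $\polylog(n)$ factors already suppressed by $\Otilde{\cdot}$. Everything else is a direct unrolling of the classical multidimensional segment tree of \cite{chazelle1988functional, jaja2005space}.
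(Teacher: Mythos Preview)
The paper does not actually prove this proposition: it is stated as a cited folklore fact (attributed to \cite{chazelle1988functional, jaja2005space}, with a pointer to \cite{dudek2020counting}) and left without proof. Your write-up is a correct instantiation of the standard recursive multidimensional segment-tree construction, and the one nontrivial point you single out --- lazy node allocation so that initialisation is $\Otilde{1}$ rather than $\widetilde{\Theta}(n)$ --- is indeed the only thing that needs care beyond the textbook presentation. So there is nothing to compare against; your argument is sound and would serve as a self-contained proof of the proposition.
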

Let us illustrate the application of rectangle-trees to pattern counting, through the simple (and again, folklore) case of \textit{monotone pattern counting}.
\begin{proposition}
    \label{prop:count-123k}
    Let $k\geq 1$ be a fixed integer and let $\pi\in\Sn$ be an input permutation.
    The pattern counts $\pc{(1,\ldots,k)}{\pi}$ and $\pc{(k,\ldots,1)}{\pi}$ can be computed in $\Otilde{n}$ time.
\end{proposition}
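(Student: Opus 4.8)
The plan is to compute $\pc{(1,\ldots,k)}{\pi}$ by a left-to-right sweep over the points of $\pi$, maintaining, for each $j \in \{1, \dots, k\}$, the number of increasing subsequences of length $j$ that end at each possible height. Concretely, I would process the points $(i, \pi(i))$ in order of increasing $i$, and keep $k$ one-dimensional rectangle-trees $\mathcal{T}_1, \dots, \mathcal{T}_k$ over $[n]$, where $\mathcal{T}_j$ stores, at position $y$, the number of increasing subsequences of length exactly $j$ whose last point has height $y$. When the sweep reaches a point at height $y = \pi(i)$, I would first query, for $j$ from $k-1$ down to $1$, the value $c_j \eqdef \mathcal{T}_j([1, y-1])$, which counts increasing subsequences of length $j$ ending strictly below $y$ among the already-processed points; then I would insert weight $c_j$ at position $y$ in $\mathcal{T}_{j+1}$, and weight $1$ at position $y$ in $\mathcal{T}_1$. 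Crucially the queries must be done before the insertions (or in decreasing order of $j$) so that a subsequence is not extended by the very point being added. After the sweep, $\pc{(1,\ldots,k)}{\pi}$ is the total weight stored in $\mathcal{T}_k$, i.e. $\mathcal{T}_k([1,n])$.

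For correctness, I would argue by induction on $j$ that, at any point during the sweep, $\mathcal{T}_j([1,y])$ equals the number of length-$j$ increasing subsequences among the processed prefix whose final height is at most $y$. The base case $j=1$ holds because every processed point contributes exactly one length-$1$ subsequence at its own height. For the inductive step, a length-$(j+1)$ increasing subsequence ending at height $y$ is obtained uniquely by appending the point at height $y$ to a length-$j$ increasing subsequence, all of whose points were processed earlier and whose last height is strictly less than $y$ — and the count of these is exactly $c_j = \mathcal{T}_j([1,y-1])$ at the moment the point at height $y$ is processed, by the induction hypothesis applied to the prefix excluding the current point. Summing over all final heights gives the claim; the total over $\mathcal{T}_k$ is then precisely $\pc{(1,\ldots,k)}{\pi}$, since an occurrence of $(1,\ldots,k)$ is the same as a length-$k$ increasing subsequence.

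For the running time, each of the $n$ points triggers $O(k)$ queries and $O(k)$ insertions into the rectangle-trees, and by \Cref{prop:rect_trees} (with $d=1$) each such operation costs $\Otilde{1}$ time; the weights stored are bounded by $n^k = \poly(n)$ since $k$ is fixed, so they fit the weight bound in the proposition. Hence the total time is $k \cdot \Otilde{n} = \Otilde{n}$. Finally, $\pc{(k,\ldots,1)}{\pi}$ is handled by the symmetry of the dihedral action: applying the reversal $g \in D_4$ that sends $\pi$ to the permutation with points $\{(n+1-i, \pi(i))\}$ turns decreasing subsequences into increasing ones, so $\pc{(k,\ldots,1)}{\pi} = \pc{(1,\ldots,k)}{g.\pi}$, and we invoke the previous computation on $g.\pi$, which is obtained from $\pi$ in $\mathcal{O}(n)$ preprocessing time.

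I do not expect a genuine obstacle here; the only subtlety worth stating carefully is the ordering of queries before insertions within the processing of a single point (equivalently, iterating $j$ downward), which is what prevents a subsequence from using the current point twice. Everything else is a direct application of the rectangle-tree primitives.
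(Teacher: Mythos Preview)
Your proposal is correct and follows essentially the same dynamic-programming idea as the paper: for each $j$, record at every permutation point the number of length-$j$ increasing subsequences ending there, computed via a rectangle-tree query into the length-$(j{-}1)$ data. The only difference is that you use $1$-dimensional trees with a left-to-right sweep (hence the query-before-insert care), whereas the paper uses $2$-dimensional trees built level-by-level; the paper itself notes in the remark following the proof that the $1$-dimensional variant works and is slightly more efficient, but opts for the $2$-dimensional version because that structure is reused later in the pattern-tree evaluation algorithm.
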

\begin{proof}
    Without loss of generality, we count the ascending pattern.
    Construct a $2$-dimensional rectangle-tree $\mathcal{T}_1$, and insert every point $(i,\pi(i))\in p(\pi)$ with weight $1$.
    Note that the rectangle query $\mathcal{T}_1([1,i-1]\times [1,\pi(i)-1])$ counts how many occurrences of $\mathtt{12}$ end in $(i,\pi(i))$.
    For every $i$, we use that value as the weight of $(i,\pi(i))$ in a new $2$-dimensional rectangle-tree, $\mathcal{T}_2$.
    
    Continuing inductively, for every $2\leq j \leq k$,
    the point $(i,\pi(i))$ is inserted into a $2$-dimensional tree $\mathcal{T}_j$ with weight $\mathcal{T}_{j-1}([1,i-1]\times [1,\pi(i)-1])$. This counts occurrences of  $\mathtt{(1,\ldots,j)}$ ending in $(i,\pi(i))$.
    The final answer is given by $\mathcal{T}_k([n]\times [n])$.
    The complexity is $\Otilde{kn}=\Otilde{n}$,
    since for each of the $n$ permutation points and each of the $k$ trees we perform one query and one insertion.
\end{proof}
\begin{rem}
    It is also possible to count monotone patterns using $1$-dimensional segment-trees, somewhat more efficiently.
    However, the difference is only in logarithmic factors.
    The multidimensional structure highlighted above will serve us in more complicated cases.
\end{rem}
\section{Pattern-Trees}
\label{sect:pattern_trees}

In this section we introduce a family of graphs, called \textit{pattern-trees}.
Using pattern-trees we derive algorithms for computing the $k$-profile of a permutation.
Our main result for this section (see \Cref{subsect:5_to_7_prof}) is
a quadratic-time algorithm for the $k$-profile of a permutation, for every $k \le 7$:
\begin{thm}
    \label{thm:fast_le_7_profile}
    For $1 \le k \le 7$, the $k$-profile of an $n$-element permutation is computable in $\Otilde{n^2}$ time and space.
\end{thm}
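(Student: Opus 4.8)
The plan is to construct an explicit finite family of pattern-trees, each evaluable in $\Otilde{n^2}$ time, whose associated integer vectors in $\ZZ^{\mathbb{S}_{\le 7}}$ span the full space $\QQ^{\mathbb{S}_{\le 7}}$ (equivalently, after projecting onto the $\mathbb{S}_7$ coordinates and using that smaller-$k$ profiles are computable within the same budget, the vectors span $\QQ^{\mathbb{S}_7}$). Once such a family exists, the $7$-profile — and hence, by restriction/marginalisation, every $k$-profile for $k \le 7$ — is recovered by solving a fixed $|\mathbb{S}_7| \times |\mathbb{S}_7|$ linear system whose right-hand side is the vector of evaluated tree-counts $(\#T(\pi))_T$; this linear-algebra postprocessing costs $\Oh{1}$ in $n$. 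So the theorem reduces to two independent claims: (i) an \emph{evaluation algorithm} that, given a pattern-tree $T$ with at most a certain structural complexity, computes $\#T(\pi)$ in $\Otilde{n^2}$ time and space; and (ii) a \emph{spanning family} of such trees of full rank for $\mathbb{S}_{\le 7}$.

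For (i), I would build the evaluation algorithm as a bottom-up dynamic program over the rooted tree, generalising the monotone-pattern computation of \Cref{prop:count-123k} and the corner-tree algorithm of Even-Zohar–Leng. At each leaf, one initialises a rectangle-tree (of dimension equal to the number of point variables at that vertex, a constant) recording, for each placement of that vertex's point block, weight $1$. Processing a vertex $v$ with children $c_1, \dots, c_m$: having inductively computed, for each child $c_i$, a weighted rectangle-tree that for every placement of $c_i$'s points stores the number of ways to embed the subtree rooted at $c_i$ consistently below it, one sweeps over the possible placements of $v$'s own point block; the edge constraints $v \to c_i$ translate each such placement into a rectangle query against $c_i$'s tree, the answers are multiplied together (children are embedded into disjoint, unconstrained-among-themselves regions so the count factorises), and the product becomes the weight of this placement of $v$ in $v$'s own rectangle-tree. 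The key quantitative point is that a vertex carrying $t$ point variables requires iterating over $\Otilde{n^t}$ placements, and the edge labels only ever reference coordinates of the two incident vertices, so each placement is handled in $\polylog(n)$ time; thus a tree in which every vertex has at most two point variables and which is otherwise arbitrarily branching evaluates in $\Otilde{n^2}$. (One must also check that the semantics is exactly ``number of embeddings satisfying all constraints,'' and that this equals a fixed integer combination of pattern counts — this is the ``linear extensions of posets'' remark in the introduction, and is essentially bookkeeping: an embedding induces a poset on the $\le k$ chosen points, and summing over permutations refining that poset gives the combination.)

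For (ii), I would exhibit the family concretely. Start from all corner-trees (the quasi-linear subfamily), which by \cite{even2021counting} already span $\mathbb{S}_{\le 3}$ and give rank $23$ on $\mathbb{S}_4$, then enlarge: allow each vertex to carry a pattern from $\mathbb{S}_1 \cup \mathbb{S}_2$ (rather than only a single point), and allow richer edge-label lists (inequalities of the form $x_b < x_a^j$, $y_a^{j+1} < y_c < y_a^j$, etc., exactly as in \Cref{fig:pattern_tree_simple}). Enumerating all such trees up to a bounded size and a bounded number of point variables ($\le 7$ total, $\le 2$ per vertex) yields a finite explicit list; one then computes the rank of the resulting integer matrix over $\QQ$. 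The substance of the argument — and what I expect to be the main obstacle — is precisely verifying that this rank is $|\mathbb{S}_7|$ (and, as the paper flags, that the analogous computation for $\mathbb{S}_8$ gives only $|\mathbb{S}_8| - 1$). This is not something one proves by hand; it is a finite but large linear-algebra certificate, and the real work is (a) choosing the family small enough to be checkable yet large enough to be full rank, and (b) organising the proof so that the reader can in principle reproduce the rank computation — e.g. by exhibiting a specific $|\mathbb{S}_7|$-element subfamily together with the claim that its matrix is nonsingular, with the determinant or an explicit inverse serving as the certificate. A secondary subtlety is ensuring that the tree-to-vector map is computed correctly (the integer combination attached to each tree must be derived rigorously from the linear-extension description, since an error there invalidates the rank claim); I would isolate this as a lemma, proved once, and then treat the rank statement as a finite verification built on top of it.

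Finally, I would assemble the pieces: apply the evaluation algorithm of (i) to each tree in the spanning family of (ii) — a constant number of trees, each costing $\Otilde{n^2}$ time and space — collect the counts into a vector, and left-multiply by the fixed inverse matrix (valid because of the full-rank claim) to read off $(\pc{\tau}{\pi})_{\tau \in \mathbb{S}_7}$. The profiles for $k < 7$ either follow by the same method with smaller families, or, more cheaply, by summing $7$-profile entries over the appropriate fibers; either way the total cost is $\Otilde{n^2}$ time and space, proving \Cref{thm:fast_le_7_profile}. The cases $k = 1, 2$ are trivial and $k = 3, 4$ are already known in better-than-quadratic time, so the content is genuinely in $k \in \{5,6,7\}$.
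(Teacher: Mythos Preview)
Your proposal is correct and matches the paper's approach essentially point for point: the paper packages your step (i) as \Cref{alg:bottom_up_pattern_tree} with its correctness and $\Otilde{n^{s(T)}}$ complexity proved in \Cref{thm:computing_pattern_trees}, the tree-to-vector semantics you flag as a needed lemma is \Cref{lem:pattern_tree_vec}, and your step (ii) is carried out exactly as you anticipate --- a computer enumeration (described in \Cref{sect:enumeration_pattern_trees}) over pattern-trees of maximum size $\le 2$ and total size $\le 7$, yielding $5913$ linearly independent vectors over $\QQ^{\mathbb{S}_{\le 7}}$, followed by a precomputed matrix inversion. The only cosmetic difference is that the paper handles all $k\le 7$ in one shot with a single $5913\times 5913$ system rather than marginalising from the $7$-profile.
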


In \Cref{subsect:s_8_not_spanned} we consider the subspaces spanned by the same family of pattern-trees, restricted to $\mathbb{S}_8$.
We show that this subspace is of dimension $|\mathbb{S}_8| - 1$, one less than required.
In \Cref{subsect:k_over_2_alg} we consider the case of general (constant) $k$, and show a straightforward 
application of pattern-trees yielding an $\widetilde{\mathcal{O}}(n^{\lceil k/2 \rceil})$-time algorithm
for the $k$-profile.\ \\

Before we present pattern-trees, let us begin by recalling corner-trees.

\subsection{Warmup: Corner-Trees}

One of the main components in the work of \cite{even2021counting} is the introduction of \emph{corner-trees}.
Corner-trees are a family of rooted edge-labeled trees. 
Every corner-tree of $k$ vertices is associated with a particular vector in $\ZZ^{\mathbb{S}_{\le k}}$;
i.e., a formal integer linear combination of permutations, each of size at most $k$.
Furthermore, there exists an efficient evaluation algorithm for corner-trees:
given any input permutation $\pi \in \Sn$ and corner-tree $T$,
the integer sum of permutation \textit{pattern counts} in $\pi$, called the vector of $T$,
can be computed in time $\Otilde{n}$. We refer to this operation
as \emph{evaluating the vector of $T$ over $\pi$}.

\begin{definition}[corner-tree \cite{even2021counting}\footnote{For convenience, we consider corner-trees to be edge-labeled, rather than vertex-labeled as in \cite{even2021counting}.}]
A corner-tree is a rooted\footnote{Hereafter, whenever we consider rooted trees, we orient their edges away from the root.} edge-labeled tree, with edge labels in the set $\{\mathrm{NE}, \mathrm{NW}, \mathrm{SE}, \mathrm{SW}\}$.
\end{definition}

An \textit{occurrence} of a corner-tree $T$ in a permutation $\pi$ is a map $\varphi: V(T) \to p(\pi)$, in which the image \textit{agrees} with the edge-labels of the tree. That is, for every edge $(u \to v) \in E(T)$, $\varphi(v)$ is to the \textit{left} of $\varphi(u)$ if the edge is labeled $\text{NW}$ or $\text{SW}$, and to its right otherwise. Similar rules apply for their vertical ordering.
As in \cite{even2021counting}, the \textit{number of occurrences} of a corner-tree $T$ in a permutation $\pi$ is denoted by $\#T(\pi)$.

The \textit{vector} of a corner-tree is a formal sum of permutation patterns with integer coefficients, representing the number of occurrences of the tree in \textit{any} input permutation.
For instance, the vector of \begin{tikzpicture}[scale=0.2, every node/.style={inner sep=0,outer sep=0}]
	   \draw (0,0) node[draw,circle, scale=0.3, minimum size=5mm, fill=black] (root) {};
	   \draw (4,0) node[draw,circle, scale=0.3, minimum size=5mm] (c1) {};
	   \draw (8,0) node[draw,circle, scale=0.3, minimum size=5mm] (c2) {};    
      \draw [thin, <-, scale=0.1] (c1) -- (root) node[midway,above=0.3mm] {\tiny SE};
      \draw [thin, <-, scale=0.1] (c2) -- (c1) node[midway,above=0.3mm] {\tiny NE};
\end{tikzpicture} is $\pce{213} + \pce{312}$. Clearly, the vector of a corner-tree over $k$ vertices may involve patterns of size at most $k$, as the tree conditions on the relative ordering of at most $|V(T)|$ points (smaller patterns may appear as well, since occurrences are not necessarily injective).

\begin{figure}[H]
    \centering
    \begin{tikzpicture}[scale=0.75]
	   \draw (0,-0.5) node[draw,circle, minimum size=0.5cm, inner sep=2pt, pattern=north east lines, distance=10pt, pattern color=red] (root) {};
	   \draw (-2, -2) node[draw,circle, minimum size=0.5cm, inner sep=2pt, pattern=north east lines, distance=10pt, pattern color=blue] (l) {};
	   \draw (2,-2) node[draw,circle, minimum size=0.5cm, inner sep=2pt, pattern=north east lines, distance=10pt, pattern color=green] (r) {};
	   \draw (-3.5,-3) node[draw,circle, minimum size=0.5cm, inner sep=2pt, pattern=north east lines, distance=10pt, pattern color=orange] (ll) {};
	   \draw (-0.5,-3) node[draw,circle, minimum size=0.5cm, inner sep=2pt, pattern=north east lines, distance=10pt, pattern color=purple] (lr) {};
    
      \draw [thick, ->] (root) -- (l) node[midway,above left] {\small NW};
      \draw [thick, ->] (root) -- (r) node[midway,above right] {\small NW};
      \draw [thick, ->] (l) -- (ll) node[midway,above left] {\small SE};
      \draw [thick, ->] (l) -- (lr) node[midway,above right] {\small SW};
    \end{tikzpicture}
    \hspace{1cm}
    \begin{tikzpicture}[scale=0.4]
        \draw[help lines, color=darkgray, opacity=0.8] (0,0) grid (7,7);

        \draw [thin, -latex, ->] (3.5,0.5) -- (2.49,6.15);
        \draw [thin, -latex, ->] (3.5,0.5) -- (1.55,3.2);
        \draw [thin, -latex, ->] (1.5,3.5) -- (5.17,2.55);
        \draw [thin, -latex, ->] (1.5,3.5) -- (0.6,1.8);

        \fill[red] (3.5,0.5) circle (8pt) {};
        \draw[darkgray] (3.5,0.5) circle (8pt) {};
        \fill[green] (2.5,6.5) circle (8pt) {};
        \draw[darkgray] (2.5,6.5) circle (8pt) {};
        \fill[blue] (1.5,3.5) circle (8pt) {};
        \draw[darkgray] (1.5,3.5) circle (8pt) {};
        \fill[orange] (5.5,2.5) circle (8pt) {};
        \draw[darkgray] (5.5,2.5) circle (8pt) {};
        \fill[purple] (0.5,1.5) circle (8pt) {};
        \draw[darkgray] (0.5,1.5) circle (8pt) {};
        \fill[black] (4.5,5.5) circle (8pt) {};
        \draw[darkgray] (4.5,5.5) circle (8pt) {};
        \fill[black] (6.5,4.5) circle (8pt) {};
        \draw[darkgray] (6.5,4.5) circle (8pt) {};        
        
    \end{tikzpicture}
    \hspace{1cm}
    \begin{tikzpicture}[scale=0.4, fill fraction/.style n args={2}{path picture={
 \fill[#1] (path picture bounding box.south west) rectangle
 ($(path picture bounding box.north west)!#2!(path picture bounding box.north
 east)$);}}]
        
        \draw[help lines, color=darkgray, opacity=0.8] (0,0) grid (7,7);
        
        \draw [thin, -latex, ->] (3.5,0.5) -- (1.55,3.2);
        \draw [thin, -latex, ->] (1.5,3.5) -- (5.17,2.55);
        \draw [thin, -latex, ->] (1.5,3.5) -- (0.6,1.8);

        \fill[red] (3.5,0.5) circle (8pt) {};
        \draw[darkgray] (3.5,0.5) circle (8pt) {};
        \fill[darkgray] (2.5,6.5) circle (8pt) {};
        \draw[darkgray] (2.5,6.5) circle (8pt) {};
        \fill[blue] (1.5,3.5) circle (8pt) {};
        \draw (1.5,3.5) node[circle, draw, fill fraction={green}{0.5}, inner sep=0pt,minimum size=6pt] (8pt) {};
        \draw[darkgray] (1.5,3.5) circle (8pt) {};
        \fill[orange] (5.5,2.5) circle (8pt) {};
        \draw[darkgray] (5.5,2.5) circle (8pt) {};
        \fill[purple] (0.5,1.5) circle (8pt) {};
        \draw[darkgray] (0.5,1.5) circle (8pt) {};
        \fill[black] (4.5,5.5) circle (8pt) {};
        \draw[darkgray] (4.5,5.5) circle (8pt) {};
        \fill[black] (6.5,4.5) circle (8pt) {};
        \draw[darkgray] (6.5,4.5) circle (8pt) {};        
        
    \end{tikzpicture}

    \caption{Two occurrences of a corner-tree (left) in $\pi = \mathtt{2471635} \in \mathbb{S}_7$ (centre, right).  Occurrences need not be injective; for instance, on the right, the blue and green points are identified.}
    \label{fig:corner_tree_occurrence}
\end{figure}

Theorem 1.1 of \cite{even2021counting} presents an algorithm for evaluating the vector of a corner-tree over an input permutation $\pi \in \Sn$. 
For expositionary purposes, we sketch a simplified version of their algorithm, phrased in terms of rectangle-trees.

\begin{proposition} [Theorem 1.1 of \cite{even2021counting}]
    \label{prop:compute_corner_tree}
    The vector of any corner-tree with a constant number of vertices can be evaluated over an input permutation $\pi \in \Sn$ in time $\Otilde{n}$.
\end{proposition}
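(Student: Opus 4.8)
The plan is to generalize the inductive rectangle-tree computation of \Cref{prop:count-123k} from a path to the tree $T$. Since $\#T(\pi)$ is, by definition, exactly the value obtained by evaluating the vector of $T$ at $\pi$, it suffices to describe an $\Otilde{n}$-time procedure that outputs the integer $\#T(\pi)$; the vector itself is never materialised. I would process the vertices of $T$ in post-order (children before parents), maintaining for each $v \in V(T)$ a function $f_v : p(\pi) \to \ZZ_{\ge 0}$, where $f_v(p)$ is the number of maps from the subtree of $T$ rooted at $v$ into $p(\pi)$ that respect every edge label inside that subtree and send $v$ to $p$. For a leaf $v$ we set $f_v \equiv 1$. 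For an internal vertex $v$ with children $c_1,\dots,c_m$ joined to $v$ by edges labeled $\ell_1,\dots,\ell_m$, the recurrence is
\[
  f_v(p) \;=\; \prod_{i=1}^{m} \Big( \sum_{q \in Q_{\ell_i}(p)} f_{c_i}(q) \Big),
\]
where $Q_\ell(p) \subseteq [n]^2$ is the open quadrant of $p$ in direction $\ell$ (e.g.\ $Q_{\mathrm{NE}}(p) = \{q : q_x > p_x,\ q_y > p_y\}$, with the analogous definitions for $\mathrm{NW}, \mathrm{SE}, \mathrm{SW}$). The final answer is $\#T(\pi) = \sum_{p \in p(\pi)} f_{\mathrm{root}}(p)$.

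Each step is realised with a $2$-dimensional rectangle-tree from \Cref{prop:rect_trees}: once $f_c$ has been computed for a non-root vertex $c$, build a tree $\mathcal{T}_c$ over $[n]^2$ and insert each point $q \in p(\pi)$ with weight $f_c(q)$. When later processing the parent $v$ of $c$, the inner sum $\sum_{q \in Q_{\ell}(p)} f_c(q)$ is precisely the rectangle query $\mathcal{T}_c(Q_\ell(p))$ — each quadrant being a rectangle such as $[1,p_x{-}1]\times[1,p_y{-}1]$ or $[p_x{+}1,n]\times[p_y{+}1,n]$ — answerable in $\Otilde{1}$ time. Computing all of $f_v$ for a fixed $v$ therefore costs $\Otilde{\deg^+(v)\cdot n}$ (one query per child per point, followed by a constant-size product), and building $\mathcal{T}_v$ costs $\Otilde{n}$. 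Summing over $V(T)$ and using $\sum_v \deg^+(v) = |E(T)| = |V(T)|-1 = O(1)$, the total time and space are $\Otilde{n}$.

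The only point that genuinely needs care is the correctness of the recurrence, which I would prove by induction on the subtree at $v$. The base case is immediate. For the inductive step, the key structural fact about corner-trees is that they impose constraints \emph{only} along edges: once $\varphi(v)=p$ is fixed, the embeddings of the $m$ child-subtrees may be chosen completely independently, since the sole constraint connecting the subtree of $c_i$ to the rest of $T$ is the single edge $(v\to c_i)$ — captured exactly by requiring $\varphi(c_i)\in Q_{\ell_i}(p)$ — and there is no constraint at all between two distinct child-subtrees. Hence the count multiplies over children, giving the displayed formula; summing over the image of the root then yields $\#T(\pi)$. I would also note explicitly that this matches the stated definition in which occurrences need not be injective: two sibling vertices, or a vertex and a non-adjacent descendant, may legitimately be sent to the same point of $\pi$, and the recurrence correctly makes no attempt to forbid this. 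No deeper obstacle arises — the argument is a faithful tree-structured extension of \Cref{prop:count-123k}.
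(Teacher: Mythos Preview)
Your proposal is correct and follows essentially the same approach as the paper's proof sketch: a post-order traversal in which each vertex $v$ is assigned a $2$-dimensional rectangle-tree storing, at each point $p\in p(\pi)$, the number of occurrences of the subtree below $v$ that send $v$ to $p$; children are combined by quadrant queries and a product, and the answer is the total weight at the root. Your write-up is in fact more explicit than the paper's (you spell out the invariant $f_v$ and the independence argument), but the algorithm and its analysis are the same.
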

\begin{proofsketch}
    Let $T$ be a corner-tree and let $\pi \in \Sn$ be a permutation. To start, construct a $2$-dimensional rectangle-tree (see \Cref{subsect:rect_tree}), and insert the points $p(\pi)$ with weight $1$, in time $\Otilde{n}$. Associate this tree with the \textit{leaves} of $T$. Next, traverse the vertices of $T$ in post-order. At every internal vertex $u$, construct a new (empty) rectangle-tree $\mathcal{T}_u$, and associate it with $u$. Then, iterate over every point in $\pi$, and at each point perform one \textit{rectangle query} to the rectangle-tree associated with each of $u$'s children, querying the rectangle corresponding to the edge label in $T$ written on the parent-child edge. For example, if $u \to v$ is labeled $\text{SW}$, the iteration over a point $(i,\pi(i))\in p(\pi)$ queries the rectangle $[1,i-1]\times [1,\pi(i)-1]$. Store the product of all answers to these queries in $\mathcal{T}_u$, at the position of the current permutation point. It can be shown that the sum of all values at the root's tree (i.e., a full rectangle query) is the number of occurrences, $\#T(\pi)$.   
\end{proofsketch}

\subsection{Pattern-Trees}

We introduce \emph{pattern-trees}: a family of graphs that generalise the corner-trees of \cite{even2021counting}.
In pattern-trees, every \textit{vertex} is labeled by a permutation, and every \textit{edge} 
is labeled by a list of constraints. 
The permutations written on the vertices fix the exact ordering of the points corresponding to them,
and the edge-constraints are similarly imposed over the points corresponding to the two incident vertices.  
As with corner-trees, pattern-trees serve two purposes:
firstly, every pattern-tree is associated with a set of constraints over permutation points, the
number of satisfying assignments to which can be expressed as a formal integer linear combination of patterns (that is, a vector).
Secondly, we present an algorithm for evaluating this vector over an input permutation.
This allows us to efficiently compute certain pattern combinations not spanned by corner-trees.
\begin{definition}[pattern-tree]
    \label{defn:pattern_tree}
    A pattern-tree $T$ is a rooted edge- and vertex-labeled tree, where:
    \begin{enumerate}
        \item Every vertex $v \in V(T)$ is:
        \begin{itemize}
            \item Labeled by a permutation $\tau_v \in \mathbb{S}_r$, for some integer $r \ge 1$.
            \item Associated with two sets of \textit{fresh} variables, 
            \[ x_v \eqdef \{ x_v^1, \dots, x_v^r \}, \text{ and } y_v \eqdef \{ y_v^1, \dots, y_v^r\}, \]
            where we denote $p_v^i \eqdef (x_v^i, y_v^i)$ for every $i \in [r]$, and $p_v \eqdef \{ p_v^i : i \in [r] \}$.
        \end{itemize}
        \item Every edge $(u \to v) \in E(T)$ is labeled by:
        \begin{itemize}
            \item Two strict posets, $\mathcal{P}^x_{uv} = (x_u \sqcup x_v, <)$ and $\mathcal{P}^y_{uv}=(y_u \sqcup y_v, <)$.
            \item A set $E_{uv} \subseteq p_u \times p_v$ of equalities between the points of $u$ and those of $v$.
        \end{itemize}
    \end{enumerate}
\end{definition}
The \emph{size} $s(v)$ of a vertex $v$ is the size $r$ of the permutation $\tau_v \in \mathbb{S}_r$ with which it is labeled. The \emph{maximum size} of a pattern-tree, denoted $s(T)$, is the maximum over all vertex sizes. The \emph{total size}, denoted $\Sigma(T)$, is the sum over all vertex sizes. Under this notation, a corner-tree is a pattern-tree of maximum size one. Lastly, $p(T) \eqdef \bigsqcup_{v \in V(T)} p_v$ is the set of all $\Sigma(T)$ points in the tree.

\begin{figure}[H]
    \centering
    \begin{tikzpicture}[scale=0.85]
	   \draw (0,-0.5) node[draw,circle, minimum size=0.5cm, inner sep=2pt, pattern=north east lines, distance=10pt, pattern color=red, fill opacity=0.4, text opacity=1, label=above:{$u$}] (root) {$\mathtt{132}$};
	   \draw (-2.5, -2.5) node[draw,circle, minimum size=0.5cm, inner sep=2pt, pattern=north east lines, distance=10pt, pattern color=blue, fill opacity=0.4, text opacity=1, label=left:{$v$}] (l) {$\mathtt{12}$};
	   \draw (2.5,-2.5) node[draw,circle, minimum size=0.5cm, inner sep=2pt, pattern=north east lines, distance=10pt, pattern color=green, label=right:{$w$}] (r) {$\mathtt{1}$};

      \draw [thick, ->] (root) -- (l) node[midway,above left] {\small $p_v^2 = p_u^2$};
      \draw [thick, ->] (root) -- (r) node[midway,above right=-0.15cm and 0.0cm] {\small \shortstack{$x_u^2 < x_w^1 < x_u^3$ \\ $y_w^1 < y_u^3$}};
    \end{tikzpicture}
    \hspace{1cm}
    \begin{tikzpicture}[scale=0.47]
    
        \tikzset{
          rbcirc/.style={circle,fill=white,draw=blue,dash pattern= on 2pt off 2pt,postaction={draw,red,dash pattern= on 2pt off 2pt,dash phase=2pt},text width=1.5em}
        }
        
        \draw[help lines, color=darkgray, opacity=0.8] (0,0) grid (7,7);

        \node[fit={(3,0) (5,2)}, inner sep=0pt, line width=0.2mm, pattern=south east lines, draw=black, pattern color=red, opacity=0.4] (rect) {};
  
        \draw[black] (3.5,0.5) circle (13pt) node {\small $p_w^1$};
        \draw[green] (3.5,0.5) circle (13pt) {};

        \draw [thin, dashed, draw=red, opacity=0.5] (2.5,6.5) -- (5.5,2.5);
        \draw [thin, dashed, draw=red, opacity=0.5] (0.5,1.5) to[out=100,in=70] (2.5,6.5);
        \draw [thin, dashed, draw=red, opacity=0.5] (0.5,1.5) -- (5.5,2.5);
        \draw [thin, dashed, draw=blue, opacity=0.5] (1.5,3.5) -- (2.5,6.5);

        \draw[rbcirc] (2.5,6.5) circle (13pt) node {};
        \draw[fill=none] (2.5,6.5) node {\small $p_u^2$};
        \draw[fill=none] (3.5, 6.5) node {\small $=$};
        \draw[fill=none] (4.5, 6.5) node {\small $p_v^2$};
        
        \draw[black, fill=white] (1.5,3.5) circle (13pt) node {\small $p_v^1$};
        \draw[blue] (1.5,3.5) circle (13pt) {};

        \draw[black, fill=white] (5.5,2.5) circle (13pt) node {\small $p_u^3$};
        \draw[red] (5.5,2.5) circle (13pt) {};
        
        \draw[black, fill=white] (0.5,1.5) circle (13pt) node {\small $p_u^1$};
        \draw[red] (0.5,1.5) circle (13pt) {};

        \fill[black, opacity=0.1] (4.5,5.5) circle (12pt) {};
        \draw[darkgray] (4.5,5.5) circle (12pt) {};
        
        \fill[black, opacity=0.1] (6.5,4.5) circle (12pt) {};
        \draw[darkgray] (6.5,4.5) circle (12pt) {};            
    \end{tikzpicture}
    \caption{An occurrence of a pattern-tree $T$ (left) in the permutation $\pi = \mathtt{2471635} \in \mathbb{S}_7$ (right).
    Every set of coloured points on the right induces the permutation with which the similarly coloured vertex on the left is labeled (``vertex constraints'').
    All of the edge-constraints are also satisfied: points $p_v^2$ and $p_u^2$ are \textit{identified},
    and point $p_w^1$ (green) must reside within the red shaded square.
    This tree corresponds to a linear combination, $\pce{1423} + \pce{2413} + 2 \cdot \pce{12534} + \dots + \pce{24513}$, of patterns in $\mathbb{S}_4$ and $\mathbb{S}_5$.
    The tree has total size $\Sigma(T)=6$ and maximum size $s(T)=3$.
   }
    \label{fig:pattern_tree}
\end{figure}
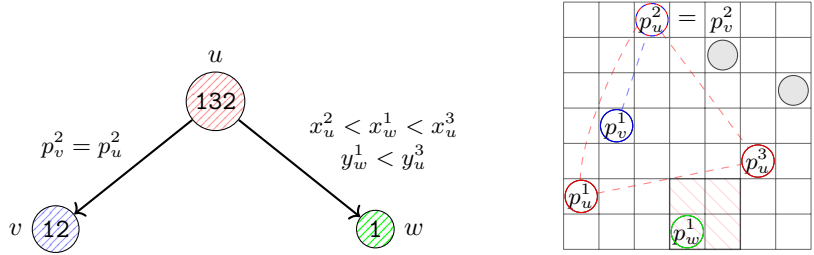


\paragraph{Pattern-Tree Constraints.} Any pattern-tree $T$ defines constraints $\mathcal{C}(T)$ over \textit{points} $p(T)$:
\begin{enumerate}
    \item Every vertex $v$ labeled by $\tau_v \in \mathbb{S}_r$ contributes the following inequalities,\footnote{These vertex-constraints  enforce the pattern $\tau_v$ over the points $p_v$.} 
    \[
        x_v^1 < x_v^2 < \cdots < x_v^r, \text{ and } y_v^i < y_v^j \text { for all $i,j\in [r]$ such that $\tau_v(i)<\tau_v(j)$}.
    \]
    \item Every edge $u \to v$ contributes the inequalities in $\mathcal{P}_{uv}^x$ and $\mathcal{P}_{uv}^y$, and the equalities in $E_{uv}$.
\end{enumerate}
Hereafter, we partition $\mathcal{C}(T)$ into two parts: its \textit{equalities},
which define an equivalence relation $E^T \eqdef \bigsqcup_{u \to v} E_{uv}$ over the points $p(T)$,
and its \textit{inequalities}, which define \textit{strict} posets,
\[
    \mathcal{P}_x^T = \Big(\bigsqcup_{v \in V(T)} x_v, < \Big), \text{ and }\mathcal{P}_y^T = \Big(\bigsqcup_{v \in V(T)} y_v, < \Big).
\]

Given an equivalence relation $E \supseteq E^T$, the posets $\mathcal{P}_x^{E}$ and $\mathcal{P}_y^{E}$
are the strict posets obtained from $\mathcal{P}_x^T$ and $\mathcal{P}_y^T$ by replacing every
coordinate variable corresponding to a point $p \in p(T)$ by a single variable corresponding to the equivalence class of $p$ in $E$.

\begin{exmp}
The pattern-tree $T$ appearing in \Cref{fig:pattern_tree} corresponds to the constraints
\[
    \mathcal{C}(T) = \left\{ x_u^1 < x_u^2 < x_u^3,\ y_u^1 < y_u^3 < y_u^2,\ x_u^2 < x_w^1 < x_u^3,\ y_w^1 < y_u^3,\ p_v^2 = p_u^2,\ x_v^1 < x_v^2,\ y_v^1 < y_v^2 \right\}, 
\]
whose posets are:
\begin{figure}[H]
    \centering
    \begin{tikzpicture}[scale=0.4, every node/.style={inner sep=0,outer sep=0}]
        \draw[fill=none] (-2, 0) node {\small $\mathcal{P}_x^T$:};
        \draw (0,0) node[draw,circle, minimum size=0.1cm, inner sep=0.5pt, text opacity=1, pattern=north east lines, distance=10pt, pattern color=red, fill opacity=0.3] (x_u_1) {\scriptsize $x_u^1$};
        \draw (3,0) node[draw,circle, minimum size=0.1cm, inner sep=0.5pt, text opacity=1, pattern=north east lines, distance=10pt, pattern color=red, fill opacity=0.3] (x_u_2) {\scriptsize $x_u^2$};
        \draw (6,-1) node[draw,circle, minimum size=0.1cm, inner sep=0.5pt, text opacity=1, pattern=north east lines, distance=10pt, pattern color=red, fill opacity=0.3] (x_u_3) {\scriptsize $x_u^3$};
        \draw (6,1) node[draw,circle, minimum size=0.1cm, inner sep=0.5pt, text opacity=1, pattern=north east lines, distance=10pt, pattern color=green, fill opacity=0.3] (x_w_1) {\scriptsize $x_w^1$};
        \draw (9,0) node[draw,circle, minimum size=0.1cm, inner sep=0.5pt, text opacity=1, pattern=north east lines, distance=10pt, pattern color=blue, fill opacity=0.3] (x_v_1) {\scriptsize $x_v^1$};
        \draw (12,0) node[draw,circle, minimum size=0.1cm, inner sep=0.5pt, text opacity=1, pattern=north east lines, distance=10pt, pattern color=blue, fill opacity=0.3] (x_v_2) {\scriptsize $x_v^2$};
        \draw [->] (x_u_1) -- (x_u_2) node[midway,above left] {};
        \draw [->] (x_u_2) -- (x_u_3) node[midway,above left] {};
        \draw [->] (x_u_2) -- (x_w_1) node[midway,above left] {};
        \draw [->] (x_v_1) -- (x_v_2) node[midway,above left] {};
    \end{tikzpicture}
    \hspace{0.8cm}
    \begin{tikzpicture}[scale=0.4, every node/.style={inner sep=0,outer sep=0}]
        \draw[fill=none] (-2, 0) node {\small $\mathcal{P}_y^T$:};
        \draw (0,-1) node[draw,circle, minimum size=0.1cm, inner sep=0.5pt, text opacity=1, pattern=north east lines, distance=10pt, pattern color=red, fill opacity=0.3] (y_u_1) {\scriptsize $y_u^1$};
        \draw (6,0) node[draw,circle, minimum size=0.1cm, inner sep=0.5pt, text opacity=1, pattern=north east lines, distance=10pt, pattern color=red, fill opacity=0.3] (y_u_2) {\scriptsize $y_u^2$};
        \draw (3,0) node[draw,circle, minimum size=0.1cm, inner sep=0.5pt, text opacity=1, pattern=north east lines, distance=10pt, pattern color=red, fill opacity=0.3] (y_u_3) {\scriptsize $y_u^3$};
        \draw (0,1) node[draw,circle, minimum size=0.1cm, inner sep=0.5pt, text opacity=1, pattern=north east lines, distance=10pt, pattern color=green, fill opacity=0.3] (y_w_1) {\scriptsize $y_w^1$};
        \draw (9,0) node[draw,circle, minimum size=0.1cm, inner sep=0.5pt, text opacity=1, pattern=north east lines, distance=10pt, pattern color=blue, fill opacity=0.3] (y_v_1) {\scriptsize $y_v^1$};
        \draw (12,0) node[draw,circle, minimum size=0.1cm, inner sep=0.5pt, text opacity=1, pattern=north east lines, distance=10pt, pattern color=blue, fill opacity=0.3] (y_v_2) {\scriptsize $y_v^2$};
        \draw [->] (y_u_1) -- (y_u_3) node[midway,above left] {};
        \draw [->] (y_u_3) -- (y_u_2) node[midway,above left] {};
        \draw [->] (y_w_1) -- (y_u_3) node[midway,above left] {};
        \draw [->] (y_v_1) -- (y_v_2) node[midway,above left] {};
    \end{tikzpicture}
\end{figure}
\noindent Applying $E^T = \big\{ c_1 = \{ p_u^1 \},\ c_2 = \{p_v^2, p_u^2\},\ c_3 = \{ p_u^3 \},\ c_4 = \{ p_v^1 \},\ c_5 = \{ p_w^1 \} \big\}$ yields the posets:
\begin{figure}[H]
    \centering
    \begin{tikzpicture}[scale=0.4, every node/.style={inner sep=0,outer sep=0}]
        \draw[fill=none] (-2, 0) node {\small $\mathcal{P}_x^{E^T}$:};
        \draw (0,1) node[draw,circle, minimum size=0.1cm, inner sep=0.5pt, text opacity=1, pattern=north east lines, distance=10pt, pattern color=red, fill opacity=0.3] (c_x_1) {\scriptsize $c_x^1$};
        \draw (3,0) node[draw,circle, minimum size=0.1cm, inner sep=0.5pt, text opacity=1] (c_x_2) {\scriptsize $c_x^2$};
        \begin{scope}[shift={(c_x_2.center)}]
            \clip (c_x_2.center) circle [radius=0.55cm];
            \draw[dash pattern= on 2pt off 4pt, blue, opacity=0.3] (-1.72cm, -1cm) -- (0.28cm, 1cm);
            \draw[dash pattern= on 2pt off 4pt, blue, opacity=0.3] (-1.47cm, -1cm) -- (0.53cm, 1cm);
            \draw[dash pattern= on 2pt off 4pt, blue, opacity=0.3] (-1.22cm, -1cm) -- (0.78cm, 1cm);
            \draw[dash pattern= on 2pt off 4pt, blue, opacity=0.3] (-0.97cm, -1cm) -- (1.03cm, 1cm);
            \draw[dash pattern= on 2pt off 4pt, blue, opacity=0.3] (-0.72cm, -1cm) -- (1.28cm, 1cm);
            \draw[dash pattern= on 2pt off 4pt, blue, opacity=0.3] (-0.47cm, -1cm) -- (1.53cm, 1cm);
            \draw[dash pattern= on 2pt off 4pt, blue, opacity=0.3] (-0.22cm, -1cm) -- (1.78cm, 1cm);
            \draw[dash pattern= on 2pt off 4pt, dash phase=3pt, red, opacity=0.3] (-1.72cm, -1cm) -- (0.28cm, 1cm);
            \draw[dash pattern= on 2pt off 4pt, dash phase=3pt, red, opacity=0.3] (-1.47cm, -1cm) -- (0.53cm, 1cm);
            \draw[dash pattern= on 2pt off 4pt, dash phase=3pt, red, opacity=0.3] (-1.22cm, -1cm) -- (0.78cm, 1cm);
            \draw[dash pattern= on 2pt off 4pt, dash phase=3pt, red, opacity=0.3] (-0.97cm, -1cm) -- (1.03cm, 1cm);
            \draw[dash pattern= on 2pt off 4pt, dash phase=3pt, red, opacity=0.3] (-0.72cm, -1cm) -- (1.28cm, 1cm);
            \draw[dash pattern= on 2pt off 4pt, dash phase=3pt, red, opacity=0.3] (-0.47cm, -1cm) -- (1.53cm, 1cm);
            \draw[dash pattern= on 2pt off 4pt, dash phase=3pt, red, opacity=0.3] (-0.22cm, -1cm) -- (1.78cm, 1cm);
        \end{scope}
        \draw (3,0) node[draw,circle, minimum size=0.1cm, inner sep=0.5pt, text opacity=1] (c_x_2_text) {\scriptsize $c_x^2$};
        \draw (6,-1) node[draw,circle, minimum size=0.1cm, inner sep=0.5pt, text opacity=1, pattern=north east lines, distance=10pt, pattern color=red, fill opacity=0.3] (c_x_3) {\scriptsize $c_x^3$};
        \draw (6,1) node[draw,circle, minimum size=0.1cm, inner sep=0.5pt, text opacity=1, pattern=north east lines, distance=10pt, pattern color=green, fill opacity=0.3] (c_x_5) {\scriptsize $c_x^5$};
        \draw (0,-1) node[draw,circle, minimum size=0.1cm, inner sep=0.5pt, text opacity=1, pattern=north east lines, distance=10pt, pattern color=blue, fill opacity=0.3] (c_x_4) {\scriptsize $c_x^4$};
        \draw [->] (c_x_1) -- (c_x_2) node[midway,above left] {};
        \draw [->] (c_x_2) -- (c_x_3) node[midway,above left] {};
        \draw [->] (c_x_2) -- (c_x_5) node[midway,above left] {};
        \draw [->] (c_x_4) -- (c_x_2) node[midway,above left] {};
    \end{tikzpicture}
    \hspace{0.8cm}
    \begin{tikzpicture}[scale=0.4, every node/.style={inner sep=0,outer sep=0}]
        \draw[fill=none] (-2, 0) node {\small $\mathcal{P}_y^{E^T}$:};
        \draw (0,-1) node[draw,circle, minimum size=0.1cm, inner sep=0.5pt, text opacity=1, pattern=north east lines, distance=10pt, pattern color=red, fill opacity=0.3] (c_y_1) {\scriptsize $c_y^1$};
        \draw (6,-1) node[draw,circle, minimum size=0.1cm, inner sep=0.5pt, text opacity=1] (c_y_2) {\scriptsize $c_y^2$};
        \begin{scope}[shift={(c_y_2.center)}]
            \clip (c_y_2.center) circle [radius=0.57cm];
            \draw[dash pattern= on 2pt off 4pt, dash phase=3pt, red, opacity=0.3] (-1.72cm, -1cm) -- (0.28cm, 1cm);
            \draw[dash pattern= on 2pt off 4pt, blue, opacity=0.3] (-1.72cm, -1cm) -- (0.28cm, 1cm);
            \draw[dash pattern= on 2pt off 4pt, blue, opacity=0.3] (-1.47cm, -1cm) -- (0.53cm, 1cm);
            \draw[dash pattern= on 2pt off 4pt, blue, opacity=0.3] (-1.22cm, -1cm) -- (0.78cm, 1cm);
            \draw[dash pattern= on 2pt off 4pt, blue, opacity=0.3] (-0.97cm, -1cm) -- (1.03cm, 1cm);
            \draw[dash pattern= on 2pt off 4pt, blue, opacity=0.3] (-0.72cm, -1cm) -- (1.28cm, 1cm);
            \draw[dash pattern= on 2pt off 4pt, blue, opacity=0.3] (-0.47cm, -1cm) -- (1.53cm, 1cm);
            \draw[dash pattern= on 2pt off 4pt, blue, opacity=0.3] (-0.22cm, -1cm) -- (1.78cm, 1cm);
            \draw[dash pattern= on 2pt off 4pt, dash phase=3pt, red, opacity=0.3] (-1.47cm, -1cm) -- (0.53cm, 1cm);
            \draw[dash pattern= on 2pt off 4pt, dash phase=3pt, red, opacity=0.3] (-1.22cm, -1cm) -- (0.78cm, 1cm);
            \draw[dash pattern= on 2pt off 4pt, dash phase=3pt, red, opacity=0.3] (-0.97cm, -1cm) -- (1.03cm, 1cm);
            \draw[dash pattern= on 2pt off 4pt, dash phase=3pt, red, opacity=0.3] (-0.72cm, -1cm) -- (1.28cm, 1cm);
            \draw[dash pattern= on 2pt off 4pt, dash phase=3pt, red, opacity=0.3] (-0.47cm, -1cm) -- (1.53cm, 1cm);
            \draw[dash pattern= on 2pt off 4pt, dash phase=3pt, red, opacity=0.3] (-0.22cm, -1cm) -- (1.78cm, 1cm);
        \end{scope}
        \draw (3,-1) node[draw,circle, minimum size=0.1cm, inner sep=0.5pt, text opacity=1] (c_y_2_text) {\scriptsize $c_y^2$};

        \draw (3,-1) node[draw,circle, minimum size=0.1cm, inner sep=0.5pt, text opacity=1, pattern=north east lines, distance=10pt, pattern color=red, fill opacity=0.3] (c_y_3) {\scriptsize $c_y^3$};
        \draw (0,1) node[draw,circle, minimum size=0.1cm, inner sep=0.5pt, text opacity=1, pattern=north east lines, distance=10pt, pattern color=green, fill opacity=0.3] (c_y_5) {\scriptsize $c_y^5$};
        \draw (3,1) node[draw,circle, minimum size=0.1cm, inner sep=0.5pt, text opacity=1, pattern=north east lines, distance=10pt, pattern color=blue, fill opacity=0.3] (c_y_4) {\scriptsize $c_y^4$};
        \draw [->] (c_y_1) -- (c_y_3) node[midway,above left] {};
        \draw [->] (c_y_3) -- (c_y_2) node[midway,above left] {};
        \draw [->] (c_y_4) -- (c_y_2) node[midway,above left] {};
        \draw [->] (c_y_5) -- (c_y_3) node[midway,above left] {};
    \end{tikzpicture}
\end{figure}
\end{exmp}

\paragraph{Pattern-Tree Occurrences.} As with corner-trees, we define \textit{pattern-tree occurrences}.

\begin{definition}
    An \textit{occurrence} $\varphi: p(T) \to p(\pi)$ of a pattern-tree $T$ in a permutation $\pi \in \Sn$ is a map whose \textit{image} $\varphi(p(T))$ \textit{conforms} to the constraints $\mathcal{C}(T)$.
\end{definition} 
An illustration of a pattern-tree occurrence is shown in \Cref{fig:pattern_tree}.
Note that, as with corner-trees, occurrence maps need not be injective.
We remark that some pattern-trees may have no occurrences, in any permutation $\pi \in \mathbb{S}_n$. 
For example, \raisebox{-6pt}{\begin{tikzpicture}[scale=0.2, every node/.style={inner sep=0,outer sep=0}]
	   \draw[fill=none] (-1.5, 0) node {\small $u$};
      \draw[fill=none] (15.5, 0) node {\small $v$};
      \draw (0,0) node[draw,circle, scale=0.3, minimum size=7mm, fill=black] (root) {};
	   \draw (14,0) node[draw,circle, scale=0.3, minimum size=7mm] (c) {};
	   \draw [thin, <-, scale=0.1] (c) -- (root) node[midway,below=0.5mm] {\small $p_u = p_v, x_u < x_v$};
\end{tikzpicture}} is infeasible.

\paragraph{Pattern-Tree Vectors.} As with corner-trees, one can associate a vector with every pattern-tree $T$, which is a formal integer linear combination of pattern-counts representing the number of occurrences of $T$ in any input permutation $\pi \in \Sn$. 

\begin{lemma}
    \label{lem:pattern_tree_vec}
    Let $T$ be a pattern-tree.
    The number of occurrences of $T$ in an input permutation $\pi \in \Sn$ is given by the following sum of pattern-counts, each of size at most $\Sigma(T)$:
    \[
        \#T(\pi) = \sum_{ E \supseteq E^T } \sum_{ \substack{ \sigma \in \mathcal{L}(\mathcal{P}_x^{E}) \\ \tau \in \mathcal{L}(\mathcal{P}_y^{E}) } } \pc{\left(\tau \sigma^{-1}\right)}{\pi}.
    \]
\end{lemma}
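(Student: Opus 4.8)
The plan is to prove the identity by a double-counting argument. I would first group the occurrences of $T$ in $\pi$ according to the equivalence relation they induce on the point set $p(T)$, and then, for each fixed relation, count the resulting \emph{injective} maps by expanding over pairs of linear extensions of the two merged posets.

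First I would set up the grouping. Given an occurrence $\varphi \colon p(T) \to p(\pi)$, let $E_\varphi$ be the equivalence relation on $p(T)$ with $p \sim p'$ iff $\varphi(p) = \varphi(p')$; since $\varphi$ conforms to the equalities in $\mathcal{C}(T)$ we automatically get $E_\varphi \supseteq E^T$. I would then show that $\varphi \mapsto (E_\varphi, \overline{\varphi})$, where $\overline{\varphi} \colon p(T)/E_\varphi \to p(\pi)$ is the induced \emph{injective} map, is a bijection between the occurrences of $T$ in $\pi$ and the pairs $(E, \psi)$ in which $E \supseteq E^T$ is an equivalence relation on $p(T)$ and $\psi \colon p(T)/E \to p(\pi)$ is an injective map whose image conforms to $\mathcal{P}_x^{E}$ and $\mathcal{P}_y^{E}$. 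The inverse sends $(E, \psi)$ to $\psi$ precomposed with the quotient map; injectivity of $\psi$ guarantees the induced relation of this composition is exactly $E$, and every inequality of $\mathcal{P}_x^T$ descends to the corresponding inequality of $\mathcal{P}_x^{E_\varphi}$ (similarly for $y$), which $\varphi$ satisfies. For the degenerate relations $E$ under which merging forces some class strictly below itself in $\mathcal{P}_x^E$ or $\mathcal{P}_y^E$, both $\mathcal{L}(\cdot)$ and the set of such $\psi$ are empty, so these contribute $0$ consistently on both sides.

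Next, for a fixed $E$ with $m \eqdef |p(T)/E| \le \Sigma(T)$ classes, I would count the injective conforming maps $\psi$. Each such $\psi$ assigns to every class an $x$-rank and a $y$-rank among its $m$ image points (the points of $\pi$ have pairwise distinct coordinates); these ranks define total orders $\sigma$ on the classes (by $x$) and $\tau$ (by $y$), and conformance of $\psi$ to $\mathcal{P}_x^E$ and $\mathcal{P}_y^E$ is exactly the statement $\sigma \in \mathcal{L}(\mathcal{P}_x^E)$, $\tau \in \mathcal{L}(\mathcal{P}_y^E)$. Conversely, fixing such $\sigma$ and $\tau$, a map $\psi$ realising these ranks is determined by its $m$-point image $Q \subseteq p(\pi)$: listing $Q = (p_1,\dots,p_m)$ in increasing $x$-order forces $\psi(c) = p_{\sigma(c)}$, and the $y$-rank condition then says $p_i$ has $y$-rank $\tau(\sigma^{-1}(i))$, i.e.\ that $Q$ is order-isomorphic to $\tau\sigma^{-1} \in \mathbb{S}_m$, so there are exactly $\pc{\tau\sigma^{-1}}{\pi}$ such $\psi$. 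Since $\psi$ determines $(\sigma,\tau)$ uniquely, summing over pairs gives $\sum_{\sigma,\tau}\pc{\tau\sigma^{-1}}{\pi}$ injective conforming maps for this $E$; summing over all $E \supseteq E^T$ and invoking the bijection above yields the claimed formula, with every pattern of size $m \le \Sigma(T)$.

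I expect the main obstacle to be the bookkeeping in the second step — keeping straight which rank pairs with which coordinate, so as to land on $\tau\sigma^{-1}$ rather than $\sigma\tau^{-1}$ or an inverse — together with verifying carefully that the quotient construction in the first step is a genuine bijection: that the induced map $\overline{\varphi}$ really conforms to the \emph{merged} posets $\mathcal{P}_x^{E_\varphi}, \mathcal{P}_y^{E_\varphi}$ (in particular that these are honest strict posets), and that an arbitrary admissible pair $(E,\psi)$ recomposes to a valid occurrence whose induced relation is exactly $E$ and not some coarsening.
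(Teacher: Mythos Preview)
Your proposal is correct and follows essentially the same approach as the paper: group occurrences by the equivalence relation on $p(T)$ they induce, then for each fixed relation factor the resulting injective maps into a pair of linear extensions $(\sigma,\tau)$ together with a choice of $m$ image points, identifying the pattern as $\tau\sigma^{-1}$. The paper carries this out by directly manipulating sums with indicator functions rather than naming the bijections explicitly, but the decomposition is identical; your version is slightly more careful in flagging the degenerate relations where the merged poset fails to be strict.
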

\begin{proof} 
Any occurrence $\varphi:p(T)\to p(\pi)$ assigns an $x$ coordinate in $[n]$, and $y$ coordinate $\pi(x)$, to each point $p \in p(T)$, in a way that agrees with both posets $\mathcal{P}_{x}^{T}$ and $\mathcal{P}_{y}^{T}$, and with the equivalence relation $E_T$ (i.e., equalities). The number of such assignments is the following:
\[
    \sum_{\substack{1\le x_{1} \le \cdots \le x_{\Sigma(T)}\le n \\ x_i = x_j\ \forall i,j:\ x_i \sim_{E_T} x_j }}\mathbbm{1}\left\{\text{\ensuremath{(x_{1},\ldots,x_{\Sigma(T)})} satisfies \ensuremath{\mathcal{P}_{x}^{T}}}\right\}\cdot\mathbbm{1}\left\{\text{\ensuremath{(\pi(x_{1}),\ldots,\pi(x_{\Sigma(T)}))} satisfies \ensuremath{\mathcal{P}_{y}^{T}}}\right\},
\]
where we say that $(x_{1},\ldots,x_{\Sigma(T)})$ satisfies $\mathcal{P}_{x}^{T}$,
if whenever we replace the $x$ coordinate of the $i$-th point in $p(T)$
(according to some arbitrary \textit{fixed} order on $p(T)$) with $x_i$,
all the inequalities defined by $\mathcal{P}_{x}^{T}$ hold true. Likewise the $y$ coordinates.

Any valid choice of $x_{1},\ldots,x_{\Sigma(T)}$
defines an equivalence relation $E\supseteq E^{T}$, determined by which coordinates are equal.
Let $a_1 < \cdots < a_k$ be the distinct $x$ coordinates among $x_1, \ldots, x_{\Sigma(T)}$,
where $k \eqdef |E| \le \Sigma(T)$ ($|E|$ is the number of equivalence classes in $E$).
Let $\sigma\in\mathbb{S}_{k}$ be the permutation where the $i$-th equivalence class
(assuming some arbitrary fixed order) is assigned coordinate $a_{\sigma(i)}$.
That is, all points of $p(T)$ in the $i$-th equivalence class, are mapped to the permutation
point whose $x$ coordinate is $a_{\sigma(i)}$.
Under this notation, 
\[
    (x_{1},\ldots,x_{\Sigma(T)}) \text{ satisfies }\mathcal{P}_{x}^{T} \iff (a_{\sigma(1)},\ldots,a_{\sigma(k)}) \text{ satisfies }\mathcal{P}_{x}^{E},
\]
and similarly for $(\pi(a_{\sigma(1)}),\ldots,\pi(a_{\sigma(k)}))$ and $\mathcal{P}_{y}^{E}$. 

\ \\ \noindent By rearranging the previous sum, we obtain: 
\begin{align*}
    \#T(\pi) &= \sum_{E\supseteq E^{T}}\sum_{\sigma\in\mathcal{L}(\mathcal{P}_{x}^{E})}\sum_{1 \le a_1 < \dots < a_{k} \le n}\mathbbm{1} \left\{ (\pi(a_{\sigma(1)}),\ldots,\pi(a_{\sigma(k)})) \text{ satisfies } \mathcal{P}_{y}^{E} \right\} \\
    &= \sum_{E\supseteq E^{T}}\sum_{\substack{\sigma\in\mathcal{L}(\mathcal{P}_{x}^{E}) \\ \tau  \in \mathcal{L}(\mathcal{P}_{y}^{E})}}\sum_{1 \le a_1 < \dots < a_{k} \le n}\mathbbm{1} \left\{ \pi[a_{\sigma(1)},\ldots,a_{\sigma(k)}] \cong \tau \right\} \\
    &= \sum_{E\supseteq E^{T}}\sum_{\substack{\sigma\in\mathcal{L}(\mathcal{P}_{x}^{E}) \\ \tau  \in \mathcal{L}(\mathcal{P}_{y}^{E})}}\sum_{1 \le a_1 < \dots < a_{k} \le n}\mathbbm{1} \left\{ \pi[a_{1},\ldots,a_{k}] \cong \tau \cdot \sigma^{-1} \right\}
\end{align*}
and the latter sum simply counts the occurrences of the pattern $\tau \sigma^{-1}$ in $\pi$, as required.
\end{proof}

\paragraph{Evaluating a Pattern-Tree.} It remains to construct an evaluation algorithm for the vector of a pattern-tree.
To present our algorithm, we require some notation.
\begin{enumerate}
    \item \underline{Points}: To every set of points $S \eqdef \{ s_1, \dots, s_r \} \subseteq p(\pi)$,
    where $\pi \in \Sn$ is a permutation and $(s_1)_x < \dots < (s_r)_x$,
    we associate a $2r$-dimensional point,
    \[
        p(S) \eqdef \big( (s_1)_x, \dots, (s_r)_x, (s_1)_y, \dots, (s_r)_y \big) \in [n]^{2r}
    \]

    \item \underline{Rectangles}: To every combination of an edge $(u \to v) \in E(T)$ in a pattern-tree $T$,
where $v$ and $u$ are of sizes $d$ and $r$ respectively,
and set of points $S \eqdef \{ s_1, \dots, s_r \} \subseteq p(\pi)$, 
we associate a $2d$-dimensional rectangle,
\[
    \mathcal{R}_{uv}^S \eqdef \mathcal{R}_{uv}^{S,x} \times \mathcal{R}_{uv}^{S,y} \subseteq [n]^{2d},
    \text{ where } \mathcal{R}_{uv}^{S,x}, \mathcal{R}_{uv}^{S,y} \subseteq [n]^d.
\]
The $i$-th segment of $\mathcal{R}_{uv}^{S,x}$ contains the $x$ coordinates that $x_v^i$ can take under the constraints of $u\to v$, when $x_u^j$ is assigned $(s_j)_x$. Namely, the intersection of the following segments:
\[
    \underbrace{\bigcap_{j : (p_u^j, p_v^i) \in E_{uv}} \left\{ (s_j)_x \right\}}_{\text{equals}},\; \underbrace{\bigcap_{j: (x_v^i < x_u^j) \in \mathcal{P}_{uv}^x } \left\{1, \dots, (s_j)_x - 1\right\}}_{\text{less-than}},\; \underbrace{\bigcap_{j: (x_v^i > x_u^j) \in \mathcal{P}_{uv}^x } \left\{ (s_j)_x+1, \dots, n \right\}}_{\text{greater-than}}
\]
The $y$-segments are similarly defined.    
\end{enumerate}

Observe that the rectangle $\mathcal{R}_{uv}^S$ is the set of permissible
locations for the points $p_v$, subject to the edge-constraints on the edge $u \to v$,
when the points $p_u$ are mapped to $p(S)$. That is, it enforces both the equalities (left)
and inequalities (centre, right) written on the edge $u \to v$.\ \\

The evaluation algorithm now follows.

\begin{algorithm}[H]
\caption{Bottom-Up Evaluation of Pattern-Tree Vector}
\label{alg:bottom_up_pattern_tree}
\vspace{0.2cm}\hspace*{\algorithmicindent} \textbf{Input:} A pattern-tree $T$, and a permutation $\pi \in \mathbb{S}_n$. 
\begin{enumerate}
    \item Traverse the vertices of $T$ in post-order. For every vertex $u$ labeled by $\tau_u \in \mathbb{S}_r$:
    \begin{enumerate}
        \item Construct a new (empty) rectangle-tree $\mathcal{T}_u$ of dimension $2r$.
        \item Iterate over all sets $S \eqdef \{ s_1, \dots, s_r\} \subseteq p(\pi)$. If $S \cong \tau_u$, then:
        \begin{enumerate}
            \item For every child $v$ of $u$, issue the query $\mathcal{T}_v(\mathcal{R}_{uv}^S)$.
            \item Add the weight $\prod_{u \to v} \mathcal{T}_v(\mathcal{R}_{uv}^S)$ (or $1$, if $u$ is a leaf) to point $p(S)$ in $\mathcal{T}_u$.
        \end{enumerate}
    \end{enumerate}
    \item Return the answer to the query $\mathcal{T}_{z}(\mathcal{R})$, where $z$ is the root of $T$, and $\mathcal{R} = [n]^{2 |\tau_z|}$.
\end{enumerate}
\end{algorithm}
\begin{theorem}
    \label{thm:computing_pattern_trees}
    Let $T$ be a pattern-tree of constant total size, and let $\pi \in \mathbb{S}_n$ be a permutation.
    The vector of $T$ can be evaluated over $\pi$ in $\Otilde{n^{s(T)}}$ time, where $s(T)$ is the \textit{maximum size}.\footnote{
        The space-complexity is also $\Otilde{n^{s(T)}}$,
        since at every vertex of size $r$, we insert $\le \binom{n}{r}$ points to a rectangle-tree.
    }
\end{theorem}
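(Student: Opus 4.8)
The plan is to show that Algorithm~\ref{alg:bottom_up_pattern_tree} outputs $\#T(\pi)$ within the claimed time and space; since \Cref{lem:pattern_tree_vec} identifies $\#T(\pi)$ with the evaluation of the vector of $T$ over $\pi$, this is exactly what the theorem requires. So it suffices to establish correctness and resource bounds for the algorithm. For the correctness half, I would fix, for each vertex $u$ of size $r$ and each set $S=\{s_1,\dots,s_r\}\subseteq p(\pi)$ with $(s_1)_x<\dots<(s_r)_x$, the quantity $N_u(S)$ defined as the number of maps $\varphi\colon p(T_u)\to p(\pi)$ that conform to the constraints of the subtree $T_u$ rooted at $u$ and satisfy $\varphi(p_u^i)=s_i$ for all $i$ (in particular $N_u(S)=0$ unless $S\cong\tau_u$). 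The invariant to prove, by induction along the post-order traversal, is that immediately after $u$ is processed the rectangle-tree $\mathcal{T}_u$ carries weight exactly $N_u(S)$ at the point $p(S)$ for every $S\cong\tau_u$ and weight $0$ everywhere else (indeed only such points are inserted into $\mathcal{T}_u$). Granting the invariant at the root $z$, the final query returns $\mathcal{T}_z([n]^{2|\tau_z|})=\sum_{S}N_z(S)=\#T(\pi)$.

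The base case, $u$ a leaf, is immediate: the only constraints of $T_u$ are the vertex constraints of $u$, which $S$ satisfies iff $S\cong\tau_u$, so $N_u(S)=\mathbbm{1}\{S\cong\tau_u\}$, matching the weight $1$ inserted at each such $p(S)$. For the inductive step, let $u$ be internal with children $v_1,\dots,v_m$. Since $T$ is a tree, the point sets $p(T_{v_1}),\dots,p(T_{v_m})$ and $p_u$ are pairwise disjoint, and every constraint of $\mathcal{C}(T_u)$ is a vertex constraint of $u$, an edge constraint of some $u\to v_j$ (mentioning only $p_u\cup p_{v_j}$), or a constraint internal to some $\mathcal{C}(T_{v_j})$; as occurrences need not be injective, there is no further coupling between subtrees. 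Hence, once the image $p(S)$ of $p_u$ is fixed (forcing $S\cong\tau_u$), an occurrence of $T_u$ decomposes uniquely as an independent choice, for each $j$, of an occurrence of $T_{v_j}$ whose restriction to $p_{v_j}$ lands at some $p(S_j)$ compatible with the edge $u\to v_j$; and by construction the set of compatible $p(S_j)$ is exactly the box $\mathcal{R}_{uv_j}^S$, whose $i$-th $x$- and $y$-segments encode precisely the equalities, lower bounds and upper bounds that the $i$-th coordinate of $v_j$ inherits from the fixed coordinates of $S$. Therefore
\[
    N_u(S)=\prod_{j=1}^{m}\ \sum_{S_j:\,p(S_j)\in\mathcal{R}_{uv_j}^S} N_{v_j}(S_j)=\prod_{j=1}^{m}\mathcal{T}_{v_j}\!\left(\mathcal{R}_{uv_j}^S\right),
\]
the second equality being the induction hypothesis applied to each child. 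This product is exactly the weight deposited at $p(S)$ by the algorithm, which closes the induction.

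For the complexity, since $\Sigma(T)=O(1)$ the tree has $O(1)$ vertices, each with $O(1)$ children. Processing a vertex $u$ of size $r\le s(T)$ costs one initialisation of a $2r$-dimensional rectangle-tree, a loop over the $\binom{n}{r}=O(n^{s(T)})$ size-$r$ subsets $S\subseteq p(\pi)$, and, per subset, an $O(1)$-time order-isomorphism test, at most $m=O(1)$ rectangle queries, an $O(1)$-size product, and one insertion — each of these being $\Otilde{1}$ by \Cref{prop:rect_trees}. This is $\Otilde{n^{s(T)}}$ per vertex, hence $\Otilde{n^{s(T)}}$ overall; and since each rectangle-tree receives at most $\binom{n}{r}=O(n^{s(T)})$ insertions of $\Otilde{1}$ size, the space is $\Otilde{n^{s(T)}}$ as well.

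The complexity bound is routine once \Cref{prop:rect_trees} is in hand; the delicate part is the correctness induction, and within it the two points that (i) a subtree occurrence factors as a product over the children once the root's image is fixed — which leans on the tree structure and on occurrences being allowed to be non-injective — and (ii) the feasible region for the points $p_v$ under the constraints of a single edge $u\to v$, with $p_u$ fixed, is precisely the axis-aligned box $\mathcal{R}_{uv}^S$, so that counting those placements reduces to one rectangle query. Checking (ii) carefully against the definition of $\mathcal{R}_{uv}^S$ — that no edge constraint is dropped and none is spuriously added — is where I expect the main effort to go.
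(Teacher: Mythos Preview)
Your proof is correct and follows essentially the same approach as the paper: both establish correctness of Algorithm~\ref{alg:bottom_up_pattern_tree} by induction on the tree, with the invariant that the weight at $p(S)$ in $\mathcal{T}_u$ equals the number of occurrences of $T_{\le u}$ mapping $p_u$ to $S$, and both derive the $\Otilde{n^{s(T)}}$ bound from the $\mathcal{O}(n^r)$ enumeration at each vertex together with \Cref{prop:rect_trees}. Your write-up is in fact more explicit than the paper's about the two key points you flag --- the factorisation over children (using non-injectivity) and the identification of the feasible region with the box $\mathcal{R}_{uv}^S$.
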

\begin{proof}
    The running time of \Cref{alg:bottom_up_pattern_tree} is $\Otilde{n^{s(T)}}$, since every operation takes $\Otilde{1}$ time (recall that $\Sigma(T)=\mathcal{O}(1)$),
    except step (1b.), which we perform in time $\mathcal{O}(n^r)$, by trivial enumeration.
    It remains to prove its correctness. We do so, by induction on the height of the tree.
    
    Let $u \in V(T)$ be a vertex, let $\tau_u \in \mathbb{S}_r$ be its permutation label, and let $T_{\le u}$ be the sub-tree rooted at $u$.
    We claim that for every $S \eqdef \{ s_1, \dots, s_r \} \subseteq p(\pi)$,
    the weight of $p(S)$ in $\mathcal{T}_u$ is the number of occurrences $\varphi: p(T_{\le u}) \to p(\pi)$ in which
    the points $p(u)$ are mapped to $S$.
    That is, for every $1 \le i \le r$, it holds that $\varphi(p_u^i) = s_i$. 

    In the base-case, $u$ is a leaf, and \Cref{alg:bottom_up_pattern_tree} simply enumerates over all sets $S$ of cardinality $r$,
    adding weight $1$ whenever $S \cong \tau_u$. So the claim holds.
    For the inductive step, let $u$ be an internal vertex.
    For every child $v$ of $u$, by the induction hypothesis, the query $\mathcal{T}_v(\mathcal{R}_{uv}^S)$
    counts the number of occurrences $\varphi_v: p(T_{\le v}) \to p(\pi)$ in which there exists a point $A \in \mathcal{R}_{uv}^S$
    such that $\varphi_v(p_u^i) = a_i$, for every $i$. That is, the number of occurrences of the tree
    in which we add the $u$ as the root to the tree $T_{\le v}$,
    where the occurrence maps $p_u^i$ to $s_i$ for every $i \in [r]$.
    These occurrences are \textit{independent} for every child $v$ of $u$,
    therefore picking any combination of them yields a new occurrence of $T_{\le u}$ in $\pi$, 
    the total number of which is indeed the product $\prod_{u \to v} \mathcal{T}_v(\mathcal{R}_{uv}^S)$.

    The proof now follows, as in the rectangle-tree $\mathcal{T}_z$ corresponding to the root $z$ of $T$,
    every point $S$ has weight which is the number of occurrences of $T$ in $\pi$ in which $p_z$ is mapped to $S$.
    Therefore, the sum of all points in $\mathcal{T}_z$ yields the total number of occurrences.
\end{proof}

\begin{rem}
    The algorithm presented in \Cref{thm:computing_pattern_trees} is not necessarily the most efficient way to compute the vector of a pattern-tree, for several reasons.
    Firstly, many trees may correspond to the same vector, and these trees need not have the same maximum size. For example, both \\
    \vspace{-0.3cm}
    \begin{center}
    \begin{tikzpicture}[scale=0.2, every node/.style={inner sep=0,outer sep=0}]
    	   \draw[fill=none] (-1.3, 0) node {\small $u$};
          \draw[fill=none] (16.3, 0) node {\small $v$};
          \draw (0,0) node[draw,circle, scale=0.3, minimum size=7mm, fill=black] (root) {};
    	   \draw (15,0) node[draw,circle, scale=0.3, minimum size=7mm] (c) {};
    	   \draw [thin, <-, scale=0.1] (c) -- (root) node[midway,above=0.3mm] {\small $x_u < x_v, y_u < y_v$};
    \end{tikzpicture} \ \ and\ \  \circled{$\mathtt{12}$}
    \end{center}
    correspond to the vector $\pce{12}$.
    Secondly, as we will see in \Cref{sect:5_prof_alg}, there exist vectors for which bespoke \textit{efficient} algorithms can be constructed,
    whose running time is strictly smaller than the maximum size of \textit{any} pattern-tree with the same vector. 
\end{rem}

\subsection{\texorpdfstring{$\Otilde{n^2}$}{O(n\^2)} Algorithm for the \texorpdfstring{$k$}{k}-Profile, for \texorpdfstring{$1 \le k \le 7$}{1 <= k <= 7}}
\label{subsect:5_to_7_prof}

The corner-trees of \cite{even2021counting} are very efficiently computable.
However, asymptotically, there are quite few of them:
the number of rooted unlabeled trees over $k$ vertices is only exponential in $k$
(see, e.g., \cite{knuth1997art} for a more accurate estimate),
and clearly so is the number of corner-tree edge labels.
Therefore, as $k \to \infty$,
even if asymptotically almost all corner-trees vectors were linearly independent over $\mathbb{S}_{\le k}$, they would nevertheless contribute only a \textit{negligible} proportion with respect to the full dimension, $|\mathbb{S}_{\le k}| = \sum_{r=1}^k r!$.

In contrast, it is not hard to see that pattern-trees are \textit{fully expressive}:
for every pattern $\tau \in \mathbb{S}_k$,
there exists a pattern-tree $T$ with $s(T)=k$, whose vector is precisely that pattern
(in fact, $s(T)=\lceil k/2 \rceil$ suffices, see \Cref{subsect:k_over_2_alg}).
To design \textit{efficient} algorithms for the $k$-profile,
we are interested in finding families of pattern-trees of \textit{least maximum size},
whose corresponding vectors are linearly independent.

In \cite{even2021counting}, corner-trees (i.e., pattern-tree of maximum size $1$) over $k$ vertices
were shown to have full rank over $\QQ^{\mathbb{S}_{\le k}}$ for $k=3$, and
in the cases $k=4$ and $k=5$, the subspaces spanned by them,
restricted to $\mathbb{S}_4$ and $\mathbb{S}_5$,
were found to be of dimensions only $23$ and $100$, respectively.
Here, we show that for $k \le 7$, pattern-trees of maximum size $\le 2$ suffice. 

\begin{proof}
[Proof of \Cref{thm:fast_le_7_profile}.] Let $\mathbb{S} \eqdef \bigsqcup_{k=1}^7 \mathbb{S}_k$.
By enumeration (see \Cref{sect:enumeration_pattern_trees}),
there exists a family of $\sum_{k=1}^7 k! = 5913$ pattern-trees of maximum size at most $2$ and total size at most $7$,
whose vectors are linearly independent over $\QQ^{\mathbb{S}}$.
Let $A \in \QQ^{\mathbb{S} \times \mathbb{S}}$ be the matrix whose rows are these vectors,
and let $A^{-1} \in \QQ^{\mathbb{S} \times \mathbb{S}}$ be its inverse.
$A$ may be computed ahead of time, as can its inverse, for example using Bareiss' algorithm \cite{bareiss1968sylvester}.
Using \Cref{thm:computing_pattern_trees}, evaluate every row of $A$ over $\pi$ in time $\Otilde{n^2}$.
This yields a vector $v \in \ZZ^{\mathbb{S}}$,
and the $k$-profiles of $\pi$, for $k \le 7$, are obtained by computing $A^{-1} v$.    
\end{proof}

\subsection{The case \texorpdfstring{$k=8$}{k=8}}
\label{subsect:s_8_not_spanned}
Do pattern-trees over at most $8$ points, and with $s(T) \le 2$, have full dimension for $\mathbb{S}_{\le 8}$?
Using a computer program, we exhaustively enumerate all pattern-trees with the following properties,\footnote{
See \Cref{sect:enumeration_pattern_trees} for a description of the enumeration process.
For $k=8$, this yields a matrix with $|\mathbb{S}_8|=8!$ columns, and $|\mathbb{S}_8 \times \mathbb{S}_8 \times \{ T_\lambda \}| \approx 2^{37}$ rows.
We remark that we explicitly do not consider pattern-trees over \textit{more} than $8$ points, and trees whose edges are labeled by equalities.
Whether this is without loss of generality, i.e., could their inclusion increase the rank, is unknown to us.
}
\begin{enumerate}
    \item Every tree has $|p(T)| = 8$ points, and maximum size $s(T) \le 2$. 
    \item No edge is labeled with an equality.
\end{enumerate}

In \cite{even2021counting} it was shown that pattern-trees with $4$ vertices and maximum size $1$ (corner-trees)
span a subspace of dimension only $|\mathbb{S}_4|-1 = 23$, when restricted to $\mathbb{S}_4$.
Our pattern-trees extend this result: 
the subspace spanned by the above family of pattern-trees,
with $8$ points and maximum size $\le 2$,
is of dimension exactly $|\mathbb{S}_8|-1 = 40319$, when restricted to $\mathbb{S}_8$.
The two vectors
spanning the orthogonal complements of the subspaces for $\mathbb{S}_4$ and $\mathbb{S}_8$,
$v_4 \in \QQ^{\mathbb{S}_4}$ and $v_8 \in \QQ^{\mathbb{S}_8}$ respectively,
bear striking resemblance, as we detail below.

One of the central components in the $4$-profile algorithm of \cite{dudek2020counting} is the 
classification of patterns in $\mathbb{S}_4$ into two sets: \textit{trivial} and \textit{non-trivial}.
A pattern $\tau \in \mathbb{S}_4$ is called non-trivial
if its four points appear each in a different quadrant of the square $[4]^2$.
A pattern is called trivial otherwise.
An occurrence of a non-trivial permutation $\tau$ in a permutation $\pi \in \mathbb{S}_n$, in which
the points of $\tau$ appear in the four quadrants of $[n]^2$, is called $4$-partite.

There are $16$ non-trivial patterns in $\mathbb{S}_4$, and they exactly form the support of the vector $v_4$.
Half appear with magnitude $1$, and half with magnitude $-1$. Clearly this implies that all trivial patterns
can be counted in quasi-linear time (see \Cref{prop:compute_corner_tree}).
In fact, Dudek and Gawrychowski \cite{dudek2020counting} observe that the only ``hard case'' in computing the $4$-profile
is counting the $4$-partite occurrences of the non-trivial patterns, 
and prove a  bidirectional reduction between enumerating such occurrences, and counting $4$-cycles in sparse graphs.

At the heart of their algorithm for $4$-partite occurrences
lies an observation regarding the symmetries of the non-trivial permutations:
they are closed both under the action of $D_4 \acts \mathbb{S}_4$,
\textit{and} the action of swapping the first two points
(i.e., reflecting the \textit{left half} of the square horizontally). \ \\

We extend all of the above characterisations to $\mathbb{S}_8$, as follows.
Say that a pattern $\tau \in \mathbb{S}_8$ is \textit{non-trivial} if it
satisfies the following:

\begin{enumerate}
    \item Each quadrant contains exactly two points.
    \item The number of ascending (resp. descending) pairs in the four quadrants is odd.
    \item Every \textit{half} (top, bottom, left and right) of $\tau$ is a non-trivial permutation in $\mathbb{S}_4$.
\end{enumerate}
We call a pattern trivial otherwise.

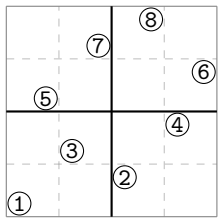
\begin{figure}[H]
    \centering
    \begin{tikzpicture}[scale=0.35]

        \foreach \x in {0, 4, 8} {
            \draw[color=gray] (\x, 0) -- (\x, 8);
        }
        \foreach \x in {2,6} {
            \draw[color=gray, opacity=0.5, dashed] (\x, 0) -- (\x, 8);
        }

        \foreach \y in {0, 4, 8} {
            \draw[color=gray] (0, \y) -- (8, \y);
        }
        \foreach \y in {2,6} {
            \draw[color=gray, opacity=0.5, dashed] (0, \y) -- (8, \y);
        }

        \draw[color=black, thick] (0, 4) -- (8, 4);
        \draw[color=black, thick] (4, 0) -- (4, 8);
        
        \draw (0.5,0.5) node[draw,circle, minimum size=0.1cm, inner sep=0.5pt, text opacity=1] (1) {\small $\mathtt{1}$};
        \draw (1.5,4.5) node[draw,circle, minimum size=0.1cm, inner sep=0.5pt, text opacity=1] (5) {\small $\mathtt{5}$};
        \draw (2.5,2.5) node[draw,circle, minimum size=0.1cm, inner sep=0.5pt, text opacity=1] (3) {\small $\mathtt{3}$};
        \draw (3.5,6.5) node[draw,circle, minimum size=0.1cm, inner sep=0.5pt, text opacity=1] (7) {\small $\mathtt{7}$};
        \draw (4.5,1.5) node[draw,circle, minimum size=0.1cm, inner sep=0.5pt, text opacity=1] (2) {\small $\mathtt{2}$};
        \draw (5.5,7.5) node[draw,circle, minimum size=0.1cm, inner sep=0.5pt, text opacity=1] (8) {\small $\mathtt{8}$};
        \draw (6.5,3.5) node[draw,circle, minimum size=0.1cm, inner sep=0.5pt, text opacity=1] (4) {\small $\mathtt{4}$};
        \draw (7.5,5.5) node[draw,circle, minimum size=0.1cm, inner sep=0.5pt, text opacity=1] (6) {\small $\mathtt{6}$};
    \end{tikzpicture}
    \caption{A non-trivial pattern $\tau = \mathtt{15372846} \in \mathbb{S}_8$. 
    There are three ascending pairs in its quadrants, and one descending pair.
    Its halves are order-isomorphic to the non-trivial permutations, $\mathtt{1342}$ (top), $\mathtt{1324}$ (bottom, left) and $\mathtt{1423}$ (right).}
    \label{fig:nontrivial-8}
\end{figure}

There are $2048$ non-trivial permutations in $\mathbb{S}_8$.
The support of the vector $v_8$ consists exactly of the non-trivial patterns of $\mathbb{S}_8$.
Again, half appear with magnitude $1$, and the other half (which are the vertical or
horizontal reflections of the first set) appear with magnitude $-1$.
This of course implies that all trivial patterns can be counted in $\Otilde{n^2}$-time. 
One can further extend the analogy to \cite{dudek2020counting} by noting that all
non-trivial patterns are closed under the action of $D_4 \acts \mathbb{S}_8$, \textit{and} the actions
of swapping the first two elements, or the first two pairs
(i.e., reflecting the left quarter-strip, or the left half of the square horizontally).

We find the emergence of this ``pattern'' of non-trivial permutations and their relation to pattern-trees
to be highly interesting. In fact, in direct analogy to \cite{dudek2020counting},
we conjecture that, as with $\mathbb{S}_4$, 
the occurrences of non-trivial patterns $\tau \in \mathbb{S}_8$ in a permutation $\pi \in \mathbb{S}_n$,
in which the points of $\tau$
appear in the above configuration within the square $[n]^2$,
constitute the ``hard case'' for computing the $8$-profile.
Settling this question, as well as understanding the (possibly algebraic) relation between pattern-trees
of maximum size $\le s$, and non-trivial permutations, are left as open questions.\footnote{
Another possible extension of the analogy with regards to \cite{dudek2020counting} is the following: 
it is known that for $3 \le k \le 7$, the number of length-$k$ cycles in an $n$-vertex graph can be counted
in time $\Otilde{n^\omega}$ \cite{alon1997finding}, where $\omega$ is the exponent of matrix multiplication.
Whether this cutoff at $k=8$ relates to the $8$-profile problem is unknown to us.
}

\subsection{\texorpdfstring{$\Otilde{n^{\lceil k / 2 \rceil}}$}{O(nk/2)} Algorithm for the \texorpdfstring{$k$}{k}-Profile}
\label{subsect:k_over_2_alg}

We end this section by considering the problem of computing the $k$-profile via pattern-trees, for arbitrary (fixed) $k$.
In the following proposition, we show that families of pattern-trees of maximal size $s(T)=\lceil k / 2 \rceil$
suffice for computing the $k$-profile, through \Cref{alg:bottom_up_pattern_tree}.
See \Cref{sect:discussion} for a discussion on the relationship between $s(T)$ and $k$.\

\begin{proposition}
    \label{prop:k_over_2_family}
    Let $\pi \in \Sn$ be an input permutation, and let $k\ge 2$ be a fixed integer. The $k$-profile of $\pi$ can be computed in $\Otilde{n^{\lceil k / 2 \rceil}}$ time.
\end{proposition}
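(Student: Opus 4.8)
The plan is to bypass linear algebra entirely and instead exhibit, for \emph{every} pattern $\tau \in \mathbb{S}_k$, a single pattern-tree $T_\tau$ of maximum size $s(T_\tau) = \lceil k/2 \rceil$ whose vector is \emph{exactly} $\pce{\tau}$. Write $a \eqdef \lceil k/2 \rceil$ and $b \eqdef k - a = \lfloor k/2 \rfloor \le a$. Given $\tau$, split its points according to $x$-rank: let $\tau_u \in \mathbb{S}_a$ be the pattern on the $a$ points of smallest $x$-coordinate (i.e.\ positions $1,\dots,a$ of $\tau$), and $\tau_v \in \mathbb{S}_b$ the pattern on the remaining $b$ points. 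Let $T_\tau$ be the two-vertex tree with root $u$ labeled $\tau_u$ and single child $v$ labeled $\tau_v$, with $E_{uv} = \emptyset$, with $\mathcal{P}^x_{uv}$ containing $x_u^i < x_v^j$ for all $i \in [a]$, $j \in [b]$, and with $\mathcal{P}^y_{uv}$ containing $y_u^i < y_v^j$ whenever $\tau(i) < \tau(a+j)$ and $y_v^j < y_u^i$ whenever $\tau(a+j) < \tau(i)$. This is a legal pattern-tree (the $y$-relations are consistent since they are read off the genuine total order $\tau$), and $s(T_\tau) = \max(a,b) = a = \lceil k/2 \rceil$.

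The main step is to check, via \Cref{lem:pattern_tree_vec}, that $\#T_\tau(\pi) = \pc{\tau}{\pi}$. First, combining the vertex constraints ($x_u^1 < \cdots < x_u^a$, $x_v^1 < \cdots < x_v^b$) with the edge constraints ($x_u^i < x_v^j$ for all $i,j$), the poset $\mathcal{P}_x^T$ is already the total order $x_u^1 < \cdots < x_u^a < x_v^1 < \cdots < x_v^b$, and likewise $\mathcal{P}_y^T$ is a total order (the $y$-order within each vertex is fixed by $\tau_u$, $\tau_v$, and all cross-comparisons are fixed by $\tau$). Hence each has a unique linear extension. Next, any equivalence relation $E \supsetneq E^T$ merges two points $p, p'$; because the $x$-variables of $p$ and $p'$ are comparable in the (total) order $\mathcal{P}_x^T$, the quotient poset $\mathcal{P}_x^E$ contains $x_c < x_c$ for the merged class $c$, so $\mathcal{L}(\mathcal{P}_x^E) = \emptyset$ and that term of the sum vanishes. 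Thus only $E = E^T$ survives, contributing the single term $\pc{\tau \sigma^{-1}}{\pi}$ where $\sigma$, $\tau$ are the unique linear extensions of $\mathcal{P}_x^{E^T}$, $\mathcal{P}_y^{E^T}$. Fixing the reference order on $p(T)$ to be $p_u^1, \dots, p_u^a, p_v^1, \dots, p_v^b$ (which is the $x$-order), $\sigma$ is the identity and the $y$-linear-extension is precisely $\tau$, so $\tau\sigma^{-1} = \tau$ and $\#T_\tau(\pi) = \pc{\tau}{\pi}$.

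Finally, to assemble the algorithm: run \Cref{alg:bottom_up_pattern_tree} on each of the $k!$ trees $\{T_\tau\}_{\tau \in \mathbb{S}_k}$. By \Cref{thm:computing_pattern_trees}, each evaluation takes $\Otilde{n^{s(T_\tau)}} = \Otilde{n^{\lceil k/2 \rceil}}$ time and space, and since $k$ is fixed there are only $\mathcal{O}(1)$ of them, giving total time $\Otilde{n^{\lceil k/2 \rceil}}$ and directly recovering every entry $\pc{\tau}{\pi}$ of the $k$-profile. The only genuinely delicate point is the bookkeeping in the previous paragraph — verifying that no coarser equivalence relation contributes and that the conventions relating the reference order on $p(T)$, the $x$-linear-extension $\sigma$, and the $y$-linear-extension compose to recover $\tau$ rather than some relabeling of it; everything else is immediate from \Cref{lem:pattern_tree_vec} and \Cref{thm:computing_pattern_trees}.
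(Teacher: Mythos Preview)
Your proof is correct and follows essentially the same approach as the paper: for each $\tau$ build a two-vertex pattern-tree in which every pairwise constraint is fixed by $\tau$, so that its vector is exactly $\pce{\tau}$, and then invoke \Cref{thm:computing_pattern_trees}. The paper's version is terser---it argues the bijection of occurrences directly rather than via \Cref{lem:pattern_tree_vec}, and it allows an arbitrary balanced bipartition of $p(\tau)$ rather than your specific left/right split---but the content is the same.
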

\begin{proof}
    As $k$ is fixed, it suffices to compute $\pc{\tau}{\pi}$ in $\Otilde{n^{\lceil k/2 \rceil}}$ time,
    for every pattern $\tau \in \mathbb{S}_k$.
    Let $\tau\in\mathbb{S}_k$ be a pattern, and let $S_1\sqcup S_2=p(\tau)$ be a partition of the points of $\tau$.
    Let $\sigma_1$ and $\sigma_2$ be the patterns for which $S_1\cong \sigma_1$ and $S_2\cong \sigma_2$.
    Consider a pattern-tree $T$ with two vertices labeled $\sigma_1$ and $\sigma_2$,
    and the edge between them constraining every pair of points according to $p(\tau)$.
    Notice that any constraint can be fixed by either a vertex or an edge.
    Therefore, there is a one-to-one correspondence between occurrences of $\tau$ and of $T$,
    so $\pcn{T}{\pi}=\pcn{\tau}{\pi}$.
    We can take $S_1,S_2$ such that the cardinality of no part exceeds $\lceil k/2 \rceil$, and the claim now follows from \Cref{thm:computing_pattern_trees}.
\end{proof}

\section{\texorpdfstring{$\Otilde{n^{7/4}}$}{O(n 7/4)} Algorithm for the \texorpdfstring{$5$}{5}-Profile}
\label{sect:5_prof_alg}

In \Cref{sect:pattern_trees}, we recalled that pattern-trees of maximum size $1$ (i.e., corner-trees)
have full rational rank for $\mathbb{S}_{\le 3}$ \cite{even2021counting},
and proved that trees of maximal size at most $2$ have full rank for $\mathbb{S}_{\le 7}$ (see \Cref{thm:fast_le_7_profile}). 
Therefore, up to $k=3$, the $k$-profile of an $n$-element permutation can be computed in $\Otilde{n}$ time,
and up to $k=7$, it is computable in $\Otilde{n^2}$ time.
This naturally raises the question: is there a sub-quadratic time algorithm for these cases, where $k \ge 4$?
We prove the following.

\begin{thm}
    \label{thm:fast-5-prof}
    The $5$-profile of any $n$-element permutation can be computed in time $\Otilde{n^{7/4}}$.
\end{thm}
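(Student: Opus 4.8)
The plan is to push the pattern-tree framework below the quadratic barrier by identifying a small set of ``hard'' vectors not spanned by corner-trees, and evaluating those vectors directly with bespoke sub-quadratic algorithms rather than invoking \Cref{alg:bottom_up_pattern_tree}. First I would determine, by the same computer-assisted enumeration used in \Cref{subsect:5_to_7_prof}, the rational subspace $V \subseteq \QQ^{\mathbb{S}_{\le 5}}$ spanned by all corner-trees (pattern-trees of maximum size $1$) on at most $5$ vertices; by \cite{even2021counting} this has codimension exactly $|\mathbb{S}_4|-1$ complement plus whatever is missing in $\mathbb{S}_5$, and in particular $\dim V$ restricted to $\mathbb{S}_5$ is $100$, so $24$ extra independent directions are required for the full $5$-profile. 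The goal is then to exhibit $24$ pattern-tree vectors that (i) together with the corner-tree vectors span all of $\QQ^{\mathbb{S}_{\le 5}}$, and (ii) can each be evaluated over an input $\pi\in\Sn$ in $\Otilde{n^{7/4}}$ time. Once such an evaluable spanning family is in hand, the theorem follows exactly as in the proof of \Cref{thm:fast_le_7_profile}: form the change-of-basis matrix $A$ over $\QQ^{\mathbb{S}_{\le 5}}$, precompute $A^{-1}$, evaluate every row over $\pi$ within the stated budget, and read off the $5$-profile as $A^{-1}v$.

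The heart of the argument is the construction of the fast evaluators for the extra directions. Following the outline in the introduction, I would realize these directions using ``gadget'' pattern-trees whose non-corner structure is concentrated around the patterns $\mathtt{3214}$ and $\mathtt{43215}$. The first step is a weighted generalization of the $\mathtt{3214}$-counting algorithm of \cite{even2021counting}: given a permutation in which points carry polynomially bounded weights and one must count (weighted) occurrences of the relevant sub-configuration, I would split on a threshold parameter $t$, handling configurations whose middle coordinates are ``dense'' versus ``sparse'' separately, balancing the two cases to get $\Otilde{n^{5/3}}$. The second, harder step extends this to the $\mathtt{43215}$-gadget. Here a single threshold no longer suffices; instead I would introduce the \emph{pair-rectangle-tree} data structure, which augments an ordinary $d$-dimensional rectangle-tree (\Cref{prop:rect_trees}) so that it can answer dominance-style queries about \emph{pairs} of points lying inside a query rectangle in poly-logarithmic time. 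Using it, the $\mathtt{43215}$-gadget evaluation again decomposes by a density threshold, now tuned to yield $\Otilde{n^{7/4}}$. Finally, I would verify computationally that the corner-trees augmented by these gadget-trees (and their $D_4$-images, which cost nothing by the preprocessing remark in \Cref{sect:prelims}) span all of $\QQ^{\mathbb{S}_{\le 5}}$.

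The main obstacle I anticipate is the $\mathtt{43215}$-gadget and the correctness and complexity analysis of the pair-rectangle-tree. Getting a data structure that supports insertions and pair-dominance queries in poly-logarithmic time while keeping total space manageable is delicate: one must carefully define what is stored at each node of the underlying rectangle-tree so that a query can be decomposed into $\polylog(n)$ canonical pieces, each contributing a precomputed aggregate over pairs, without double-counting pairs that straddle two canonical pieces. The threshold-balancing that turns these primitives into an $\Otilde{n^{7/4}}$ bound is then a matter of bookkeeping, but it hinges on the data structure delivering exactly the right query primitive. A secondary obstacle is purely verification-theoretic: confirming that the chosen gadgets, together with all corner-trees, actually attain full rank over $\QQ^{\mathbb{S}_{\le 5}}$ — if they fall short, one must enlarge the gadget family (e.g.\ by composing gadgets into slightly larger trees) while preserving the $\Otilde{n^{7/4}}$ evaluation bound, which could require a further round of algorithm design.
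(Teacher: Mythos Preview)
Your high-level plan matches the paper almost exactly: corner-trees plus gadget vertices built around the marked counts $\pcwe{321\underline{4}}$ and $\pce{4321\underline{5}}$, closed under the $D_4$-action, are shown by computer enumeration to span $\QQ^{\mathbb{S}_{\le 5}}$, and the profile is recovered by the same change-of-basis argument as in \Cref{thm:fast_le_7_profile}. The weighted $\mathtt{3214}$ algorithm at $\Otilde{n^{5/3}}$ via a strip parameter is also exactly what the paper does.

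The one substantive misconception is in your specification of the pair-rectangle-tree. You aim for a structure with poly-logarithmic query time for ``how many descending pairs lie in $\mathcal{R}$''; that target is both unnecessarily strong and likely unattainable with sub-quadratic preprocessing, and chasing it is precisely the obstacle you anticipate. The paper instead builds a \emph{tradeoff} structure: preprocessing time $\Otilde{n^2/q}$ and query time $\Otilde{q}$, for a free parameter $q$. Concretely, one precomputes $\Oh{n/q}$ auxiliary rectangle-trees, one per width-$q$ strip, each recording for every point the number of ascending partners starting within or before that strip; a query then trims the rectangle to a strip-aligned interior (handled by four $\Otilde{1}$ inclusion--exclusion lookups) and iterates explicitly over the $\Oh{q}$ margin points. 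The $\Otilde{n^{7/4}}$ bound for $\pce{4321\underline{5}}$ then comes from balancing \emph{three} terms, $\Otilde{n^2/m}+\Otilde{n^2/q}+\Otilde{nm^2 q}$, at $m=q=\lfloor n^{1/4}\rfloor$; note that if your hoped-for polylog-query structure existed, the same strip argument would actually give $\Otilde{n^{5/3}}$, so the exponent $7/4$ is an artefact of the preprocessing cost, not of a density threshold alone.
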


We remark that the case $k=4$ has been extensively studied in \cite{dudek2020counting} and \cite{even2021counting}.
There, they construct sub-quadratic algorithms of complexities $\Oh{n^{1.478}}$ and $\Otilde{n^{3/2}}$, respectively.

\subsection{Marked and Weighted Patterns}

For the proof of \Cref{thm:fast-5-prof}, we introduce the following notation.

\paragraph{Marked Patterns.}
A \emph{marked pattern} is a pattern $\tau\in\mathbb{S}_k$ associated with an index $1\le j \le k$.
We say that a marked pattern $\tau$ occurs at index $1\le i\le n$ in $\pi\in\Sn$,
if there exists an occurrence of $\tau$ in $\pi$, in which the $j$-th $x$-coordinate is $i$.
When the marked pattern $\tau$ is short, we underline the $j$-th index to indicate that 
marked index.
For instance, $\mathtt{\underline{2}1}$ occurs in $\mathtt{132}$ at index $2$.

The \emph{marked pattern count} is a $2$-dimensional rectangle-tree containing the points $p(\pi)$,
in which the weight of every point $(i,\pi(i))$ is the number of marked pattern occurrences at position $i$.
For example, the tree $\mathcal{T}_2$ appearing in \Cref{prop:count-123k} is precisely the marked pattern count
$\pc{1\underline{2}}{\pi}$.

\paragraph{Weighted Pattern Counts.}
Let $\pi\in\Sn$ and let $w_1,\ldots,w_k:[n]\to\ZZ$ be \emph{weight functions}, where $k\ge 1$ is a fixed integer. The \emph{weighted pattern count} of $\tau\in\mathbb{S}_k$ in $\pi$, denoted $\pcw{\tau}{\pi}$, is the sum of $\prod_{j=1}^k w_j(i_j)$ over all occurrences $1\le i_1 < \cdots < i_k \le n$ of $\tau$ in $\pi$.
In other words, we count occurrences where every point has weight depending on its position, rather than $1$ as usual.
\ \\

The two concepts of marked patterns and weighted patterns can be combined in a straightforward way:
the \emph{weighted marked pattern count} is once again defined as a $2$-dimensional rectangle-tree,
as with marked pattern counts, but where 
now the number of occurrences for each point $(i,\pi(i))$ is appropriately weighted.

\subsection{An Improvement to the Bottom-Up Algorithm}
\label{subsect:bottom-up-improve}

Recall that \Cref{alg:bottom_up_pattern_tree} has time complexity $\Otilde{n^{s(T)}}$, where $s(T)$ is an integer.
As we seek sub-quadratic algorithms,
and since trees of $s(T)=1$ (i.e., corner-trees) do not have full rank
for $\mathbb{S}_{\le 5}$, we take an alternative approach. \ \\

\noindent Let $u$ be vertex of a pattern-tree $T$, labeled by some permutation $\tau_u \in \mathbb{S}_r$, such that:
\begin{enumerate}
    \item The incoming edge to $u$ (if any) conditions on a single point of $u$, say $p_u^l$.
    \item Each outgoing edge of $u$ (if any) is labeled by a single equality to a point of $u$.
\end{enumerate}
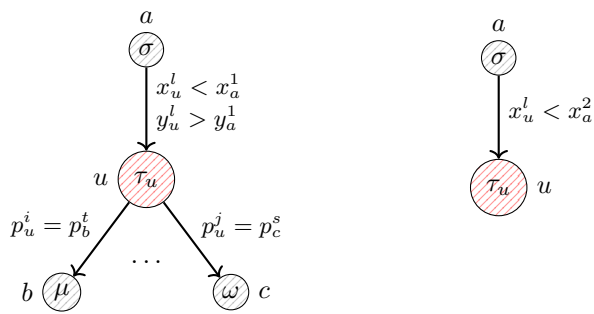
\begin{figure}[H]
    \centering
    \begin{tikzpicture}[scale=0.75]
      \draw (0,2.3) node[draw,circle, minimum size=0.3cm, inner sep=2pt,pattern=north east lines, distance=10pt, pattern color=gray, fill opacity=0.4, text opacity=1, label=above:{$a$}] (a) {$\sigma$};
      
      \draw (0,0) node[draw,circle, minimum size=0.75cm, inner sep=2pt,pattern=north east lines, distance=10pt, pattern color=red, fill opacity=0.4, text opacity=1, label=left:{$u$}] (u) {$\tau_u$};
      \draw (-1.5,-2) node[draw,circle, minimum size=0.3cm, inner sep=2pt,pattern=north east lines, distance=10pt, pattern color=gray, fill opacity=0.4, text opacity=1, label=left:{$b$}] (b) {$\mu$};
      \draw (1.5,-2) node[draw,circle, minimum size=0.3cm, inner sep=2pt,pattern=north east lines, distance=10pt, pattern color=gray, fill opacity=0.4, text opacity=1, label=right:{$c$}] (c) {$\omega$};
      
      \draw [thick, ->] (u) -- (b) node[midway,above left, yshift=-0.1cm] {\small $p_u^i=p_b^t$};
      \draw [thick, ->] (u) -- (c) node[midway,above right, yshift=-0.1cm] {\small $p_u^j=p_c^s$};
      \draw [thick, <-] (u) -- (a) node[align=left,midway,right, yshift=0.1cm] {\small $x_u^l<x_a^1$ \\ \small $y_u^l > y_a^1$};
      \node [yshift=0.4cm] (centre_dots) at ($(b)!0.5!(c)$) {$\ldots$};
    \end{tikzpicture}
    \hspace{1cm}
    \begin{tikzpicture}[scale=0.75]
      \draw (0,2.3) node[draw,circle, minimum size=0.3cm, inner sep=2pt,pattern=north east lines, distance=10pt, pattern color=gray, fill opacity=0.4, text opacity=1, label=above:{$a$}] (a) {$\sigma$};
      
      \draw (0,0) node[draw,circle, minimum size=0.75cm, inner sep=2pt,pattern=north east lines, distance=10pt, pattern color=red, fill opacity=0.4, text opacity=1, label=right:{$u$}] (u) {$\tau_u$};
      \draw (-1.5,-2) node[circle, minimum size=0.3cm, inner sep=2pt,pattern=north east lines, distance=10pt, pattern color=gray, fill opacity=0, text opacity=1, label=left:{}] (b) {};
      \draw [thick, <-] (u) -- (a) node[midway,right, yshift=0.1cm] {\small $x_u^l<x_a^2$};
    \end{tikzpicture}
    \caption{Two ``gadgets'' in a pattern-tree. The left corresponds to a weighted marked pattern count of $\tau_u$, marked at $l$. The right corresponds to a (unweighted) marked pattern count of $\tau_u$.}
    \label{fig:5-profile-special}
\end{figure}

Suppose that, for the permutation $\tau_u \in \mathbb{S}_r$ and index $l \in [r]$,
and given a set of weight functions $\{w_j\}_j$,\footnote{
We assume that for every weight function $w_j: [n] \to \ZZ$, the value $w_j(a)$ can be computed in $\Otilde{1}$-time.}
we are able to construct a $2$-dimensional rectangle-tree
representing the \textit{weighted marked pattern-count}, $\pcnw{\tau_u}{\pi}$ marked at $l$.
Then, we claim that one can \textit{modify} \Cref{alg:bottom_up_pattern_tree} by
replacing the rectangle-tree $\mathcal{T}_u$ associated with $u$, with the weighted marked pattern count
of $\tau$, marked at $l$, for a particular choice of weight functions.
Concretely, we make the following modifications in \Cref{alg:bottom_up_pattern_tree}: 
 
\paragraph{Traversing $u$.}
Instead of the routine operation of \Cref{alg:bottom_up_pattern_tree}, when $u$ is visited 
we compute a weighted $l$-marked pattern count $\pcnw{\tau_u}{\pi}$, abbreviated as $\mathcal{T}'_u$, with the following weights:
for every point $p_u^j$, define a weight function $w_j: [n] \to \ZZ$ by
\[
w_j(a) \eqdef \prod_{\substack{u\to v \\ \text{$p_u^j$ constrained}}} \mathcal{T}_v({\mathcal{R}}_{v}^{i})
\]
where for an edge $u\to v$ labeled $p_u^j=p_v^i$, we define ${\mathcal{R}}_{v}^{i}$
as the rectangle in which the $i$-th $x$-segment is $\{a\}$ and all other segments are unconstrained
(if $p_u^j$ is not constrained by any outgoing edges, set its weight function to $1$).
By the invariant of \Cref{alg:bottom_up_pattern_tree},
the query $\mathcal{T}_v(\mathcal{R}_v^i)$ counts the number of occurrences of $T_{\le v}$ in $\pi$ such that $x_v^i=a$.
Therefore, the resulting tree $\mathcal{T}'_u$ contains, at every point $(i,\pi(i))$,
the number of occurrences of $T_{\le u}$ in $\pi$ such that $x_u^l=i$. 

\paragraph{Querying $u$.} In \Cref{alg:bottom_up_pattern_tree} we query the rectangle-trees of vertices 
in two scenarios:
\begin{enumerate}
    \item \underline{If $u$ is an internal vertex}: In the original formulation of \Cref{alg:bottom_up_pattern_tree},
    when the parent $z$ of $u$ is visited, we issue queries of the form $\mathcal{T}_u(\mathcal{R}^S_{zu})$,
    for pointsets $S \subseteq p(\pi)$. As the edge $z \to u$ only constrains $p_u^l$, the
    rectangles $\mathcal{R}^S_{zu}$ are \textit{degenerate}, i.e., all of their segments are complete, except
    the two segments corresponding to $p_u^l$.
    These queries can be answered by $\mathcal{T}'_u(\mathcal{R}_u^l)$, where $\mathcal{R}_u^l \subseteq [n]^2$ is the
    $2$-dimensional projection of $\mathcal{R}_{zu}^S$ onto those two segments.
    \item \underline{If $u$ is the root}: The final step of the algorithm performs the full rectangle query $\mathcal{T}_u ([n]^{2|\tau_u|})$, which counts the occurrences of $T_{\le u}=T$ in all of $\pi$. This can be answered by the full rectangle query $\mathcal{T}'_u ([n]^2)$. 
\end{enumerate}

As for the correctness of this modification to \Cref{alg:bottom_up_pattern_tree}, it remains to show that the new queries return the same values as the original ones. Let $\mathcal{R}$ be some rectangle query to $\mathcal{T}_u$. The value of $\mathcal{T}_u(\mathcal{R})$ is the number of occurrences of $T_{\le u}$ in $\pi$, constrained to the coordinates allowed by $\mathcal{R}$. Since in both cases, all segments in $\mathcal{R}$ are complete except possibly those corresponding to $p_u^l$, this counts the occurrences of $T_{\le u}$ in $\pi$ constrained only to $p_u^l\in\mathcal{R}'$, for a $2$-dimensional projection $\mathcal{R}'$ of $\mathcal{R}$ to the corresponding segments. By definition of a weighted marked pattern count, this is exactly the value of $\mathcal{T}'_u(\mathcal{R}')$. \\

In the remainder of this section,
we design algorithms computing the pattern counts $\pcwe{321\underline{4}}$ and $\pce{4321\underline{5}}$ in sub-quadratic time.
Consequently, we can insert vertices labeled $\mathtt{3214}$ and $\mathtt{43215}$ into pattern-trees of maximum size $1$ and
with at most $5$ points.
Using the above modification to \Cref{alg:bottom_up_pattern_tree}, the overall time complexity for the 
evaluation of such trees remains sub-quadratic.

\subsection{Computing \texorpdfstring{$\pcwe{321\underline{4}}$}{\# 3214} in \texorpdfstring{$\Otilde{n^{5/3}}$}{O(n 5/3)} time}
\label{subsect:3214_gadget}

Theorem 1.2 of \cite{even2021counting} describes an algorithm for counting $\pce{3214}$. 
We require a slight alteration of their algorithm, and in particular, a weighted variant.

\begin{lemma} [weighted version of Theorem 1.2 in \cite{even2021counting}]
    Given an input permutation $\pi \in\Sn$ and weight functions $w_1,\ldots,w_4:[n]\to\ZZ$, the tree $\pcw{321\underline{4}}{\pi}$ can be computed in $\Otilde{n^{5/3}}$ time.
\end{lemma}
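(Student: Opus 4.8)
The plan is to recall the structure of the unweighted algorithm of \cite{even2021counting} for $\pce{3214}$ and thread weight functions through it, checking that no step breaks. The pattern $\mathtt{3214}$ decomposes as a descending prefix $\mathtt{321}$ followed by a single point playing the role of the ``$\mathtt{4}$''. I would mark the last point, so the goal is: for each candidate point $(i,\pi(i))$, compute the weighted count $w_4(i)$ times the number of weighted occurrences of $\mathtt{321}$ entirely to the lower-left of $(i,\pi(i))$ — that is, occurrences $i_1<i_2<i_3<i$ with $\pi(i_1)>\pi(i_2)>\pi(i_3)>\pi(i)$, each contributing $\prod_{j=1}^3 w_j(i_j)$. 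Counting weighted occurrences of a descending triple below a threshold point is itself doable in $\Otilde{n}$ by two nested passes with rectangle-trees (as in \Cref{prop:count-123k}, reversing the order relation and carrying weights), so the subtlety is not the triple per se but matching it against the fourth point efficiently for all $n$ choices of the fourth point simultaneously.

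The $\Otilde{n^{5/3}}$ bound comes from a threshold/block trick rather than naive quadratic enumeration. First I would split the $x$-axis (or the value axis) into $n^{1/3}$ contiguous blocks of width $n^{2/3}$ each, or alternatively isolate the ``$\mathtt{21}$'' sub-pattern as the expensive middle piece. Concretely, following \cite{even2021counting}: fix the ``$\mathtt{3}$'' point and the ``$\mathtt{4}$'' point; between them we must count weighted descending pairs $\mathtt{21}$ in the appropriate rectangle. There are two regimes. For ``heavy'' configurations — where many pairs share structure — one precomputes, for each of the $\Oh{n^{1/3}}$ blocks, auxiliary rectangle-trees aggregating the weighted $\mathtt{21}$-counts restricted to that block, at total cost $\Otilde{n^{1/3}\cdot n}=\Otilde{n^{4/3}}$. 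For ``light'' configurations one enumerates directly, and the block width is tuned so that direct enumeration costs $\Otilde{n\cdot n^{2/3}}=\Otilde{n^{5/3}}$, which dominates. The weights are carried inside every rectangle-tree insertion: where the unweighted algorithm inserts a point with weight $1$, we instead insert with weight $w_j(\cdot)$ for the appropriate index $j$, and where it inserts a previously computed count $c$, we insert $c\cdot w_j(\cdot)$. Since each weight value is computable in $\Otilde{1}$ by assumption, and rectangle-trees (\Cref{prop:rect_trees}) already support arbitrary $\poly(n)$-bounded weights, none of the asymptotics change.

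The main obstacle I expect is bookkeeping correctness of the weighting rather than any new algorithmic idea: in the block decomposition, a single occurrence of $\mathtt{3214}$ has its four points possibly spread across different blocks, and each point's weight must be applied exactly once regardless of which regime (heavy vs.\ light) that occurrence falls into. I would handle this by being explicit about which coordinate each block-index ranges over and which weight function is ``attached'' at each layer of the nested rectangle-tree construction — the ``$\mathtt{3}$''-layer carries $w_1$, the inner weighted-$\mathtt{21}$ counts carry $w_2$ and $w_3$, and the final marked point carries $w_4$ — and then verifying that the case split partitions the occurrences, so summing the regimes double-counts nothing and omits nothing. A secondary point to check is that the output is genuinely a $2$-dimensional rectangle-tree on $p(\pi)$ with the marked point's position as the key (so that the gadget can be plugged into the modified \Cref{alg:bottom_up_pattern_tree} as in \Cref{subsect:bottom-up-improve}), which just means the final aggregation step inserts, at each point $(i,\pi(i))$, the quantity $w_4(i)\cdot(\text{weighted }\mathtt{321}\text{-count below }(i,\pi(i)))$ into a fresh $2$-dimensional tree; this insertion loop is $\Otilde{n}$ and does not affect the bound.
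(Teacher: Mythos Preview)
Your observation that weights thread through rectangle-tree insertions without affecting the asymptotics is correct, and your final paragraph about assembling the output as a $2$-dimensional rectangle-tree keyed at the marked point is exactly right. However, the concrete decomposition you sketch has a genuine gap.

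You propose to ``fix the $\mathtt{3}$ point and the $\mathtt{4}$ point'' and then count weighted descending $\mathtt{21}$-pairs in the rectangle they determine. But there are $\Theta(n^2)$ such $(\mathtt{3},\mathtt{4})$ pairs, and counting descending pairs in an \emph{arbitrary} rectangle is not an $\Otilde{1}$ operation; indeed, that is precisely the problem the paper later introduces pair-rectangle-trees (\Cref{thm:pair_rect_tree}) to address, and even that data structure requires $\Otilde{n^2/q}$ preprocessing. Your ``heavy/light'' dichotomy is never made concrete enough to circumvent this: you do not say what is precomputed per block nor what is enumerated in the light case, and with your stated block width $n^{2/3}$ the per-block budgets do not balance (the analogue of the ``both share'' case below would cost $\Otilde{n\cdot(n^{2/3})^2}$).

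The paper's decomposition is different. It partitions the plane into horizontal and vertical strips of width $m=\lfloor n^{1/3}\rfloor$ (so $\Theta(n^{2/3})$ strips, the opposite of your parameters) and splits occurrences according to whether the point $\mathtt{4}$ shares a \emph{horizontal} strip with $\mathtt{3}$ and/or a \emph{vertical} strip with $\mathtt{1}$. When at most one of these is shared, the $\mathtt{321}$-part lies in a rectangle whose right or top boundary is strip-aligned, so one can precompute, once per strip, a weighted descending-\emph{triplet} tree (as in \Cref{prop:count-123k}) over the region below/left of that strip and answer each query in $\Otilde{1}$; total cost $\Otilde{n\cdot n/m}=\Otilde{n^{5/3}}$. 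When \emph{both} are shared, there are only $\Oh{m}$ candidates for $\mathtt{3}$ and $\Oh{m}$ candidates for $\mathtt{1}$; for each of the $\Oh{m^2}$ pairs, the remaining point $\mathtt{2}$ is counted by a \emph{single-point} rectangle query (weighted by $w_2$), not a pair query, costing $\Otilde{n m^2}=\Otilde{n^{5/3}}$ overall. The key point you are missing is that pinning \emph{both extremal} points $\mathtt{3}$ and $\mathtt{1}$ reduces the residual counting to a single point, whereas pinning $\mathtt{3}$ and $\mathtt{4}$ still leaves a pair to count.
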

\begin{proof}
Let $\pi\in\Sn$ and $w_1,\ldots,w_4$ be as in the statement.
We show how to construct a new rectangle-tree $\mathcal{T}_{out}$,
in which every point $(i,\pi(i))\in p(\pi)$ is weighted according to the
weighted count of $\mathtt{3214}$-occurrences that end in that point.
Let $m\in [n]$ be a parameter (to be chosen later).
Partition $p(\pi)$ into $\lceil n / m \rceil$ non-overlapping horizontal strips,
each of height $m$ except possibly the last one. Perform a similar partition vertically,
with strip width $m$.
Formally, a point $(i,\pi(i))\in p(\pi)$
belongs to the vertical strip indexed $\lceil i/m \rceil$ and to the horizontal strip indexed $\lceil \pi(i)/m \rceil$.
We split to cases with respect to the strips:
in any specific occurrence of $\mathtt{3214}$,
the point $\mathtt{4}$ may or may not share a horizontal strip with $\mathtt{3}$,
and may or may not share a vertical strip with $\mathtt{1}$.

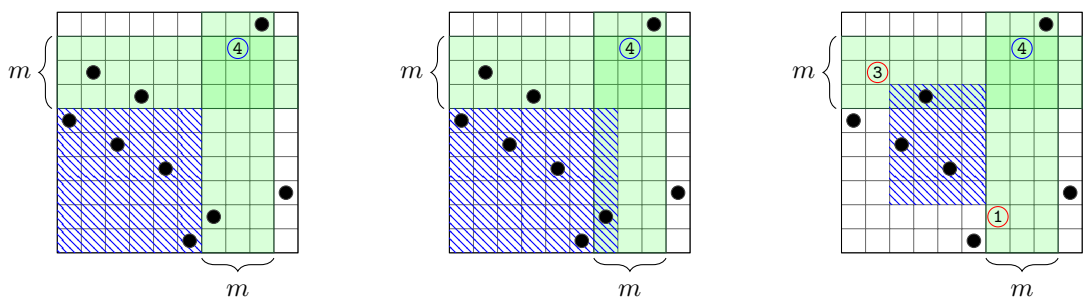
\begin{figure}[ht]
    \centering
    \definecolor{darkgreen}{HTML}{285238}
    \begin{tikzpicture}[scale=0.32]
        \draw[help lines, color=darkgray, opacity=0.8] (0,0) grid (10,10);
        \node[fit={(0,0) (10,10)}, inner sep=0pt, draw=black] (square) {};
        \node[fit={(0,9) (10,6)}, inner sep=0pt, draw=darkgreen, fill=green, fill opacity=0.15] (strip_h) {};
        \node[fit={(6,0) (9,10)}, inner sep=0pt, draw=darkgreen, fill=green, fill opacity=0.15] (strip_v) {};

        \draw [decorate,decoration={brace,amplitude=5pt,mirror,raise=0.7mm}] (0,9) -- (0,6) node[midway,xshift=-5mm]{$m$};
        \draw [decorate,decoration={brace,amplitude=5pt,mirror,raise=0.7mm}] (6,0) -- (9,0) node[midway,yshift=-5mm]{$m$};

        \node[fit={(0,0) (6,6)}, inner sep=0pt, line width=1mm, pattern=north west lines, distance=1500pt, pattern color=blue] (rect) {};
        
        \fill[black] (0.5,5.5) circle (8pt) {};
        \draw[darkgray] (0.5,5.5) circle (8pt) {};

        \fill[black] (2.5,4.5) circle (8pt) {};
        \draw[darkgray] (2.5,4.5) circle (8pt) {};

        \fill[black] (3.5,6.5) circle (8pt) {};
        \draw[darkgray] (3.5,6.5) circle (8pt) {};

        \fill[black] (5.5,0.5) circle (8pt) {};
        \draw[darkgray] (5.5,0.5) circle (8pt) {};

        \fill[black] (4.5,3.5) circle (8pt) {};
        \draw[darkgray] (4.5,3.5) circle (8pt) {};

        \fill[black] (1.5,7.5) circle (8pt) {};
        \draw[darkgray] (1.5,7.5) circle (8pt) {};

        \draw[black] (7.5,8.5)  node {\scriptsize $\mathtt{4}$};
        \draw[blue] (7.5,8.5) circle (12pt) {};

        \fill[black] (6.5,1.5) circle (8pt) {};
        \draw[darkgray] (6.5,1.5) circle (8pt) {};

        \fill[black] (8.5,9.5) circle (8pt) {};
        \draw[darkgray] (8.5,9.5) circle (8pt) {};
        
        \fill[black] (9.5,2.5) circle (8pt) {};
        \draw[darkgray] (9.5,2.5) circle (8pt) {};
    
    \end{tikzpicture}
    \hspace{1cm}
    \begin{tikzpicture}[scale=0.32]
        \draw[help lines, color=darkgray, opacity=0.8] (0,0) grid (10,10);
        \node[fit={(0,0) (10,10)}, inner sep=0pt, draw=black] (square) {};
        \node[fit={(0,9) (10,6)}, inner sep=0pt, draw=darkgreen, fill=green, fill opacity=0.15] (strip_h) {};
        \node[fit={(6,0) (9,10)}, inner sep=0pt, draw=darkgreen, fill=green, fill opacity=0.15] (strip_v) {};

        \draw [decorate,decoration={brace,amplitude=5pt,mirror,raise=0.7mm}] (0,9) -- (0,6) node[midway,xshift=-5mm]{$m$};
        \draw [decorate,decoration={brace,amplitude=5pt,mirror,raise=0.7mm}] (6,0) -- (9,0) node[midway,yshift=-5mm]{$m$};

        \node[fit={(0,0) (7,6)}, inner sep=0pt, line width=1mm, pattern=north west lines, distance=1500pt, pattern color=blue] (rect) {};
        
        \fill[black] (0.5,5.5) circle (8pt) {};
        \draw[darkgray] (0.5,5.5) circle (8pt) {};

        \fill[black] (2.5,4.5) circle (8pt) {};
        \draw[darkgray] (2.5,4.5) circle (8pt) {};

        \fill[black] (3.5,6.5) circle (8pt) {};
        \draw[darkgray] (3.5,6.5) circle (8pt) {};

        \fill[black] (5.5,0.5) circle (8pt) {};
        \draw[darkgray] (5.5,0.5) circle (8pt) {};

        \fill[black] (4.5,3.5) circle (8pt) {};
        \draw[darkgray] (4.5,3.5) circle (8pt) {};

        \fill[black] (1.5,7.5) circle (8pt) {};
        \draw[darkgray] (1.5,7.5) circle (8pt) {};

        \draw[black] (7.5,8.5)  node {\scriptsize $\mathtt{4}$};
        \draw[blue] (7.5,8.5) circle (12pt) {};

        \fill[black] (6.5,1.5) circle (8pt) {};
        \draw[darkgray] (6.5,1.5) circle (8pt) {};

        \fill[black] (8.5,9.5) circle (8pt) {};
        \draw[darkgray] (8.5,9.5) circle (8pt) {};
        
        \fill[black] (9.5,2.5) circle (8pt) {};
        \draw[darkgray] (9.5,2.5) circle (8pt) {};
    
    \end{tikzpicture}
    \hspace{1cm}
    \begin{tikzpicture}[scale=0.32]
        \draw[help lines, color=darkgray, opacity=0.8] (0,0) grid (10,10);
        \node[fit={(0,0) (10,10)}, inner sep=0pt, draw=black] (square) {};
        \node[fit={(0,9) (10,6)}, inner sep=0pt, draw=darkgreen, fill=green, fill opacity=0.15] (strip_h) {};
        \node[fit={(6,0) (9,10)}, inner sep=0pt, draw=darkgreen, fill=green, fill opacity=0.15] (strip_v) {};

        \draw [decorate,decoration={brace,amplitude=5pt,mirror,raise=0.7mm}] (0,9) -- (0,6) node[midway,xshift=-5mm]{$m$};
        \draw [decorate,decoration={brace,amplitude=5pt,mirror,raise=0.7mm}] (6,0) -- (9,0) node[midway,yshift=-5mm]{$m$};

        \node[fit={(2,2) (6,7)}, inner sep=0pt, line width=1mm, pattern=north west lines, distance=1500pt, pattern color=blue] (rect) {};
        
        \fill[black] (0.5,5.5) circle (8pt) {};
        \draw[darkgray] (0.5,5.5) circle (8pt) {};

        \fill[black] (2.5,4.5) circle (8pt) {};
        \draw[darkgray] (2.5,4.5) circle (8pt) {};

        \fill[black] (3.5,6.5) circle (8pt) {};
        \draw[darkgray] (3.5,6.5) circle (8pt) {};

        \fill[black] (5.5,0.5) circle (8pt) {};
        \draw[darkgray] (5.5,0.5) circle (8pt) {};

        \fill[black] (4.5,3.5) circle (8pt) {};
        \draw[darkgray] (4.5,3.5) circle (8pt) {};

        \draw[black] (1.5,7.5) node {\scriptsize $\mathtt{3}$};
        \draw[red] (1.5,7.5) circle (12pt) {};

        \draw[black] (7.5,8.5)  node {\scriptsize $\mathtt{4}$};
        \draw[blue] (7.5,8.5) circle (12pt) {};

        \draw[black] (6.5,1.5) node {\scriptsize $\mathtt{1}$};
        \draw[red] (6.5,1.5) circle (12pt) {};

        \fill[black] (8.5,9.5) circle (8pt) {};
        \draw[darkgray] (8.5,9.5) circle (8pt) {};
        
        \fill[black] (9.5,2.5) circle (8pt) {};
        \draw[darkgray] (9.5,2.5) circle (8pt) {};
    
    \end{tikzpicture}
    
    \vspace{-0.1in}
    \caption{An illustration of the various cases for a given point $\mathtt{4}$ (circled blue), whose strips are highlighted in green. In the case of no shared strips (left), we count the number of descending triplets in the blue-shaded area. To allow $\mathtt{4}$ and $\mathtt{1}$ to share a vertical strip, the area is extended accordingly (centre). Sharing both is depicted on the right:  $\mathtt{3}$ and $\mathtt{1}$ are selected (circled red), and it remains to count the $\mathtt{2}$'s in the blue-shaded area.}
    \label{fig:3214}
\end{figure}

\paragraph{No shared strips.}
We handle each horizontal strip separately.
Let $1\le y \le \lceil n/m\rceil$ and let $(i,\pi(i))$ be a point in the $y$-th horizontal strip, that is, $\lceil \pi(i) / m \rceil = y$. Let $x=\lceil i / m\rceil$ be the index of the point's vertical strip. In order to count the weighted number of $\mathtt{3214}$ occurrences that end in $(i,\pi(i))$
and where $\mathtt{4}$ does not share a horizontal strip with $\mathtt{3}$ nor a vertical strip with $\mathtt{1}$,
we count the weighted number of descending triplets in the rectangle $\mathcal{R}_i\eqdef[1, (x-1) m] \times [1, (y - 1)m]$,
i.e., to the left of the vertical strip and beneath the horizontal strip (see \Cref{fig:3214}).

To do this efficiently, we first consider all points below the $y$-th strip. Construct a rectangle-tree $\mathcal{T}_3$ as in \Cref{prop:count-123k} for descending patterns, except the weight of each point $(i,\pi(i))$ in each of $\mathcal{T}_1$,$\mathcal{T}_2$ and $\mathcal{T}_3$ is multiplied by $w_1(i)$, $w_2(i)$, and $w_3(i)$, respectively. Now, for every point $(i,\pi(i))$ in the strip, we query $\mathcal{T}_3(\mathcal{R}_i)$ to obtain the weighted count of descending triplets below.
Multiply this by $w_4(i)$ and add the result to $(i,\pi(i))$ in $\mathcal{T}_{out}$. There are $\Oh{m}$ points in a strip, so we handle one strip in $\Otilde{n+m}$ time. Repeating for each strip, this case takes $\Otilde{(n + m)n/m} = \Otilde{n^2/m}$ time.

\paragraph{Only sharing vertical strip with $\mathbf{\mathtt{1}}$.}
Let $y,i$ be as above, and repeat the calculation from the previous case.
Observe that querying the rectangle $\mathcal{R} = [1,i-1] \times [1,(y-1)m]$ (i.e., all points to the left of $(i, \pi(i))$
and beneath its vertical strip) counts all triplets in which $\mathtt{4}$ does not share a horizontal strip with $\mathtt{3}$
(and may or may not share a vertical strip with $\mathtt{1}$), see \Cref{fig:3214}.
Subtracting this value from that of the previous case's query gives the desired result.

\paragraph{Only sharing horizontal strip with $\mathbf{\mathtt{3}}$.}
This is a reflection of the previous case along the main diagonal. So, invoke the previous case over the
input permutation $\pi^{-1}$ (i.e., act with $sr^{-1} \in D_4$ as preprocessing, as explained in \Cref{sect:prelims}).

\paragraph{Sharing both strips.}
Iterate over every point $(i,\pi(i))\in p(\pi)$, thinking of each as a $\mathtt{4}$ in the pattern.
Then, iterate over the $\Oh{m}$ points with which it shares a horizontal strip as candidates for the $\mathtt{3}$,
and over the $\Oh{m}$ points with which it shares a vertical strip as candidates for $\mathtt{1}$.
For each pair of such candidates, if they indeed form a descending pair to the bottom-left of $(i,\pi(i))$,
the number of $\mathtt{321}$ contributed by them is exactly the amount of points in the rectangle defined by them
(see \Cref{fig:3214}).
This can be computed with a query to a $2$-dimensional rectangle-tree $\mathcal{T}$ that we construct in preprocessing.
Since these points are candidates for the $\mathtt{2}$ in the pattern,
the weight of every $(j,\pi(j))\in p(\pi)$ in $\mathcal{T}$ is $w_2(j)$.
Multiply the query result by the corresponding weights of the current candidates for $\mathtt{1}$, $\mathtt{3}$ and $\mathtt{4}$, 
and add the result to $(i,\pi(i))$ in $\mathcal{T}_{out}$.
We perform at most $\Oh{m^2}$ queries per point $(i,\pi(i))$, so this case takes $\Otilde{nm^2}$ time.

\paragraph{}
Combining the cases, the complexity is $\Otilde{n^2/m+n m^2}$, which is minimised at $\Otilde{n^{5/3}}$ by fixing $m=\lfloor n^{1/3} \rfloor$.
\end{proof}

\subsection{Computing \texorpdfstring{$\pce{4321\underline{5}}$}{\# 43215} in \texorpdfstring{$\Otilde{n^{7/4}}$}{O(n 7/4)} time}
\label{subsect:43215_gadget}

\begin{lemma}
    \label{lemma:fast_43215}
    Let $\pi\in\Sn$ be a permutation. Then, $\pc{4321\underline{5}}{\pi}$ can be computed in $\Otilde{n^{7/4}}$ time.
\end{lemma}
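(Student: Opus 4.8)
The plan is to produce the $2$-dimensional rectangle-tree $\mathcal{T}_{out}=\pc{4321\underline{5}}{\pi}$ directly: by definition of a marked pattern count, it suffices to compute, for every point $(i,\pi(i))\in p(\pi)$ played by the marked index $\mathtt{5}$, the number of occurrences of $\mathtt{4321}$ lying entirely in the lower-left rectangle $[1,i-1]\times[1,\pi(i)-1]$, and to store that number as the weight of $(i,\pi(i))$. In such an occurrence the four points of the $\mathtt{4321}$ form a descending run; call its leftmost (hence highest-valued) point $\mathtt{4}$ and its rightmost (hence lowest-valued) point $\mathtt{1}$. Only $\mathtt{1}$ can be close to $\mathtt{5}$ in the $x$-coordinate, and only $\mathtt{4}$ can be close to it in the $y$-coordinate. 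As in \Cref{subsect:3214_gadget}, I would partition $p(\pi)$ into $\lceil n/m\rceil$ horizontal strips of height $m$ and $\lceil n/m\rceil$ vertical strips of width $m$, for a parameter $m$ fixed at the end, and split the occurrences into a clean $2\times 2$ case analysis according to whether $\mathtt{5}$ shares a vertical strip with $\mathtt{1}$ and whether it shares a horizontal strip with $\mathtt{4}$.

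The three cases other than ``shares both'' mirror the $\mathtt{3214}$ gadget. For each horizontal strip indexed $y$, build (as in \Cref{prop:count-123k}) the descending $4$-chain rectangle-tree $\mathcal{T}_4$ over the points strictly below that strip; since $\mathtt{4}$ is the highest point of the run, a query $\mathcal{T}_4([1,X]\times[1,(y-1)m])$ counts exactly the $\mathtt{4321}$'s whose topmost point lies below strip $y$. For a point $\mathtt{5}$ in horizontal strip $y$ and vertical strip $x$ (so its position is $i$), the query $\mathcal{T}_4([1,(x-1)m]\times[1,(y-1)m])$ handles ``no shared strips'', while $\mathcal{T}_4([1,i-1]\times[1,(y-1)m])$ additionally admits occurrences whose point $\mathtt{1}$ enters $\mathtt{5}$'s vertical strip (no other run point can, being left of $\mathtt{1}$); subtracting the former from the latter handles ``shares a vertical strip with $\mathtt{1}$ only''. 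The symmetric case ``shares a horizontal strip with $\mathtt{4}$ only'' is obtained by running the same routine on $\pi^{-1}$, using that $\mathtt{4321\underline{5}}$ is invariant under transposition, which interchanges the roles of $\mathtt{4}$ and $\mathtt{1}$ (act with the appropriate element of $D_4$ in preprocessing, as in \Cref{sect:prelims}). Building $\lceil n/m\rceil$ chain-trees of size $\Otilde{n}$ and issuing one query per permutation point costs $\Otilde{n^2/m}$.

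The remaining case -- $\mathtt{5}$ shares a vertical strip with $\mathtt{1}$ \emph{and} a horizontal strip with $\mathtt{4}$ -- is the crux and where the new data structure enters. Here $\mathtt{1}$ ranges over the $\Oh{m}$ points of $\mathtt{5}$'s vertical strip that lie below-left of $\mathtt{5}$, and $\mathtt{4}$ over the $\Oh{m}$ points of $\mathtt{5}$'s horizontal strip that lie below-left of $\mathtt{5}$; for each valid descending pair $(\mathtt{4},\mathtt{1})$ (with $\mathtt{4}$ above-left of $\mathtt{1}$), the number of ways to complete it to a $\mathtt{4321}$ is the number of descending pairs of $\pi$ lying strictly inside the open rectangle spanned by $\mathtt{4}$ and $\mathtt{1}$. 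So, after fixing $\mathtt{5}$ and one of its $\Oh{m}$ candidates for $\mathtt{1}$, what is required is a single \emph{batched} query: sum, over the $\Oh{m}$ candidates for $\mathtt{4}$ (equivalently, over $\Oh{m}$ query rectangles all sharing the corner determined by $\mathtt{1}$), the number of descending pairs of $\pi$ inside each rectangle -- i.e.\ a dominance-counting query restricted to a rectangle. This is exactly what the pair-rectangle-tree of \Cref{subsect:pair_rect_trees} is designed to support after $\Otilde{n}$ preprocessing; assuming it answers such a batched query in roughly $\Otilde{\sqrt{n}}$ time, this case costs $\Otilde{n^{3/2}m}$ overall. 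Balancing $\Otilde{n^2/m}$ against $\Otilde{n^{3/2}m}$ gives $m=\Theta(n^{1/4})$ and total running time $\Otilde{n^{7/4}}$.

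The main obstacle is precisely this last case. In the $\mathtt{3214}$ gadget the analogous ``shares both strips'' subproblem only asked for the number of \emph{single} points inside a rectangle, a routine rectangle-tree query answered in $\Otilde{1}$; here one must instead count \emph{pairs} of points forming the requisite sub-pattern inside a whole family of query rectangles, which no single rectangle-tree can do within budget. Making this efficient -- exploiting that the query rectangles of a batch share a common corner so that dominance counts can be maintained incrementally -- is what forces the introduction of the pair-rectangle-tree, and is what pushes the final exponent from $5/3$ (for $\mathtt{3214}$) up to $7/4$.
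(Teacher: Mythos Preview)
Your high-level strategy---the strip partition, the $2\times 2$ case split on whether $\mathtt{5}$ shares a vertical strip with $\mathtt{1}$ and/or a horizontal strip with $\mathtt{4}$, counting descending \emph{quadruplets} below the strip for the first three cases, and in the ``shares both'' case enumerating $(\mathtt{4},\mathtt{1})$ candidates and reducing to a ``count descending pairs in a rectangle'' query---is exactly the paper's proof.

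The gap is in what you assume the pair-rectangle-tree does. You posit $\Otilde{n}$ preprocessing and an $\Otilde{\sqrt{n}}$-time \emph{batched} query over the $\Oh{m}$ rectangles sharing a corner. That is not what \Cref{thm:pair_rect_tree} provides: the data structure has a tunable parameter $q\in[n]$, with $\Otilde{n^2/q}$ preprocessing and $\Otilde{q}$ time per \emph{single} rectangle query---no batching, and no $\Otilde{n}$ preprocessing. Consequently the ``shares both'' case issues $\Oh{nm^2}$ individual queries (one per triple $(\mathtt{5},\mathtt{4},\mathtt{1})$, not one per $(\mathtt{5},\mathtt{1})$), costing $\Otilde{n^2/q + nm^2 q}$. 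Balancing the three terms $n^2/m$, $n^2/q$, and $nm^2 q$ forces $m=q=\Theta(n^{1/4})$ and yields $\Otilde{n^{7/4}}$. Your arithmetic happens to land on the same exponent, but only because your guessed $\sqrt{n}$ query time coincides with $mq=n^{1/2}$ at the optimum; the data structure you describe is not the one the paper constructs.
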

\begin{proof}

We adapt the $\pcwe{321\underline{4}}$ algorithm to the pattern $\pce{4321\underline{5}}$.
Once again, partition $p(\pi)$ into horizontal and vertical strips of size $m$,
and consider the following possible cases,
corresponding to whether $\mathtt{5}$ shares a vertical strip with $\mathtt{1}$ and/or a horizontal strip with $\mathtt{4}$.

\begin{figure}[ht]
    \centering
    \definecolor{darkgreen}{HTML}{285238}
    \begin{tikzpicture}[scale=0.32]
        \draw[help lines, color=darkgray, opacity=0.8] (0,0) grid (10,10);
        \node[fit={(0,0) (10,10)}, inner sep=0pt, draw=black] (square) {};
        \node[fit={(0,9) (10,6)}, inner sep=0pt, draw=darkgreen, fill=green, fill opacity=0.15] (strip_h) {};
        \node[fit={(6,0) (9,10)}, inner sep=0pt, draw=darkgreen, fill=green, fill opacity=0.15] (strip_v) {};

        \draw [decorate,decoration={brace,amplitude=5pt,mirror,raise=0.7mm}] (0,9) -- (0,6) node[midway,xshift=-5mm]{$m$};
        \draw [decorate,decoration={brace,amplitude=5pt,mirror,raise=0.7mm}] (6,0) -- (9,0) node[midway,yshift=-5mm]{$m$};

        \node[fit={(0,0) (6,6)}, inner sep=0pt, line width=1mm, pattern=north west lines, distance=1500pt, pattern color=blue] (rect) {};
        
        \fill[black] (0.5,5.5) circle (8pt) {};
        \draw[darkgray] (0.5,5.5) circle (8pt) {};

        \fill[black] (2.5,4.5) circle (8pt) {};
        \draw[darkgray] (2.5,4.5) circle (8pt) {};

        \fill[black] (3.5,6.5) circle (8pt) {};
        \draw[darkgray] (3.5,6.5) circle (8pt) {};

        \fill[black] (5.5,0.5) circle (8pt) {};
        \draw[darkgray] (5.5,0.5) circle (8pt) {};

        \fill[black] (4.5,3.5) circle (8pt) {};
        \draw[darkgray] (4.5,3.5) circle (8pt) {};

        \fill[black] (1.5,7.5) circle (8pt) {};
        \draw[darkgray] (1.5,7.5) circle (8pt) {};

        \draw[black] (7.5,8.5)  node {\scriptsize $\mathtt{5}$};
        \draw[blue] (7.5,8.5) circle (12pt) {};

        \fill[black] (6.5,1.5) circle (8pt) {};
        \draw[darkgray] (6.5,1.5) circle (8pt) {};

        \fill[black] (8.5,9.5) circle (8pt) {};
        \draw[darkgray] (8.5,9.5) circle (8pt) {};
        
        \fill[black] (9.5,2.5) circle (8pt) {};
        \draw[darkgray] (9.5,2.5) circle (8pt) {};
    
    \end{tikzpicture}
    \hspace{1cm}
    \begin{tikzpicture}[scale=0.32]
        \draw[help lines, color=darkgray, opacity=0.8] (0,0) grid (10,10);
        \node[fit={(0,0) (10,10)}, inner sep=0pt, draw=black] (square) {};
        \node[fit={(0,9) (10,6)}, inner sep=0pt, draw=darkgreen, fill=green, fill opacity=0.15] (strip_h) {};
        \node[fit={(6,0) (9,10)}, inner sep=0pt, draw=darkgreen, fill=green, fill opacity=0.15] (strip_v) {};

        \draw [decorate,decoration={brace,amplitude=5pt,mirror,raise=0.7mm}] (0,9) -- (0,6) node[midway,xshift=-5mm]{$m$};
        \draw [decorate,decoration={brace,amplitude=5pt,mirror,raise=0.7mm}] (6,0) -- (9,0) node[midway,yshift=-5mm]{$m$};

        \node[fit={(0,0) (7,6)}, inner sep=0pt, line width=1mm, pattern=north west lines, distance=1500pt, pattern color=blue] (rect) {};
        
        \fill[black] (0.5,5.5) circle (8pt) {};
        \draw[darkgray] (0.5,5.5) circle (8pt) {};

        \fill[black] (2.5,4.5) circle (8pt) {};
        \draw[darkgray] (2.5,4.5) circle (8pt) {};

        \fill[black] (3.5,6.5) circle (8pt) {};
        \draw[darkgray] (3.5,6.5) circle (8pt) {};

        \fill[black] (5.5,0.5) circle (8pt) {};
        \draw[darkgray] (5.5,0.5) circle (8pt) {};

        \fill[black] (4.5,3.5) circle (8pt) {};
        \draw[darkgray] (4.5,3.5) circle (8pt) {};

        \fill[black] (1.5,7.5) circle (8pt) {};
        \draw[darkgray] (1.5,7.5) circle (8pt) {};

        \draw[black] (7.5,8.5)  node {\scriptsize $\mathtt{5}$};
        \draw[blue] (7.5,8.5) circle (12pt) {};

        \fill[black] (6.5,1.5) circle (8pt) {};
        \draw[darkgray] (6.5,1.5) circle (8pt) {};

        \fill[black] (8.5,9.5) circle (8pt) {};
        \draw[darkgray] (8.5,9.5) circle (8pt) {};
        
        \fill[black] (9.5,2.5) circle (8pt) {};
        \draw[darkgray] (9.5,2.5) circle (8pt) {};
    
    \end{tikzpicture}
    \hspace{1cm}
    \begin{tikzpicture}[scale=0.32]
        \draw[help lines, color=darkgray, opacity=0.8] (0,0) grid (10,10);
        \node[fit={(0,0) (10,10)}, inner sep=0pt, draw=black] (square) {};
        \node[fit={(0,9) (10,6)}, inner sep=0pt, draw=darkgreen, fill=green, fill opacity=0.15] (strip_h) {};
        \node[fit={(6,0) (9,10)}, inner sep=0pt, draw=darkgreen, fill=green, fill opacity=0.15] (strip_v) {};

        \draw [decorate,decoration={brace,amplitude=5pt,mirror,raise=0.7mm}] (0,9) -- (0,6) node[midway,xshift=-5mm]{$m$};
        \draw [decorate,decoration={brace,amplitude=5pt,mirror,raise=0.7mm}] (6,0) -- (9,0) node[midway,yshift=-5mm]{$m$};

        \node[fit={(2,2) (6,7)}, inner sep=0pt, line width=1mm, pattern=north west lines, distance=1500pt, pattern color=blue] (rect) {};
        
        \fill[black] (0.5,5.5) circle (8pt) {};
        \draw[darkgray] (0.5,5.5) circle (8pt) {};

        \fill[black] (2.5,4.5) circle (8pt) {};
        \draw[darkgray] (2.5,4.5) circle (8pt) {};

        \fill[black] (3.5,6.5) circle (8pt) {};
        \draw[darkgray] (3.5,6.5) circle (8pt) {};

        \fill[black] (5.5,0.5) circle (8pt) {};
        \draw[darkgray] (5.5,0.5) circle (8pt) {};

        \fill[black] (4.5,3.5) circle (8pt) {};
        \draw[darkgray] (4.5,3.5) circle (8pt) {};

        \draw[black] (1.5,7.5) node {\scriptsize $\mathtt{4}$};
        \draw[red] (1.5,7.5) circle (12pt) {};

        \draw[black] (7.5,8.5)  node {\scriptsize $\mathtt{5}$};
        \draw[blue] (7.5,8.5) circle (12pt) {};

        \draw[black] (6.5,1.5) node {\scriptsize $\mathtt{1}$};
        \draw[red] (6.5,1.5) circle (12pt) {};

        \fill[black] (8.5,9.5) circle (8pt) {};
        \draw[darkgray] (8.5,9.5) circle (8pt) {};
        
        \fill[black] (9.5,2.5) circle (8pt) {};
        \draw[darkgray] (9.5,2.5) circle (8pt) {};
    
    \end{tikzpicture}
    
    \vspace{-0.1in}
    \caption{An illustration of the various cases for a given point $\mathtt{5}$ (circled blue). The cases are analogous to $\pcwe{321\underline{4}}$ (see \Cref{fig:3214}). In the case of $\mathtt{5}$ sharing with at most one of $\mathtt{1},\mathtt{4}$ (left and centre), we count descending quadruplets instead of triplets as in $\pcwe{321\underline{4}}$. In the case of sharing both (right),  $\mathtt{4}$ and $\mathtt{1}$ are selected (circled red), and it remains to count descending pairs in the blue-shaded area, corresponding to $\mathtt{32}$.}
    \label{fig:43215}
\end{figure}

\paragraph{No sharing, or sharing with at most one of $\mathtt{1},\mathtt{4}$.}
In the algorithm for $\pcwe{321\underline{4}}$,
we handled both of these cases separately for each horizontal strip, by counting descending triplets.
This was done by first constructing a tree $\mathcal{T}_3$ as in \Cref{prop:count-123k}
to answer such queries for the points strictly below the strip. 
The same can be done for $\pce{4321\underline{5}}$, counting descending \textit{quadruplets} instead.
This does not affect the complexity, as shown in \Cref{prop:count-123k}.
The complexity is $\Otilde{n}$ per strip, totaling $\Otilde{n^2/m}$.

\paragraph{Case of sharing with both.}
The case where $\mathtt{5}$ shares both a vertical strip with $\mathtt{1}$
\textit{and} a horizontal strip with $\mathtt{4}$ is more challenging.
As before, we iterate over all $n$ potential choices of $\mathtt{5}$,
and over all $\Oh{m^2}$ choices of $\mathtt{4}$ and $\mathtt{1}$.
In the algorithm for $\pcwe{321\underline{4}}$,
we counted permutation points in the rectangle defined by $\mathtt{1}$ and $\mathtt{3}$,
whereas here we need to count the number of \emph{descending pairs}
in the rectangle defined by $\mathtt{1}$ and $\mathtt{4}$ (see \Cref{fig:43215}).
In \Cref{thm:pair_rect_tree} below, we construct a data structure that can handle queries of the form ``how many descending pairs are in a given rectangle?''.
The data structure has preprocessing time $\Otilde{n^2/q}$,
and query time $\Otilde{q}$, where $q\in [n]$ is a parameter that can be chosen arbitrarily.
As there are $\Oh{nm^2}$ queries, this case takes $\Otilde{n^2/q}+\Otilde{nm^2 q}$ time in total.

\paragraph{}
Overall, the combined cases take
$\Otilde{n^2/m} + \Otilde{n^2/q} + \Otilde{nm^2 q}$ time, minimised at $\Otilde{n^{7/4}}$ by fixing $m=q=\lfloor n^{1/4} \rfloor$.
\end{proof}

\subsection{Pair-Rectangle-Trees}
\label{subsect:pair_rect_trees}

\begin{theorem}(pair-rectangle-tree)
    \label{thm:pair_rect_tree}
    There exists a data structure with the following properties:
    \begin{enumerate}
        \item \underline{Preprocessing}: Given an input permutation $\pi \in \Sn$, the tree is initialised in time $\Otilde{n^2 / q}$.
        \item \underline{Query}: Given any rectilinear rectangle $\mathcal{R} \subseteq [n] \times [n]$, return the number of descending (resp. ascending) pairs of permutations points in $\mathcal{R}$, in time $\Otilde{q}$.
    \end{enumerate}
    where $q=q(n) \in [n]$ is a parameter that can be chosen arbitrarily. 
\end{theorem}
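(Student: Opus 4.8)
The plan is to fix the parameter $q$, partition the positions $[n]$ into $\lceil n/q\rceil$ consecutive \emph{blocks} of width $q$, and answer a query $\mathcal{R}=[a_1,a_2]\times[b_1,b_2]$ by cutting its $x$-range into a short prefix of the leftmost block it meets, a short suffix of the rightmost, and a run of \emph{full} blocks in between; when $[a_1,a_2]$ meets at most two blocks there is no full run and we fall into an easy base case. I focus on descending pairs (inversions) only: the ascending count equals $\binom{N}{2}$ minus the descending count, where $N=|p(\pi)\cap\mathcal{R}|$ is a single rectangle-tree query, so it comes for free.

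For the preprocessing I would build, besides one global $2$-dimensional rectangle-tree on $p(\pi)$ with unit weights, two families of rectangle-trees indexed by block boundaries. For each of the $O(n/q)$ block boundaries $\beta$ on the $y$-axis, let $\mathcal{F}_\beta$ be the rectangle-tree in which point $(j,\pi(j))$ carries the weight $h(j,\beta)\eqdef\#\{\,i<j:\pi(j)<\pi(i)\le\beta\,\}$, the number of inversions ending at $j$ whose left value is at most $\beta$. For each of the $O(n/q)$ block boundaries $sq$ on the $x$-axis, let $\Pi_s$ be the rectangle-tree in which $(j,\pi(j))$ carries the weight $\rho_s(\pi(j))\eqdef\#\{\,i\le sq:\pi(i)>\pi(j)\,\}$. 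Each family has $O(n/q)$ trees, and each tree is built in $\Otilde{n}$ time (its weights are produced by one sweep with an auxiliary rectangle-tree), so the preprocessing is $\Otilde{n^2/q}$ time and space.

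To answer a query I would classify each inversion $(i,j)$ of $\mathcal{R}$ by which of the three parts (left-partial block, full run, right-partial block) contains $i$ and which contains $j$; since $i<j$ only six combinations occur. The three cases with both endpoints in partial blocks concern at most $2q$ points: keep those with value in $[b_1,b_2]$, sort them by $x$, and sweep once while counting, for $\Otilde{q}$. The two cases with one endpoint in a partial block and the other in the full run reduce, for each of the $\le q$ partial-block points, to one query into the global rectangle-tree, again $\Otilde{q}$. The substantive case is both endpoints in the full run, say $x$-blocks $s{+}1,\dots,t$ with $L=sq$, $R'=tq$ the bounding boundaries: this count equals $\sum_{j\in J}(h(j,b_2)-w_s(j,b_2))$, where $J\eqdef\{\,j:sq<j\le tq,\ b_1\le\pi(j)\le b_2\,\}$ and $w_s(j,b_2)\eqdef\#\{\,i\le sq:\pi(j)<\pi(i)\le b_2\,\}$, since $h(j,b_2)-w_s(j,b_2)$ is exactly the number of admissible left endpoints $i$ with $sq<i<j$. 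Taking $\beta$ to be the largest $y$-boundary with $\beta\le b_2$, the first sum equals $\mathcal{F}_\beta\bigl((sq,tq]\times[b_1,b_2]\bigr)$ plus a correction that ranges over the fewer than $q$ values in $(\beta,b_2]$ and costs one rectangle query each; and, using $w_s(j,b_2)=\rho_s(\pi(j))-\rho_s(b_2)$, the second sum equals $\Pi_s\bigl((sq,tq]\times[b_1,b_2]\bigr)-|J|\cdot\rho_s(b_2)$, which the precomputed trees and one more global query deliver in $\Otilde{1}$. Summing the six cases gives $\Otilde{q}$ per query.

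I expect the main obstacle to be precisely this last reduction. A naive decomposition of the full-block region — or a recursion of the blocking onto the $y$-axis — stalls because inversion counts do not obey a clean inclusion--exclusion over sub-rectangles: the cross-terms between two sub-rectangles survive, and they are what force the auxiliary per-boundary weights. The resolution is the two-step move above: snap the top side $b_2$ to a $y$-block boundary at the cost of an $O(q)$-point correction, and dissolve the remaining dependence on the left side $sq$ into the weights $\rho_s$, so that the bulk collapses to a constant number of precomputed rectangle queries. Everything else — the base case, the partial-block interactions, checking that the six block-incidence cases are disjoint and exhaustive, and the ascending/descending symmetry — is routine bookkeeping.
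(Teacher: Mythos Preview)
Your proposal is correct, and it follows the same high-level template as the paper---partition $[n]$ into $O(n/q)$ strips, precompute one rectangle-tree per strip boundary, and at query time peel off $O(q)$ boundary points by brute force---but the concrete decomposition and auxiliary trees are different. The paper trims \emph{all four sides} of $\mathcal{R}$ to strip boundaries, yielding an interior $\mathcal{R}_{in}$ and a margin $M$ of at most $4q$ points; pairs touching $M$ are handled by iterating over $M$, and pairs entirely inside $\mathcal{R}_{in}$ are computed in $\Otilde{1}$ by the inclusion--exclusion
\[
\mathcal{T}_2(\mathcal{R}_{in})-\mathcal{T}^V_a(\mathcal{R}_{in})-\mathcal{T}^H_c(\mathcal{R}_{in})+\mathcal{T}_1(\mathcal{R}_{in})\cdot\mathcal{T}_1(\mathcal{R}_0),
\]
where $\mathcal{T}^V_s$ (resp.\ $\mathcal{T}^H_s$) stores at each point the number of ascending pairs ending there whose start has $x$-coordinate (resp.\ $y$-coordinate) at most $sq$. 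Your route is one-dimensional: you trim only the $x$-range into two partial blocks and a full run, and for the full-run case you snap $b_2$ down to the nearest $y$-boundary $\beta$ (using your $\mathcal{F}_\beta$) at the price of an $O(q)$-point correction, while your second family $\Pi_s$ strips off the contributions with $i\le sq$. Both arguments are sound and meet the stated bounds; the paper's two-sided trim buys a constant-query interior case with no snap correction, whereas your version shows that blocking on one axis together with a single $y$-snap already suffices.
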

\begin{proof} We handle each property separately.

    \paragraph{Preprocessing.}
    Construct a $2$-dimensional rectangle-tree $\mathcal{T}_1$ and insert every point in $p(\pi)$ with weight $1$.
    For any rectangle $\mathcal{R}\subseteq [n] \times [n]$,
    the query $\mathcal{T}_1(\mathcal{R})$ counts the number of permutation points in $\mathcal{R}$.
    It is sufficient to consider the case of ascending pairs;
    to count descending pairs in $\mathcal{R}$, subtract the number of ascending pairs from the number of pairs, $\binom{\mathcal{T}_1(\mathcal{R})}{2}$.
    Construct another rectangle-tree $\mathcal{T}_2$ and insert every point $(i,\pi(i))\in p(\pi)$ with the weight $\mathcal{T}_1([1,i-1]\times [1,\pi(i)-1])$,
    where we subtract $1$ to exclude the point itself.
    The query $\mathcal{T}_2(\mathcal{R})$ counts ascending pairs in $\pi$ that end in $\mathcal{R}$.

    Partition $p(\pi)$ into contiguous non-overlapping strips of size $q$, both vertically and horizontally.
    Formally, for every $1\le s\le\lceil n/q \rceil$, define a vertical strip and a horizontal strip:
    \[
    V_{s}\eqdef\left\{ (i,\pi(i)) \in p(\pi) : \left\lceil i/q \right\rceil = s\right\},\;H_{s}\eqdef\left\{ (i,\pi(i))\in p(\pi)  : \left\lceil \pi(i)/q \right\rceil = s\right\}
    \]
    The strips can be constructed in linear time by iterating once over $p(\pi)$ and adding each point to the two appropriate strips.
    Since $\pi$ is a permutation, every strip contains exactly $q$ points,
    except possibly for the two corresponding to $s=\lceil n / q \rceil$, which may be smaller.
    
    For every vertical strip $V_s$, we construct a $2$-dimensional rectangle-tree $\mathcal{T}^V_s$.
    This tree is similar to $\mathcal{T}_2$, but only takes into account ascending pairs that start in $V_s$ or to its left.
    Formally, we insert every point $(i,\pi(i))\in p(\pi)$ into $\mathcal{T}^V_s$ with weight
    \[
    \mathcal{T}_1([1,\min(i-1, s\cdot q)] \times [1,\pi(i)-1])
    \]
    Symmetrically, for every horizontal strip $H_s$ we construct a tree $\mathcal{T}^H_s$,
    which allows us to count how many ascending pairs end in a given rectangle and start in or below $H_s$.
    Overall, we construct $\Oh{n/q}$ rectangle-trees, at cost $\Otilde{n}$ each, totaling $\Otilde{n^2/q}$ preprocessing time.

    \paragraph{Query.}
    Let $\mathcal{R} = [x_1,x_2] \times [y_1,y_2]$.
    Consider the outermost strips that $\mathcal{R}$ may overlap,
    corresponding to indices $a\eqdef \lceil x_1/q\rceil$, $b\eqdef \lceil x_2/q\rceil$, $c\eqdef \lceil y_1/q\rceil$, and $d\eqdef \lceil y_2/q\rceil$. Define the \emph{margin} $M\subseteq \mathcal{R}$ as the set of permutation points contained both in $\mathcal{R}$ and in the outermost strips:
    \[
    M\eqdef \mathcal{R} \cap \left( V_a \cup V_b \cup H_c \cup H_d\right)
    \]
    Define the \emph{interior} $\mathcal{R}_{in}\subseteq \mathcal{R}$ as the rectangle obtained by trimming the margin. Formally,
    \[
    \mathcal{R}_{in} \eqdef \left\{ (x,y)\in [n]\times [n] : a < \lceil x / q\rceil  < b \text{ and } c < \lceil y / q \rceil < d \right\}
    \]
    See \Cref{fig:pair-tree}. Note that $\mathcal{R}$ is possibly contained in a single strip or in two consecutive strips, in which case its interior is empty. \ \\

    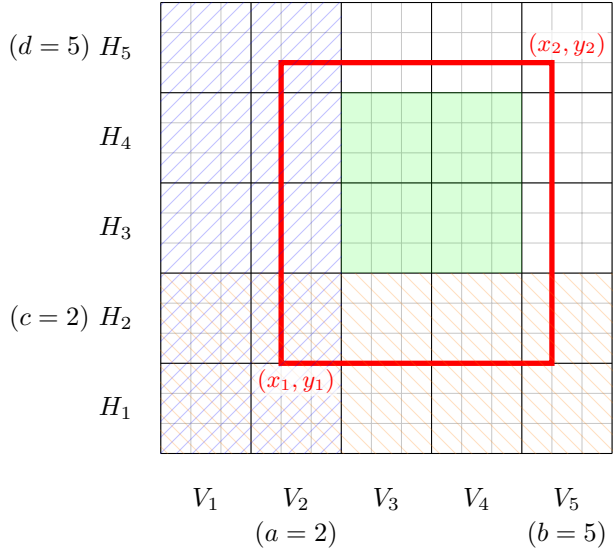
\begin{figure}[ht]
        \centering
        \begin{tikzpicture}[scale=0.4]
             \newcommand\xymax{15};
            \draw[help lines, color=darkgray, opacity=0.3] (0,0) grid (\xymax,\xymax);
             \foreach \i\s in {0/1,3/2,6/3,9/4,12/5} {
                 \draw[black] (\i+1.5,-1.5) node {$V_{\s}$};
                 \draw[black] (-1.5,\i+1.5) node {$H_{\s}$};
             }
             \foreach \i in {0,3,...,\xymax} {
                 \draw [black] (\i,0) -- (\i,\xymax);
                 \draw [black] (0,\i) -- (\xymax,\i);
             }
             
             \newcommand\rxl{4};
             \newcommand\rxr{13};
             \newcommand\ryb{3};
             \newcommand\ryt{13};
    
             \newcommand\rinxl{6};
             \newcommand\rinxr{12};
             \newcommand\rinyb{6};
             \newcommand\rinyt{12};
    
             \draw[black] (-3.7,3+1.5) node {$(c=2)$};
             \draw[black] (-3.7,\rinyt+1.5) node {$(d=5)$};
             \draw[black] (\rinxl-1.5,-2.7) node {$(a=2)$};
             \draw[black] (13.5,-2.7) node {$(b=5)$};
             
             \node[fit={(\rinxl,\rinyb) (\rinxr,\rinyt)}, inner sep=0pt, line width=1mm, fill=green, fill opacity=0.15] (rect) {};
             \node[fit={(\rxl,\ryb) (\rxr,\ryt)}, inner sep=0pt, line width=2pt, draw=red, opacity=1] (rect) {};
    
            \node[fit={(0,0) (\rinxl,\xymax)}, inner sep=0pt, line width=1mm,pattern=south west lines, pattern color=blue, opacity=0.5] (rect) {};
    
            \node[fit={(0,0) (\xymax,\rinyb)}, inner sep=0pt, line width=1mm, pattern=south east lines, pattern color=orange, opacity=0.5] (rect) {};
    
             \node [fill=white, fill opacity=0.7, text=red, text opacity=1, rounded corners=2pt,inner sep=1pt] at (\rxr+0.5,\ryt+0.6) {\small $(x_2,y_2)$};
        
             \node [fill=white, fill opacity=0.7, text=red, text opacity=1, rounded corners=2pt,inner sep=1pt] at (\rxl+0.5,\ryb-0.6) {\small $(x_1,y_1)$};
        
        \end{tikzpicture}
        \caption{Illustration of a query rectangle $\mathcal{R}=[5,13]\times [4,13]$, marked with a red border. The size of every strip is $q=3$. The indices of $\mathcal{R}$'s outermost strips are $a=2$, $b=5$, $c=2$ and $d=5$. The interior $\mathcal{R}_{in}$ is highlighted in green, and does not overlap any of the outermost strips. The blue- and orange-shaded areas are, respectively, the areas left of and below $\mathcal{R}_{in}$. These areas intersect at the bottom-left rectangle, denoted $\mathcal{R}_0$.}
        \label{fig:pair-tree}
    \end{figure}

    We first construct a list of all permutation points in the margin, $M$.
    Since $M$ is contained in the union of $4$ strips, it contains at most $4q$ points.
    These points can be collected in $\Otilde{q}$ time by iterating over all points in the strips, and adding to the list those that are contained in $\mathcal{R}$. \\

    The ascending pairs in $\mathcal{R}$ are counted by splitting to cases.
    For every pair of points in $\mathcal{R}$, there are three possibilities:
    either they end in the margin $M$, or they start in $M$ and end in $\mathcal{R}_{in}$, or they both start and end in $\mathcal{R}_{in}$.

    \subparagraph{Ascending pairs that end in $M$.}
    Iterate over all points $(i,\pi(i))\in M$. The number of ascending pairs in $\mathcal{R}$ that end in $(i,\pi(i))$ is obtained by the query $\mathcal{T}_1([x_1,i-1] \times [y_1, \pi(i)-1])$.

    \subparagraph{Ascending pairs that start in $M$ and end in $\mathcal{R}_{in}$.}
    Iterate over all points $(i,\pi(i))\in M$. The points lying above and to its right
    are contained in the rectangle $[i+1,x_2] \times [\pi(i)+1,y_2]$.
    As we are only interested in pairs that end in $\mathcal{R}_{in}$,
    we perform the query $\mathcal{T}_1(([i+1,x_2]\times [\pi(i)+1,y_2])\cap \mathcal{R}_{in})$.

    \subparagraph{Ascending pairs in $\mathcal{R}_{in}$.}
    This case is different from the previous ones, as there may be more than $\Otilde{q}$ permutation points in $\mathcal{R}_{in}$.
    To avoid iterating over them, we use the constructed rectangle-trees to perform inclusion-exclusion, as follows.
    The query $\mathcal{T}_2(\mathcal{R}_{in})$ counts how many ascending pairs end in $\mathcal{R}_{in}$.
    In addition to our intended purpose, this also counts pairs that end in $\mathcal{R}_{in}$ but start below or to the left of it.
    The number of pairs that start to the left of $\mathcal{R}_{in}$ is $\mathcal{T}^V_a(\mathcal{R}_{in})$,
    and similarly the number of pairs that start below $\mathcal{R}_{in}$ is $\mathcal{T}^H_c(\mathcal{R}_{in})$ (see \Cref{fig:pair-tree}). After subtracting both, we must add back the number of ascending pairs that start below \emph{and} to the left of $\mathcal{R}_{in}$.
    Such pairs start in $\mathcal{R}_0 \eqdef [1, aq] \times [1, cq]$.
    Notice that each point in $\mathcal{R}_{in}$ creates an ascending pair with each point in $\mathcal{R}_0$,
    so the number of such pairs is the product of their point counts.
    Overall, this case contributes the following to the query result:
    \[
    \underbrace{\mathcal{T}_2(\mathcal{R}_{in})}_{\text{\small ending in $\mathcal{R}_{in}$}}
    -
    \underbrace{\mathcal{T}^V_a(\mathcal{R}_{in})}_{\text{\small starting below}}
    -
    \underbrace{\mathcal{T}^H_c(\mathcal{R}_{in})}_{\text{\small starting to left}}
    +
    \underbrace{\mathcal{T}_1(\mathcal{R}_{in}) \cdot \mathcal{T}_1(\mathcal{R}_0)}_{\text{\small starting bottom-left}} \qedhere
    \]
\end{proof}

\subsection{Algorithm for the \texorpdfstring{$5$}{5}-Profile}

\begin{proof} [Proof of \Cref{thm:fast-5-prof}]
    The proof proceeds along the same lines as \Cref{thm:fast_le_7_profile}, for a different family of pattern-trees.
    Let $\mathbb{S}\eqdef \bigsqcup_{k=1}^5 \mathbb{S}_k$.
    Extend the pattern-trees of maximum size $1$ and over no more than $5$ points,
    by allowing the new vertices described in \Cref{subsect:bottom-up-improve}.
    By computer enumeration, there exists a family of $\sum_{k=1}^5 k!=153$ linearly-independent vectors over $\QQ^\mathbb{S}$,
    obtained from the vectors trees,
    along with their orbit under the action of $D_4$ on the symmetric group (see \Cref{sect:prelims}).
    The proof now follows, similarly to \Cref{thm:fast_le_7_profile},
    and we remark that the evaluation of each pattern-tree over $\pi$
    takes at most $\Otilde{n^{7/4}}$ time, as that is the maximum amount of time spent handling any single vertex.
\end{proof}
    
\pagebreak

\section{Discussion}
Some immediate extensions of this work, such as the application of our methods to the $9$-profile,
or the use of pattern-trees with maximum size $3$,
are computationally difficult and likely require further analysis or a different approach (say, algebraic).
Several interesting open questions remain:

\begin{enumerate}
    \item \textbf{Maximum size versus rank.}
    For a given integer $s$,
    denote by $f(s)$ the largest integer $k$ such that the subspace spanned by the vectors of pattern-trees over at most $k$ points,
    of maximum size $s$ and with no equalities, is of full dimension, $|\mathbb{S}_{\le k}|$.
    The results of \cite{even2021counting} imply that $f(1)=3$.
    In \Cref{subsect:5_to_7_prof} and \Cref{subsect:s_8_not_spanned} we prove $f(2)=7$,
    and in \Cref{subsect:k_over_2_alg} we show that $f(s)\ge 2s$, for every $s \ge 1$.
    What is the behavior of $f(s)$?
    For example, do we have $f(s)\ge 4s\pm o(s)$, 
    as attained by the technique of \cite{berendsohn2021finding}?

    \item \textbf{A fine-grained variant of $f(s)$.} For integers $s$ and $k$, let $g(s,k)$ be the number of linearly independent vectors in $\QQ^{\mathbb{S}_k}$ generated by \Cref{alg:bottom_up_pattern_tree} when applied to trees of maximum size $s$ over $k$ permutation points with no equalities. What is the general behavior of $g(s,k)$? This generalises a question of \cite{even2021counting} about corner-trees. The following values are presently known.
    \begin{table}[htbp]
    \begin{center}
    \begin{tabular}{l|cccccccc}
     \diagbox[innerleftsep=0.25cm,innerrightsep=0.08cm,innerwidth=0.5cm,height=0.6cm]{$s$}{$k$}
     & $1$ & $2$ & $3$ & $4$ & $5$ & $6$ & $7$ & $8$\tabularnewline
    \hline 
    $1$ & 1 & 2 & 6 & 23 & $100$ & $463$ & $2323$ & $\mathbf{12173}$ \tabularnewline
    $2$ & $\mathbf{1}$ & $\mathbf{2}$ & $\textbf{6}$ & $\mathbf{24}$ & $\mathbf{120}$ & $\mathbf{720}$ & $\mathbf{5040}$ & $\mathbf{40319}$ \tabularnewline
    \end{tabular}
    \caption{Bolded values in this table are computed in this paper (new).}
    \end{center}
    \end{table}

    \item \textbf{Complexity of $k$-profile, for $5 \le k \le 7$.} Can the time complexity for finding the $5,6,7$-profiles be improved further, perhaps by utilising techniques along the lines of \Cref{sect:5_prof_alg}? In particular, we ask whether the $6$-profile can be computed in sub-quadratic time.

    \item \textbf{Study of the $8$-profile.} \cite{dudek2020counting} shows the equivalence between the computation of the $4$-profile and counting $4$-cycles in sparse graphs.
    In \Cref{subsect:s_8_not_spanned} we show that many of the observations of \cite{dudek2020counting} can
    be extended to $\mathbb{S}_8$.
    In fact, we conjecture that there exists an analogous hardness result for $k=8$,
    and we refer the reader to \Cref{subsect:s_8_not_spanned}
    where the details are discussed.
\end{enumerate}

\label{sect:discussion}

\bibliography{main}
\bibliographystyle{alpha}
\appendix
\section{Enumeration of Pattern-Tree Vectors}
\label{sect:enumeration_pattern_trees}

Let $s$ and $k$  be two positive integers, where $s \le k$. 
Consider the following enumeration process, which computes the matrix
whose rows are the vectors of all pattern-trees  of maximum size $\le s$, with exactly $k$ points, restricted to $\mathbb{S}_k$.

For every ordered partition $\lambda \vdash k$ with no part larger than $s$, and 
for every vertex-labeled tree $T \in \mathbb{T}_{|\lambda|}$,\footnote{Here $\mathbb{T}_r$ is the set of all vertex-labeled trees over $r$ vertices.}
let $T_\lambda$ be the tree in which vertex $i$ has size $\lambda(i)$,
and is assigned point variables 
\[
    p(v_i) = \left\{ p_{r_{i-1} + 1}, \dots, p_{r_i} \right\}, \text{ where } r_i \eqdef \sum_{j \le i} \lambda(j), \text{ and } r_0 \eqdef 0.
\]

Think of $T_\lambda$ as a ``template'' for a pattern-tree, where the topology, the sizes of vertices,
and the names of their variables have been determined, but the edge-constraints have not.
Next, iterate over all pairs of permutations, $\sigma, \tau \in \mathbb{S}_k$, and over all trees $T_\lambda$.

Any such combination maps to a pattern-tree in the above family.
For every $i$ and $j$ such that $p_i$ and $p_j$ are associated with the same vertex $v$,
write the constraint $x_i < x_j$ in the vertex $v$ if $\sigma(i) < \sigma(j)$, and write $x_i > x_j$ otherwise.
Do likewise for the $y$ constraints and $\tau$, and repeat the same operation for every pair $i$, $j$ such that
the points $p_i$ and $p_j$ are associated with adjacent vertices in $T_{\lambda}$ --
in this case, we write the inequality on the edge.
Observe that the ordering of points in each vertex is fully determined, i.e., defines a \textit{permutation}.

Therefore, for any combination of $T_{\lambda}$, $\sigma$ and $\tau$, we obtain a pattern-tree $T$ for which
$\sigma$ is a linear extension of the $x$-poset, and $\tau$ is a linear extension of the $y$-poset.
In this case, we add $1$ to the vector of $T$, at the index of $\tau \sigma^{-1} $ (see \Cref{lem:pattern_tree_vec}).
Once the process is completed, we obtain a matrix with $|\mathbb{S}_k|=k!$ columns, and no more than
$|\mathbb{S}_k \times \mathbb{S}_k \times \{ T_\lambda \}|$ rows.
The row-space of this matrix over the rationals
is the subspace spanned by the above family of trees, restricted to $\mathbb{S}_k$.

We remark that if for every $k' \le k$ this process produces a matrix of full rank,
then by induction, the vectors of the union of all trees in these families spans the entire subspace,
for $\mathbb{S}_{\le k}$
(no tree over $k' < k$ points has a component in $\mathbb{S}_k$).

\end{document}